\documentclass{article}

\PassOptionsToPackage{numbers, compress}{natbib}

\usepackage[final]{staix_2026}         

\usepackage[utf8]{inputenc}
\usepackage[T1]{fontenc}
\usepackage{hyperref}
\usepackage{url}
\usepackage{booktabs}
\usepackage{amsfonts}
\usepackage{nicefrac}
\usepackage{microtype}
\usepackage{xcolor}
\usepackage{graphicx}
\usepackage{amsmath,amssymb,amsthm}
\usepackage{float}

\usepackage{subcaption}

\newtheorem{theorem}{Theorem}
\newtheorem{lemma}{Lemma}
\newtheorem{proposition}{Proposition}

\title{Conditional-Independence-Regularized Distributional Autoencoders for Mixed-Type Data}

\author{%
  Siyuan Tang \\
  Department of Statistics \\
  University of Michigan \\
  \texttt{siyuat@umich.edu}
  \And
  Gongjun Xu \\
  Department of Statistics \\
  University of Michigan \\
  \texttt{gongjun@umich.edu}
  \And
  Ji Zhu \\
  Department of Statistics \\
  University of Michigan \\
  \texttt{jizhu@umich.edu}
}

\begin{document}

\maketitle


\begin{abstract}
Mixed-type data containing both numerical and categorical variables arise in many scientific and real-world applications. Existing representation learning and generative modeling approaches typically focus either on reconstruction accuracy or unconditional data generation, but often fail to recover the full conditional distribution of the data while preserving interpretable structural relationships between heterogeneous variable types. In this work, we introduce Conditional-Independence-Regularized Distributional Autoencoders, a framework for learning low-dimensional representations of mixed-type data through conditional distribution matching and structural regularization. Our method combines an energy-score-based objective for numerical variables, a likelihood-based objective for categorical variables, and an auxiliary conditional independence regularization term encouraging the learned representation to capture the dependence between numerical and categorical components. We provide theoretical analysis showing that the optimal representation balances unexplained numerical variability, conditional entropy of categorical variables, and residual conditional dependence. Empirically, the proposed method achieves strong performance on both synthetic and real-world datasets, substantially improving categorical distribution recovery, achieving competitive overall conditional distribution recovery, and preserving mixed-type dependence structure. The code has been made available at \href{https://github.com/siyuat-ui/Conditional-Independence-Regularized-DAE-public}{\color{red} GitHub}.
\end{abstract}


\section{Introduction}

Data with mixed variable types arise in many modern applications, including healthcare, finance, education, and other scientific studies \citep{johnson2016mimic, kovalerchuk2000data, 5524021, borisov2022deep, uci_repository}. Such data typically consist of both numerical and categorical variables, often exhibiting complex and heterogeneous dependencies that are difficult to model accurately \citep{pmlr-v202-kotelnikov23a, tabsyn}. Denote the observed data by $X = (X^{\mathrm{num}}, X^{\mathrm{cat}})$, where $X^{\mathrm{num}}$ and $X^{\mathrm{cat}}$ represent the numerical and categorical components, respectively. A challenge is to learn a low-dimensional representation $Z = e(X)$, where $e(\cdot)$ is an embedding map, that preserves essential structure while capturing the full variability of the underlying data distribution \citep{jiang2026representation}.

Classical approaches to representation learning, such as principal component analysis (PCA) \citep{hastie2009elements, greenacre2022principal} and autoencoders (AE) \citep{rumelhart1986learning, hinton2006reducing}, focus primarily on reconstruction quality, often through minimizing mean squared error. While effective for capturing low-dimensional structure, these methods inherently target conditional means and therefore may fail to recover the full distribution of the data. As a result, reconstructed samples tend to underestimate variability and distort higher-order properties such as tail behavior \citep{shen2024distributional}.


Advances in deep generative modeling aim to address distributional fidelity by learning mappings from simple latent distributions to complex data distributions. Methods such as variational autoencoders (VAE) \citep{kingma2014autoencoding}, generative adversarial networks (GAN) \citep{goodfellow2014generative}, normalizing flows \citep{rezende2015variational}, diffusion models \citep{ho2020denoising}, and their numerous variants \citep{tolstikhin2018wasserstein, higgins2017betavae, pmlr-v70-arjovsky17a, pmlr-v139-nichol21a} have demonstrated strong performance in modeling complex distributions. However,
VAE- and GAN-based methods do not guarantee recovery of the conditional distribution $P(X \mid Z=e(X))$. Meanwhile, normalizing flows and diffusion models require the latent space to have the same dimension as the data space, making them less suitable for dimensionality reduction.
\citep{shen2024distributional}.

Distributional Principal Autoencoder (DPA) \citep{shen2024distributional, leban2025distributional} provides a promising alternative by directly targeting conditional distribution matching rather than mean reconstruction. In particular, DPA learns an encoder--decoder pair, based on energy scores \citep{gneiting2007strictly, szekely2023energy}, such that the decoder recovers the full conditional distribution of the data given the latent representation. While this ensures distributionally faithful reconstruction, the formulation is primarily tailored to numerical data and does not naturally extend to mixed-type settings with categorical variables, where the absence of a natural metric structure and the discrete nature of the data require fundamentally different modeling and objective design.


To address this issue, we propose to explicitly separate the modeling of numerical and categorical variables. Numerical variables naturally admit metric-based distributional objectives, such as energy scores,  whereas categorical variables are discrete and are more naturally modeled through likelihood-based objectives. Consequently, we seek a representation $Z=e(X)$ such that the conditional distributions of the numerical and categorical components given Z can be approximately modeled separately through $P(X^{\mathrm{num}} \mid Z)$ and $P(X^{\mathrm{cat}} \mid Z)$, allowing each component to be reconstructed using objectives tailored to its data type. This perspective is closely related to latent variable modeling, where heterogeneous observed variables are generated from some latent representation and modeled conditionally independently on that representation \citep{bishop2006pattern, koller2009probabilistic}. Following this perspective, we seek a representation $Z=e(X)$ under which the numerical and categorical components can be reconstructed through decoder objectives. The conditional-independence structure underlying this factorization is expected to become increasingly accurate as the dimension of $Z=e(X)$ grows. In practice, however, the representation is typically chosen to be relatively low-dimensional for compression and interpretability.

To this end, we introduce \emph{Conditional-Independence-Regularized Distributional Autoencoders}, a novel framework that augments distributional autoencoders with an explicit regularization mechanism encouraging conditional independence. Our approach leverages auxiliary predictive models to quantify the extent to which one component of the data provides additional predictive power beyond the learned representation, and penalizes such residual dependence. This can be interpreted as a probing mechanism that enforces structural constraints on the latent space.
The proposed method is tailored to mixed-type data. We design encoder and decoder architectures that jointly model numerical and categorical variables, while learning representations that balance distributional fidelity and structural disentanglement. The resulting objective combines (i) a distributional loss for numerical variables, (ii) a likelihood-based loss for categorical variables, and (iii) a conditional independence regularization term.
From a theoretical perspective, we show that the proposed objective preserves the desirable properties of distributional autoencoders while introducing a principled trade-off between distributional fidelity and structural independence. In particular, the optimal representation can be interpreted as balancing unexplained variability in the numerical component, uncertainty in the categorical component, and residual dependence between them.
Empirically, we demonstrate the effectiveness of our approach on both synthetic and real-world datasets. On simulated data, the method successfully recovers the conditional distributions of both numerical and categorical components while preserving the underlying dependence structure. On real datasets, it consistently improves categorical distribution recovery and achieves strong performance in preserving mixed-type associations compared to standard AE, VAE, and DPA baselines.

The remainder of the paper is organized as follows.  Section \ref{sec:Method} introduces the proposed framework and describe the corresponding learning objective and training procedure. 
Section \ref{sec:theory} presents the theoretical analysis of the proposed method.
Section \ref{sec:Experiments} reports experimental results on both synthetic and real-world datasets, and Section \ref{sec:Discussion} concludes with a discussion.


 
\section{Method} \label{sec:Method}

\subsection{Problem setup}

Let $X = (X^{\mathrm{num}}, X^{\mathrm{cat}})$ denote a random vector in 
$\mathcal{X}$, where $X^{\mathrm{num}}$ is a $d_{\mathrm{num}}$-dimensional numerical vector and $X^{\mathrm{cat}} = (X^{\mathrm{cat},1}, \dots, X^{\mathrm{cat},d_{\mathrm{cat}}})$ is a collection of $d_{\mathrm{cat}}$-categorical variables. Each $X^{\mathrm{cat},j}$ takes values in a finite set of size $K_j$.
We assume that the data are drawn from an unknown distribution $P^\ast$. Given a dataset $\{X_i\}_{i=1}^n$ sampled from $P^\ast$, our goal is to learn a representation map $e: \mathcal{X} \to \mathbb{R}^{d}$, together with a decoder $d$ that models the conditional distribution of $X$ given the latent representation $Z=e(X)$. Our approach is guided by two objectives:
 (i) {\it Distributional reconstruction.}
The learned decoder should accurately recover the conditional distribution of the data given the representation, so that the model captures the full conditional law $P (X \vert e(X) )$ rather than only conditional means.
 (ii) {\it Conditional independence regularization.}
The representation $Z$ should encourage approximate conditional independence between the numerical and categorical components, i.e., $X^{\mathrm{num}} \perp\!\!\!\perp X^{\mathrm{cat}} \mid Z = e(X)$. Intuitively, this encourages the representation to capture the relevant dependence between the numerical and categorical components while removing redundant interactions. As a result, the representation becomes more interpretable and the modeling task is simplified. We would like to clarify that the numerical--categorical partition is induced by the modeling objective rather than assumed to be an intrinsic structural property of the data-generating mechanism. Numerical and categorical variables usually require fundamentally different reconstruction objectives and decoder designs. Numerical variables naturally admit metric-based distributional objectives, whereas categorical variables are more naturally modeled through likelihood-based objectives. Consequently, we employ separate decoder components for the numerical and categorical variables in this work. This design allows the model to explicitly recover categorical conditional distributions, which is not naturally accommodated by the original DPA framework developed primarily for numerical data.

\subsection{Model architecture and learning objective}

We adopt an encoder--decoder architecture to model the conditional distribution of $X$ given the latent representation $Z = e(X)$.
The encoder $e(\cdot)$ maps the input $X = (X^{\mathrm{num}}, X^{\mathrm{cat}})$ to a latent representation $Z \in \mathbb{R}^{d}$, where both numerical and categorical variables are tokenized before being passed through a neural network. The decoder models the conditional distribution of $X$ given $Z$ and is decomposed as $d(Z, \varepsilon) = \big(d^{\mathrm{num}}(Z, \varepsilon),\; d^{\mathrm{cat}}(Z)\big)$,
where $\varepsilon$ is an auxiliary noise variable. The numerical component $d^{\mathrm{num}}(Z, \varepsilon)$ aims to capture variability in $X^{\mathrm{num}} \mid Z$, while the categorical component $d^{\mathrm{cat}}(Z)$ outputs probability vectors (via softmax) for categorical variables.

We next introduce the objective used to learn the encoder and decoder.


{\bf Reconstruction of numerical variables.} Following \citet{shen2024distributional}, we adopt a distributional loss based on energy scores to model the conditional distribution of $X^{\mathrm{num}}$ given $Z$. Let $\varepsilon$ and $\varepsilon'$ be independent noise variables from the standard Gaussian distribution. We define
\begin{equation}
\begin{aligned}
\mathcal{L}_{\mathrm{num}} 
= {} & \mathbb{E}_{X \sim P^\ast} \, \mathbb{E}_{\varepsilon} 
\big[ \| X^{\mathrm{num}} - d^{\mathrm{num}}(e(X), \varepsilon) \|_2^\beta \big] \\
& - \frac{1}{2} \, \mathbb{E}_{X \sim P^\ast} \, \mathbb{E}_{\varepsilon, \varepsilon'} 
\big[ \| d^{\mathrm{num}}(e(X), \varepsilon) - d^{\mathrm{num}}(e(X), \varepsilon') \|_2^\beta \big],
\end{aligned} \nonumber
\end{equation}
where $\beta \in (0,2)$ is a fixed hyperparameter. As shown in \citet{szekely2003statistics}, the energy score is a strictly proper scoring rule for probability distributions on Euclidean spaces, implying that the objective is minimized when the learned conditional distribution matches the true conditional distribution.

{\bf  Reconstruction of categorical variables.}
For categorical variables, we model the conditional distribution of $X^{\mathrm{cat}}$ given $Z$ using a likelihood-based objective. Let $d^{\mathrm{cat}}(\cdot \mid Z)$ denote the model distribution. We define
\begin{equation}
\mathcal{L}_{\mathrm{cat}} 
= \mathbb{E}_{X \sim P^\ast} \big[ - \log d^{\mathrm{cat}}(X^{\mathrm{cat}} \mid e(X)) \big]. \nonumber
\end{equation}
This objective corresponds to minimizing the conditional Kullback--Leibler (KL) divergence $\mathbb{E}_{Z} \big[ \mathrm{KL}\big( P(X^{\mathrm{cat}} \mid Z) \,\|\, d^{\mathrm{cat}}(\cdot \mid Z) \big) \big]$
up to an additive constant, and therefore encourages the decoder to recover the true conditional distribution of $X^{\mathrm{cat}}$. In practice, the specification of $d^{\mathrm{cat}}(\cdot \mid Z)$ may depend on the scientific context and available domain knowledge. A commonly used approach is to model the categorical variables conditionally independently given $Z$, with separate softmax outputs for each categorical component \citep{koller2009probabilistic, bishop2006pattern}.

{\bf Conditional independence regularization.}
To encourage the conditional independence structure, we introduce auxiliary predictive models that quantify the extent to which $X^{\mathrm{num}}$ provides additional information about $X^{\mathrm{cat}}$ beyond $Z$.
Specifically, let $q_\theta(X^{\mathrm{cat}} \mid Z)$ and $q_\psi(X^{\mathrm{cat}} \mid Z, X^{\mathrm{num}})$ denote two conditional models parameterized by neural networks. The first model predicts $X^{\mathrm{cat}}$ using only $Z$, while the second additionally has access to $X^{\mathrm{num}}$. We define the conditional independence regularizer as
\begin{equation}
    \mathcal{L}_{\mathrm{CInd}} = \inf_{\theta} \, \mathbb{E} \big[ - \log q_\theta(X^{\mathrm{cat}} \mid e(X)) \big] - \inf_{\psi} \, \mathbb{E} \big[ - \log q_\psi(X^{\mathrm{cat}} \mid e(X), X^{\mathrm{num}}) \big]. \nonumber
\end{equation}
Intuitively, this regularizer estimates the additional predictive information that $X^{\mathrm{num}}$ provides beyond $Z=e(X)$. Minimizing this gap encourages the representation to capture the relevant dependence between the numerical and categorical components, thereby promoting conditional independence. We note that the conditional independence assumption may not hold exactly in real-world datasets. In such cases, the proposed regularization should be viewed as encouraging approximate independence, trading off fidelity and structural simplicity. The strength of this trade-off is controlled by the regularization.

We emphasize that the proposed regularizer should be viewed as a soft structural inductive bias, rather than an assumption that conditional independence holds exactly in mixed-type datasets. From a representation-learning perspective, the conditional independence factorization is expected to become more accurate as the dimension of $Z=e(X)$ increases. In practice, however, $Z$ is typically chosen to be relatively low-dimensional for compression and interpretability. Consequently, the goal is not to enforce exact conditional independence, but rather to encourage the representation to capture the shared factors responsible for cross-type dependence. Moreover, since the optimization problem is highly nonconvex, the regularizer also serves as an inductive bias that favors representations exhibiting stronger conditional independence. As demonstrated by the sensitivity analysis and additional simulation studies in Appendix \ref{appendix:sensitivity_original_simulation} and Appendix \ref{appendix:structured_factor_simulation}, including the term $\mathcal{L}_{\mathrm{CInd}}$ properly often improves performance.

The final objective combines reconstruction and regularization:
\begin{equation}
    \mathcal{L}_{\mathrm{total}} = \lambda_{\mathrm{num}} \mathcal{L}_{\mathrm{num}} + \lambda_{\mathrm{cat}} \mathcal{L}_{\mathrm{cat}} + \lambda_{\mathrm{CInd}} \mathcal{L}_{\mathrm{CInd}}, \label{eq:population objective}
\end{equation}
where $\lambda_{\mathrm{num}}, \lambda_{\mathrm{cat}}, \lambda_{\mathrm{CInd}} > 0$ are tuning parameters with the constraint $\lambda_{\mathrm{num}} + \lambda_{\mathrm{cat}} + \lambda_{\mathrm{CInd}} = 1$. Note that the constraint $\lambda_{\mathrm{num}}+\lambda_{\mathrm{cat}}+\lambda_{\mathrm{CInd}}=1$ is imposed only for normalization since multiplying all three weights by a common positive constant leaves the optimizer unchanged. The normalization removes an unnecessary scaling degree of freedom and allows the hyperparameters to be interpreted as relative trade-off weights.
Overall, the objective balances three competing goals: accurate reconstruction of the numerical variables, likelihood-based modeling of the categorical variables, and enforcement of conditional independence through the auxiliary models.

\subsection{Training algorithm}


In practice, the population losses are approximated using empirical averages over the training data. Given a dataset $\{X_i\}_{i=1}^n$ and $m$ Monte Carlo samples of the noise variable $\epsilon$ for each observation, we define the empirical versions of $\mathcal{L}_{\mathrm{num}}, \mathcal{L}_{\mathrm{cat}}$, and $\mathcal{L}_{\mathrm{CInd}}$ as
\begin{align}
\widehat{\mathcal{L}}_{\mathrm{num}} 
& = \frac{1}{n} \sum_{i=1}^n \Bigg[
\frac{1}{m} \sum_{j=1}^m \left\| X_i^{\mathrm{num}} - d^{\mathrm{num}}(e(X_i), \varepsilon_{i,j}) \right\|_2^\beta \nonumber \\
&\qquad - \frac{1}{2m(m-1)} \sum_{j=1}^m \sum_{j' \neq j} \left\| d^{\mathrm{num}}(e(X_i), \varepsilon_{i,j}) - d^{\mathrm{num}}(e(X_i), \varepsilon_{i,j'}) \right\|_2^\beta
\Bigg], \nonumber \\
\widehat{\mathcal{L}}_{\mathrm{cat}} 
&= -\frac{1}{n} \sum_{i=1}^n \log d^{\mathrm{cat}} \big( X_i^{\mathrm{cat}} \mid e(X_i) \big), \nonumber \\
\widehat{\mathcal{L}}_{\mathrm{CInd}} 
&= - \sup_{\theta} \frac{1}{n} \sum_{i=1}^n \log q_\theta \big( X_i^{\mathrm{cat}} \mid e(X_i) \big)
+ \sup_{\psi} \frac{1}{n} \sum_{i=1}^n \log q_\psi \big( X_i^{\mathrm{cat}} \mid e(X_i), X_i^{\mathrm{num}} \big). \nonumber
\end{align}

The empirical objective is then given by $\widehat{\mathcal{L}}_{\mathrm{total}} = \lambda_{\mathrm{num}} \widehat{\mathcal{L}}_{\mathrm{num}}  + \lambda_{\mathrm{cat}} \widehat{\mathcal{L}}_{\mathrm{cat}} + \lambda_{\mathrm{CInd}} \widehat{\mathcal{L}}_{\mathrm{CInd}}$.
The training proceeds by alternating the following updates:
  
    (i). \textit{Update auxiliary models.} 
    Fix the encoder and decoder, and update $\theta$ and $\psi$ to maximize $\frac{1}{n} \sum_{i=1}^n \log q_\theta \big( X_i^{\mathrm{cat}} \mid e(X_i) \big)$ and $\frac{1}{n} \sum_{i=1}^n \log q_\psi \big( X_i^{\mathrm{cat}} \mid e(X_i), X_i^{\mathrm{num}} \big)$.

    (ii). \textit{Update encoder and decoder.}
    Fix $\theta$ and $\psi$, and update the parameters of $e(\cdot)$, $d^{\mathrm{num}}(\cdot)$, and $d^{\mathrm{cat}}(\cdot)$ to minimize $\widehat{\mathcal{L}}_{\mathrm{total}}$.
    
In practice, stochastic gradient descent with mini-batches is used for all updates. We typically perform multiple updates of the auxiliary models per update of the encoder and decoder to better approximate the inner optimization. In practice, the tuning parameters $\lambda_{\mathrm{num}}$, $\lambda_{\mathrm{cat}}$, and $\lambda_{\mathrm{CInd}}$ control the trade-off between numerical reconstruction, categorical reconstruction, and conditional independence regularization. These parameters can be selected using validation performance based on reconstruction quality and preservation of dependence structure.

\section{Theoretical analysis} \label{sec:theory}


In this section, we analyze the population objective \eqref{eq:population objective} and characterize the representations induced by the proposed framework at the population level. In particular, we study the optimal decoder and encoder under suitable expressivity assumptions, and show that the resulting representation admits a natural interpretation balancing conditional distribution reconstruction and conditional independence regularization.


{\bf Assumptions.}
We impose the following assumptions:
\begin{itemize}
    \item[(A1)] (\textit{Expressive decoder}) For any representation map $e(\cdot)$, there exist decoder functions such that $d^{\mathrm{num}}(Z, \varepsilon) \overset{d}{=} X^{\mathrm{num}} \mid Z$ and $d^{\mathrm{cat}}(\cdot \mid Z) = P(X^{\mathrm{cat}} \mid Z)$.

    \item[(A2)] (\textit{Expressive auxiliary models}) The function classes for $q_\theta$ and $q_\psi$ are sufficiently rich so that, for almost every input, $q_\theta(\cdot \mid Z) = P(X^{\mathrm{cat}} \mid Z)$ and $q_\psi(\cdot \mid Z, X^{\mathrm{num}}) = P(X^{\mathrm{cat}} \mid Z, X^{\mathrm{num}})$ can be attained.
    
    \item[(A3)] (\textit{Finite $\beta$-moments}) The numerical component satisfies $\mathbb{E} \| X^{\mathrm{num}} \|_2^\beta < \infty$. Moreover, for every admissible numerical decoder $d^{\mathrm{num}}$, it holds that $\mathbb{E} \| d^{\mathrm{num}}(z,\varepsilon) \|_2^\beta < \infty$ for $P_Z\text{-almost every } z$.
\end{itemize}

Assumptions (A1) and (A2) are expressivity assumptions ensuring that the decoder and auxiliary model classes are sufficiently rich to recover the corresponding conditional distributions at the population level. Similar population-level expressivity assumptions are commonly adopted, either explicitly or implicitly, in deep generative and distributional representation learning models, including \citet{kingma2014autoencoding,tolstikhin2018wasserstein}, and \citet{shen2024distributional}. In practice, these assumptions are approximated through sufficiently expressive neural network architectures. Assumption (A3) is a mild regularity condition required for the energy-score-based objective to be well defined and finite.

We first recall a key property of the energy score which states that the energy score is a strictly proper scoring rule.

\begin{lemma}[Propositions 1 and 2 in \citet{szekely2003statistics}] \label{lemma:energy score is a strictly proper scoring rule}
Let $P$ and $P^\ast$ be probability distributions on $\mathbb{R}^d$, and let $\beta \in (0,2)$. Suppose that
$\mathbb{E}_{X^\ast \sim P^\ast} \| X^\ast \|_2^\beta < \infty$ and $\mathbb{E}_{X \sim P} \| X \|_2^\beta < \infty$.
Then the quantity
$
\mathbb{E}_{X^\ast \sim P^\ast,\, X \sim P} \|X^\ast - X\|_2^\beta 
- \frac{1}{2} \mathbb{E}_{X, X' \sim P} \|X - X'\|_2^\beta
$ is minimized if and only if $P = P^\ast$,
where $X^\ast$ and $X$ are independent and $X'$ is an independent copy of $X$.
\end{lemma}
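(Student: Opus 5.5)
The plan is to rewrite the objective in terms of the energy distance and then show that this distance is nonnegative and vanishes only at $P=P^\ast$, using a Fourier representation of $\|x\|_2^\beta$. Write $S(P,P^\ast)$ for the displayed quantity. Adding and subtracting $\frac12\mathbb{E}\|X^\ast-X^{\ast\prime}\|_2^\beta$, where $X^{\ast\prime}$ is an independent copy of $X^\ast$, gives
\[
S(P,P^\ast)=\tfrac12\,\mathcal{E}(P,P^\ast)+\tfrac12\,\mathbb{E}\|X^\ast-X^{\ast\prime}\|_2^\beta ,
\]
with
\[
\mathcal{E}(P,P^\ast)=2\mathbb{E}\|X-X^\ast\|_2^\beta-\mathbb{E}\|X-X'\|_2^\beta-\mathbb{E}\|X^\ast-X^{\ast\prime}\|_2^\beta .
\]
The second term does not depend on $P$. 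All expectations are finite by the moment assumptions, since $\|x-y\|^\beta\le 2^{\beta}(\|x\|^\beta+\|y\|^\beta)$. It therefore suffices to prove two facts: $\mathcal{E}(P,P^\ast)\ge 0$, and $\mathcal{E}(P,P^\ast)=0$ if and only if $P=P^\ast$. Together they show $S$ attains its minimum exactly at $P=P^\ast$.

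The key tool is the identity, valid for $\beta\in(0,2)$,
\[
\|x\|_2^\beta = C_{d,\beta}\int_{\mathbb{R}^d}\frac{1-\cos\langle t,x\rangle}{\|t\|_2^{d+\beta}}\,dt,
\qquad 0<C_{d,\beta}<\infty .
\]
To prove it, first check that the integral converges. Near $t=0$ the integrand behaves like $\|t\|^{2-d-\beta}$, which is integrable because $\beta<2$. At infinity it is bounded by $2\|t\|^{-d-\beta}$, which is integrable because $\beta>0$. Next, rotation invariance shows the integral depends only on $\|x\|$, and the substitution $t\mapsto t/\|x\|$ shows it is homogeneous of degree $\beta$. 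Hence the integral equals a positive constant times $\|x\|^\beta$.

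Substitute this identity into each of the three expectations in $\mathcal{E}$ and interchange expectation and integral, which Tonelli allows because the integrand is nonnegative. Let $\varphi$ and $\varphi^\ast$ be the characteristic functions of $P$ and $P^\ast$. Using independence of the copies, $\mathbb{E}\cos\langle t,X-X^\ast\rangle=\mathrm{Re}\,\varphi(t)\overline{\varphi^\ast(t)}$, and the analogous formulas hold for the other two pairs. The constant terms cancel because $2-1-1=0$, leaving
\[
\mathcal{E}(P,P^\ast)=C_{d,\beta}\int_{\mathbb{R}^d}\frac{|\varphi(t)-\varphi^\ast(t)|^2}{\|t\|_2^{d+\beta}}\,dt\;\ge 0 .
\]

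The step I expect to be the main obstacle is justifying this combination. Each of the three integrals is finite on its own, since it equals a finite $\beta$-moment, so subtracting them is legitimate. Still, I would verify carefully that the integrands combine pointwise into $|\varphi-\varphi^\ast|^2$, using $|a-b|^2=|a|^2+|b|^2-2\,\mathrm{Re}\,a\bar b$, so that no $\infty-\infty$ issue arises. Once that is settled, equality $\mathcal{E}=0$ forces $\varphi=\varphi^\ast$ almost everywhere. Both characteristic functions are continuous, so they agree everywhere, and hence $P=P^\ast$ by the uniqueness theorem for characteristic functions. The converse, $P=P^\ast\Rightarrow\mathcal{E}=0$, is immediate from the formula. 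This gives strict propriety.
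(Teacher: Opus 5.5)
Your proposal is correct and is essentially the argument the paper relies on. The paper gives no proof of its own beyond citing Proposition 2 of \citet{szekely2003statistics}, whose proof likewise reduces the objective to the energy distance and uses the Fourier representation of $\|x\|_2^\beta$ to write it as $C_{d,\beta}\int_{\mathbb{R}^d}|\varphi(t)-\varphi^\ast(t)|^2\,\|t\|_2^{-d-\beta}\,dt$, whose vanishing forces $P=P^\ast$. Your use of Tonelli, and your check that each of the three integrals is finite before combining them, supply the details the citation leaves implicit.
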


This lemma implies that $\mathcal{L}_{\mathrm{num}}$ is minimized when the conditional distribution induced by $d^{\mathrm{num}}(Z, \varepsilon)$ matches the true conditional distribution of $X^{\mathrm{num}}$ given $Z$.

We next characterize the optimal decoder.


\begin{proposition} \label{proposition:optimal decoder for any fixed encoder}
Under Assumptions (A1) and (A3), for any fixed encoder $e(\cdot)$, the corresponding optimal decoder $d^\ast = \left(d^{\mathrm{num}\,*},  d^{\mathrm{cat}\,*}\right) \in \arg \min_{d} \mathcal{L}_{\mathrm{total}}$ satisfies
$$
d^{\mathrm{num}\,*}(e(x), \varepsilon)  \overset{d}{=} X^{\mathrm{num}} \mid Z = e(x) \mbox{ and } d^{\mathrm{cat}\,*}(\cdot \mid e(x)) = P(X^{\mathrm{cat}} \mid Z = e(x))
$$
for $P^\ast$-almost every $x$.
\end{proposition}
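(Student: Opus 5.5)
With the encoder $e$ held fixed, the term $\mathcal{L}_{\mathrm{CInd}}$ depends only on $e$ and the auxiliary models $q_\theta,q_\psi$. It does not involve the decoder. Because $\lambda_{\mathrm{num}},\lambda_{\mathrm{cat}}>0$, minimizing $\mathcal{L}_{\mathrm{total}}$ over $d=(d^{\mathrm{num}},d^{\mathrm{cat}})$ is therefore equivalent to minimizing $\lambda_{\mathrm{num}}\mathcal{L}_{\mathrm{num}}(d^{\mathrm{num}})+\lambda_{\mathrm{cat}}\mathcal{L}_{\mathrm{cat}}(d^{\mathrm{cat}})$. The two pieces depend on separate components of the decoder, so the problem splits. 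A pair $(d^{\mathrm{num}*},d^{\mathrm{cat}*})$ minimizes the sum if and only if each component minimizes its own term, provided each term has a finite infimum that is attained. I plan to show this attainment using (A1), and then characterize each minimizer separately.

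\emph{Numerical part.} Write $Z=e(X)$ and condition on it using the tower property:
\[
\mathcal{L}_{\mathrm{num}}=\mathbb{E}_{Z}\Big[\mathbb{E}\big[\|X^{\mathrm{num}}-d^{\mathrm{num}}(Z,\varepsilon)\|_2^\beta\mid Z\big]-\tfrac12\mathbb{E}\big[\|d^{\mathrm{num}}(Z,\varepsilon)-d^{\mathrm{num}}(Z,\varepsilon')\|_2^\beta\mid Z\big]\Big].
\]
Since $\varepsilon,\varepsilon'$ are independent of $X$, for each $z$ the inner expression is the energy score $S(P_z,P^\ast_z)$. Here $P_z$ is the law of $d^{\mathrm{num}}(z,\varepsilon)$ and $P^\ast_z$ is the regular conditional law of $X^{\mathrm{num}}$ given $Z=z$. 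Assumption (A3) gives $\mathbb{E}\|d^{\mathrm{num}}(z,\varepsilon)\|^\beta<\infty$ for a.e.\ $z$. For the data side, $\mathbb{E}_Z\mathbb{E}[\|X^{\mathrm{num}}\|^\beta\mid Z]<\infty$, so $P^\ast_z$ has a finite $\beta$-moment for a.e.\ $z$. Lemma~\ref{lemma:energy score is a strictly proper scoring rule} then applies pointwise and gives $S(P_z,P^\ast_z)\ge S(P^\ast_z,P^\ast_z)$, with equality if and only if $P_z=P^\ast_z$.

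By (A1), some decoder achieves $P_z=P^\ast_z$ for every $z$. Its loss equals $\mathbb{E}_Z[S(P^\ast_Z,P^\ast_Z)]$, which is the infimum and is finite. Now take any minimizer. It must satisfy $\mathbb{E}_Z[S(P_Z,P^\ast_Z)-S(P^\ast_Z,P^\ast_Z)]=0$ with a nonnegative integrand, so the integrand vanishes $P_Z$-a.s. Strict propriety then gives $P_Z=P^\ast_Z$ a.s. Pulling this back through $Z=e(X)$ yields the claim for $P^\ast$-a.e.\ $x$.

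\emph{Categorical part.} Again condition on $Z$:
\[
\mathcal{L}_{\mathrm{cat}}=\mathbb{E}_Z\big[H(X^{\mathrm{cat}}\mid Z)+\mathrm{KL}\big(P(X^{\mathrm{cat}}\mid Z)\,\|\,d^{\mathrm{cat}}(\cdot\mid Z)\big)\big].
\]
The entropy term is finite because $X^{\mathrm{cat}}$ takes finitely many values, and it does not depend on the decoder. Gibbs' inequality gives $\mathrm{KL}\ge0$, with equality if and only if the two distributions coincide. By (A1) the zero is attained. Hence any minimizer has zero expected KL, so the KL vanishes $P_Z$-a.s. This gives $d^{\mathrm{cat}*}(\cdot\mid e(x))=P(X^{\mathrm{cat}}\mid Z=e(x))$ for $P^\ast$-a.e.\ $x$.

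The main obstacle is the measure-theoretic care in the numerical step. One must justify using a regular conditional distribution of $X^{\mathrm{num}}$ given $Z$. One must also check that all expectations are finite, so that the subtraction in the energy score is not of the form $\infty-\infty$; this is exactly where (A3) and the bound $\|a-b\|^\beta\le 2^\beta(\|a\|^\beta+\|b\|^\beta)$ enter. Finally, the separability argument is only valid because the optimal value of each term is attained, which (A1) guarantees.
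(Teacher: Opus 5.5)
Your proposal is correct and follows the paper's proof essentially step for step: you condition on $Z=e(X)$, apply the strict propriety of the energy score (Lemma~\ref{lemma:energy score is a strictly proper scoring rule}) pointwise in $z$ with (A3) supplying the moment conditions, and split the categorical cross-entropy into conditional entropy plus a nonnegative KL term. The extra points you make explicit---that $\mathcal{L}_{\mathrm{CInd}}$ is constant in the decoder, that (A1) makes the pointwise minimum attainable so the objective separates, and that a nonnegative integrand with zero expectation vanishes $P_Z$-a.s.---are steps the paper leaves implicit; they tighten the argument rather than change it.
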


The first statement follows from the strict propriety of the energy score, while the second follows from the fact that minimizing $\mathcal{L}_{\mathrm{cat}}$ is equivalent to minimizing the conditional KL divergence up to an additive term independent of the decoder.




The optimal representation is then characterized as follows.

\begin{theorem} \label{theorem:main}
Under Assumptions (A1)--(A3), let $(e^\ast,d^\ast)$ be a minimizer of $\mathcal{L}_{\mathrm{total}}
=
\lambda_{\mathrm{num}}\mathcal{L}_{\mathrm{num}}
+
\lambda_{\mathrm{cat}}\mathcal{L}_{\mathrm{cat}}
+
\lambda_{\mathrm{CInd}}\mathcal{L}_{\mathrm{CInd}},$
where $\lambda_{\mathrm{num}}, \lambda_{\mathrm{cat}}, \lambda_{\mathrm{CInd}} > 0$ are fixed constants with $\lambda_{\mathrm{num}} + \lambda_{\mathrm{cat}} + \lambda_{\mathrm{CInd}} = 1$.
Then the following statements hold:

\begin{itemize}
    \item[(i)] (\textit{Optimal decoder}) The decoder $d^\ast = \left(d^{\mathrm{num}\,*},  d^{\mathrm{cat}\,*}\right)$ recovers the true conditional distributions of the data given the representation in the sense that for $P^\ast$-almost every $x$,
    \begin{align*}
    d^{\mathrm{num}\,*}(e^\ast(x), \varepsilon) \overset{d}{=} X^{\mathrm{num}} \mid Z = e^\ast(x) \mbox{ and } 
    d^{\mathrm{cat}\,*}(\cdot \mid e^\ast(x)) = P(X^{\mathrm{cat}} \mid Z = e^\ast(x)).
    \end{align*}

    \item[(ii)] (\textit{Optimal encoder}) The optimal encoder $e^\ast$ is characterized by
    \begin{equation} \label{eq:optimal encoder}
    \begin{aligned}
    e^\ast \in \arg\min_e \Bigg\{
    &\lambda_{\mathrm{num}}
    \frac12
    \mathbb{E}_{X \sim P^\ast}
    \mathbb{E}_{Y,Y' \sim P(X^{\mathrm{num}} \mid e(X))}
    \|Y-Y'\|_2^\beta \\
    &\qquad
    +
    \lambda_{\mathrm{cat}} \,
    H(X^{\mathrm{cat}} \mid e(X))
    +
    \lambda_{\mathrm{CInd}} \,
    I(X^{\mathrm{num}}; X^{\mathrm{cat}} \mid e(X))
    \Bigg\},
    \end{aligned}        
    \end{equation}
    where $H(\cdot \mid \cdot)$ denotes the conditional entropy and $I\big( X^{\mathrm{num}} ; X^{\mathrm{cat}} \mid e(X) \big) = H(X^{\mathrm{cat}} \mid e(X)) - H(X^{\mathrm{cat}} \mid e(X), X^{\mathrm{num}})$ is the conditional mutual information.
\end{itemize}
\end{theorem}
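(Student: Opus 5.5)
The argument is a two-stage (profile) minimization: first the decoder for a fixed encoder, then the encoder. The key observation is that $\mathcal{L}_{\mathrm{CInd}}$ depends only on $e$ and the auxiliary models, not on $d$. Hence for fixed $e$, minimizing $\mathcal{L}_{\mathrm{total}}$ over $d=(d^{\mathrm{num}},d^{\mathrm{cat}})$ separates into minimizing $\lambda_{\mathrm{num}}\mathcal{L}_{\mathrm{num}}$ over $d^{\mathrm{num}}$ and $\lambda_{\mathrm{cat}}\mathcal{L}_{\mathrm{cat}}$ over $d^{\mathrm{cat}}$.

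For part (i), given the joint minimizer $(e^\ast,d^\ast)$, the decoder $d^\ast$ must minimize $\mathcal{L}_{\mathrm{total}}(e^\ast,\cdot)$. Otherwise replacing it by a better decoder would strictly lower the total objective. Proposition \ref{proposition:optimal decoder for any fixed encoder} applied with $e=e^\ast$ then gives (i) directly.

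For completeness, I would recall why that proposition holds. Condition on $Z=e(X)$ and write $\mathcal{L}_{\mathrm{num}}=\mathbb{E}_Z[S(P_{d}(\cdot\mid Z),P(X^{\mathrm{num}}\mid Z))]$, where $S$ is the energy score from Lemma \ref{lemma:energy score is a strictly proper scoring rule}. The moment conditions are supplied by (A3): $\mathbb{E}\|X^{\mathrm{num}}\|^\beta<\infty$ implies conditional finiteness for $P_Z$-a.e.\ $z$. Pointwise minimization in $z$ is attainable by (A1). Strict propriety then forces equality in distribution for a.e.\ $z$; otherwise the integrand would be strictly larger on a positive-measure set. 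The categorical part is handled by Gibbs' inequality, since $\mathcal{L}_{\mathrm{cat}}=H(X^{\mathrm{cat}}\mid Z)+\mathbb{E}_Z\,\mathrm{KL}$.

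For part (ii), I plug the optimal decoder and auxiliary models into each term to obtain the profiled objective $\inf_d \mathcal{L}_{\mathrm{total}}(e,d)$.

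\textbf{Numerical term.} With $d^{\mathrm{num}}(z,\varepsilon)\overset{d}{=}X^{\mathrm{num}}\mid Z=z$, write $Y,Y'$ for independent draws from $P(X^{\mathrm{num}}\mid Z)$. Since $X^{\mathrm{num}}$ and $Y$ are conditionally i.i.d.\ given $Z$, we have $\mathbb{E}\|X^{\mathrm{num}}-Y\|^\beta=\mathbb{E}\|Y-Y'\|^\beta$. The first term therefore equals the full expectation of $\|Y-Y'\|^\beta$, and subtracting half of it leaves $\tfrac12\mathbb{E}_X\mathbb{E}_{Y,Y'}\|Y-Y'\|_2^\beta$.

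\textbf{Categorical term.} The optimal value is $\mathcal{L}_{\mathrm{cat}}=H(X^{\mathrm{cat}}\mid e(X))$.

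\textbf{Regularization term.} By (A2) and the same Gibbs argument, the infima in $\mathcal{L}_{\mathrm{CInd}}$ are attained at the true conditionals. This gives $\mathcal{L}_{\mathrm{CInd}}=H(X^{\mathrm{cat}}\mid e(X))-H(X^{\mathrm{cat}}\mid e(X),X^{\mathrm{num}})=I(X^{\mathrm{num}};X^{\mathrm{cat}}\mid e(X))$.

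Summing the three terms gives the bracket in \eqref{eq:optimal encoder}. Since $\min_{e,d}\mathcal{L}_{\mathrm{total}}=\min_e\min_d\mathcal{L}_{\mathrm{total}}(e,d)$, and the inner minimum is attained by (A1), $e^\ast$ minimizes the profiled objective.

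The main obstacle is measure-theoretic rather than conceptual. One must justify interchanging conditioning with the infimum over decoders: the pointwise-optimal conditional laws have to be realized by a single measurable decoder, which is exactly what (A1) and (A2) postulate. One must also ensure every term is finite so that subtractions are legitimate. For the entropies this is automatic, because $X^{\mathrm{cat}}$ is finite-valued and so $H\le\sum_j\log K_j$. For the energy term, finiteness follows from (A3) via the bound $\|a-b\|^\beta\le 2^{\beta}(\|a\|^\beta+\|b\|^\beta)$. I would state these integrability checks briefly and otherwise rely on Lemma \ref{lemma:energy score is a strictly proper scoring rule} and Proposition \ref{proposition:optimal decoder for any fixed encoder}.
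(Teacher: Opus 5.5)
Your proposal is correct and follows essentially the same route as the paper. The paper fixes the encoder and profiles out the decoder via Proposition \ref{proposition:optimal decoder for any fixed encoder} (strict propriety of the energy score, plus the entropy--KL decomposition for the categorical part). It then identifies $\mathcal{L}_{\mathrm{CInd}}$ with $I(X^{\mathrm{num}};X^{\mathrm{cat}}\mid e(X))$ under (A2), which it isolates as Proposition \ref{proposition:role of the conditional independence regularization} and you re-derive inline, and sums the three profiled terms.

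Your additions make explicit what the paper leaves implicit: deducing (i) from the optimality of $d^\ast$ for the fixed $e^\ast$, obtaining the factor $\tfrac12$ from the conditional i.i.d.\ structure of $X^{\mathrm{num}}$ and the decoder output, and checking finiteness.
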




Theorem \ref{theorem:main} provides a population-level characterization of the representation learned by the proposed objective. In contrast to classical autoencoder objectives that primarily target conditional means, the first term in the reduced objective \eqref{eq:optimal encoder} corresponds to unexplained conditional variability in the numerical component and arises naturally from the strict propriety of the energy score. The second term measures the residual uncertainty of the categorical variables given the representation through conditional entropy, while the third term explicitly penalizes residual dependence between numerical and categorical components via conditional mutual information. More broadly, theoretical analysis of representation learning for mixed-type data remains limited. Compared to the existing DPA framework \citep{shen2024distributional}, whose theoretical characterization focuses primarily on conditional distributional reconstruction for numerical variables, our result provides a richer characterization tailored to mixed-type settings by incorporating both categorical uncertainty and dependence disentanglement into the representation objective. As a result, the learned representation is characterized not only by conditional distributional reconstruction, but also by its ability to capture and separate dependence between heterogeneous variable types. 
All the proofs are provided in Appendix \ref{appendix:Proofs}.

\section{Experiments} \label{sec:Experiments}

To evaluate reconstruction performance, we compare the following methods: \textbf{CI-Reg} (our method), \textbf{CI-Reg-0} (an ablated version of our method using the same architecture but without the conditional independence regularizer), standard \textbf{AE} and \textbf{VAE} baselines, and a single-decoder \textbf{DPA} model proposed by \citet{shen2024distributional} that does not separately handle numerical and categorical components. We  point out that CI-Reg-0 already differs from DPA in two important respects: (i) the use of separate numerical and categorical decoder heads, and (ii) the likelihood-based objective for categorical variables. Our objective is representation learning for mixed-type data with conditional distribution recovery rather than synthetic data generation or joint density estimation. To the best of our knowledge, there is currently no established benchmark specifically targeting this problem. Consequently, we choose DPA, standard autoencoder baselines, and carefully designed ablations as the most directly relevant comparisons. This choice is also consistent with the original DPA paper, where conditional distribution recovery was compared against PCA and standard AE. We further include a diffusion-based generative model, DDPM \citep{ho2020denoising, song2021denoising}, as a representative modern generative baseline.

We report the average categorical total variation (TV) distance across all categorical components since accurately recovering categorical distributions is an important aspect of modeling mixed-type data and a main contribution of our work. We additionally report the overall energy distance, computed on the concatenation of the numerical variables and one-hot encoded categorical variables, as a measure of overall conditional distribution recovery. Due to space limitations, additional evaluations, including dependence-based evaluation (summarized by the relative Frobenius norm errors between the true and reconstructed association matrices), component-wise energy distances, and Kolmogorov--Smirnov (KS) statistics for numerical variables, are deferred to Appendix \ref{appendix:Additional results}.

\subsection{Simulations} \label{subsec:Simulations}

We consider a synthetic mixed-type latent variable model. Let $Z^\ast \in \mathbb{R}^{d_z}$ be the true latent factor sampled from a Gaussian mixture model with three components. The numerical vector is generated as $X^{\mathrm{num}} = b + A_1 Z^\ast + \sin(A_2 Z^\ast) + \varepsilon$, $\varepsilon \sim \mathcal{N}(0,\sigma^2 I)$,
which introduces nonlinear dependence on $Z^\ast$. For the categorical component, each variable $X_j^{\mathrm{cat}}$, $j=1,\dots,d_{\mathrm{cat}}$, is generated from $X_j^{\mathrm{cat}} \sim \mathrm{Categorical}(p_j)$ where $p_j = \mathrm{softmax}(W_j Z^\ast + b_j)$.
The observed sample is $X = \bigl(X^{\mathrm{num}}, X_1^{\mathrm{cat}}, \dots, X_{d_{\mathrm{cat}}}^{\mathrm{cat}}\bigr)$.
Within each scenario, the parameters $b$, $A_1$, $A_2$, $\sigma$, $W_j$, and $b_j$ are fixed across all 30 repetitions.
We consider true latent dimensions $d_z \in \{3,5,10\}$ and vary the number of categories per categorical variable as $K \in \{2,3,5,8\}$. Each dataset contains $15$ numerical variables and $5$ categorical variables, with training sample size $n=2000$. The encoder dimension is fixed at $d = 8$ for all methods. Further details are provided in Appendix \ref{appendix:Experimental details-Simulations}.



Figures \ref{fig:simulation_tv_grid} and \ref{fig:simulation_energy_grid} summarize the results.
For categorical TV distance, CI-Reg and CI-Reg-0 achieve the best performance in all the settings, demonstrating that  explicitly modeling categorical features substantially improves recovery of the categorical conditional distributions. For energy distance, CI-Reg remains highly competitive and is typically on par with or better than CI-Reg-0. These results indicate that enforcing conditional independence does not degrade overall conditional distribution matching or second-order structure, while still improving the discrete component.
In contrast, AE and VAE perform substantially worse across all metrics, reflecting their focus on mean reconstruction rather than conditional distribution recovery. The DPA baseline also underperforms compared to structured approaches, highlighting the benefit of separating numerical and categorical components.

To further assess the robustness of the proposed framework, we conduct two additional simulation studies, with details and results deferred to Appendix \ref{appendix:sensitivity_original_simulation} and Appendix \ref{appendix:structured_factor_simulation}. First, we perform a sensitivity analysis over the conditional independence regularization parameter $\lambda_{\mathrm{CInd}}$ on representative settings from the original simulation study. The results indicate that performance varies smoothly across a broad range of regularization strengths and that the proposed method is not overly sensitive to the precise choice of $\lambda_{\mathrm{CInd}}$. Second, we consider a more challenging simulation study with a structured latent-factor partition, where some latent factors affect only numerical variables, some affect only categorical variables, and others affect both. We additionally investigate latent-dimension misspecification by considering under-specified, correctly specified, and over-specified representation dimensions. Across all settings, CI-Reg continues to achieve strong performance on categorical, numerical, and mixed-type reconstruction metrics, while preserving the underlying latent cluster structure.

\begin{figure}[htbp]
\centering
\includegraphics[width=\textwidth]{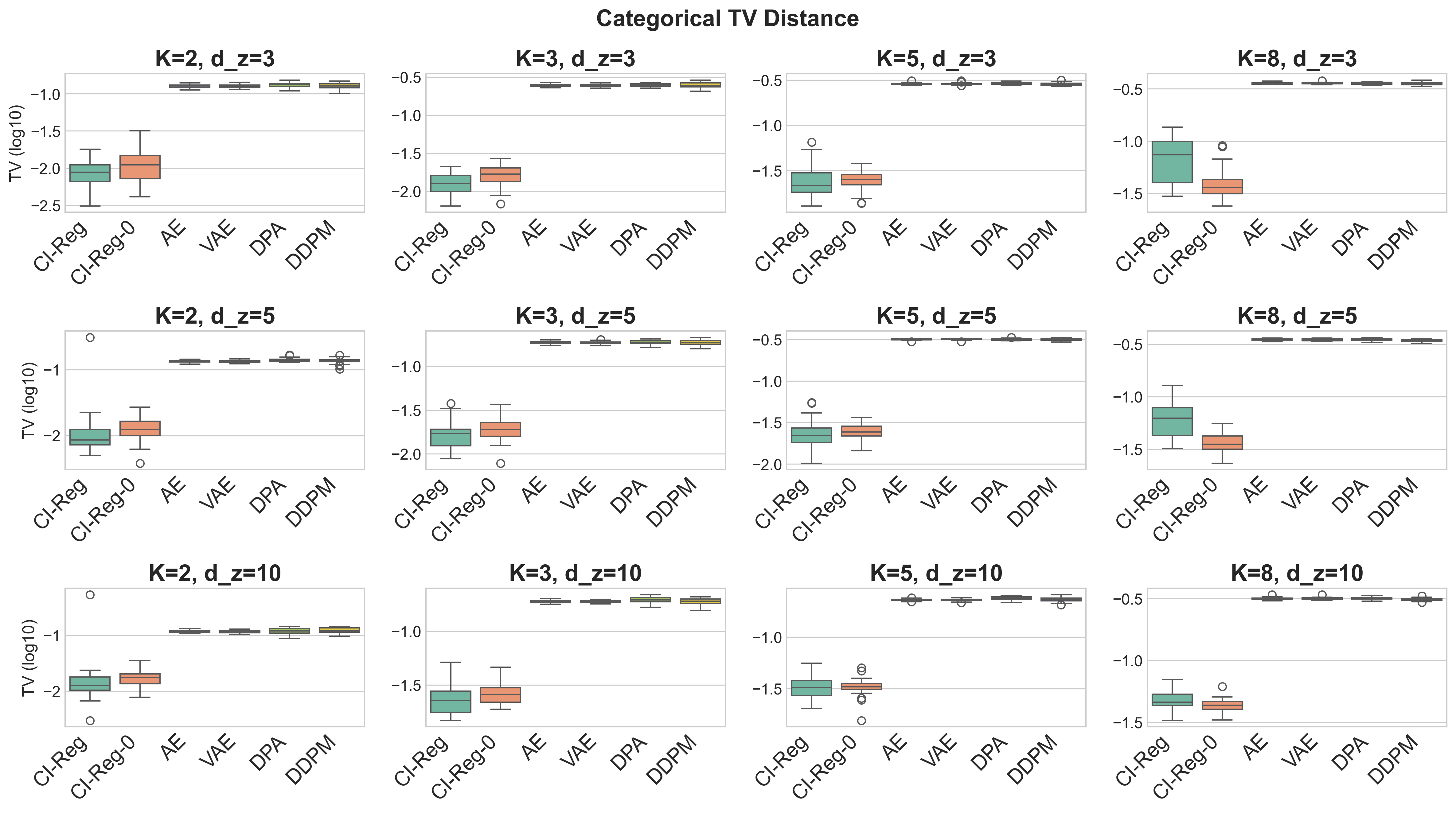}
\caption{
Categorical total variation distance (log-scale) across simulation settings. Lower values indicate more accurate recovery of the categorical conditional distributions. CI-Reg achieves the best performance in most settings.
}
\label{fig:simulation_tv_grid}
\end{figure}

\begin{figure}[htbp]
\centering
\includegraphics[width=\textwidth]{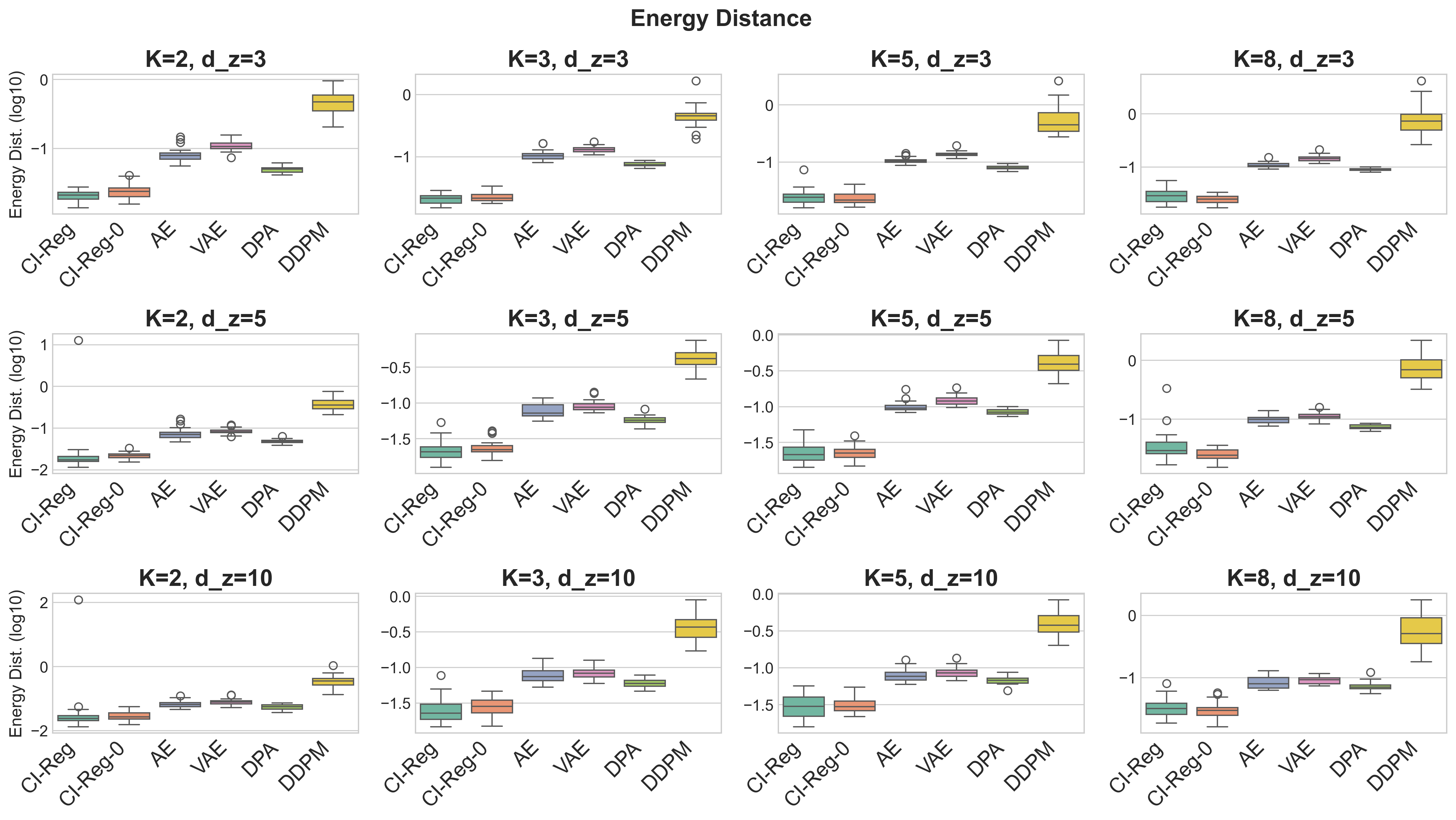}
\caption{
Overall energy distance (log-scale) between the true and learned conditional distributions across simulation settings. Lower values indicate better conditional distribution recovery.
}
\label{fig:simulation_energy_grid}
\end{figure}

\subsection{Real data experiments}

We consider 4 real-world public datasets consisting of both numerical and categorical features: Adult, Insurance, Shoppers, and PM2.5 \citep{uci_repository}. These datasets arise from diverse application domains including demographics, healthcare, online behavior, and environmental monitoring.

For model selection, we tune hyperparameters on the validation set, choosing the configuration that minimizes the combined association relative Frobenius error. The selected model is then retrained on the union of the training and validation sets and evaluated on the test set. All experiments are repeated over 10 independent runs with different random splits. Further details on the datasets and preprocessing procedures are deferred to Appendix \ref{appendix:Experimental details-Real data experiments}.


\begin{table}[htbp]
\caption{Categorical TV distance on real-world datasets (mean $\pm$ std). Lower is better.}
\label{tab:real_tv_distance}
\centering
\small
\setlength{\tabcolsep}{2pt}
\begin{tabular}{lcccccc}
\toprule
Dataset & CI-Reg & CI-Reg-0 & AE & VAE & DPA & DDPM \\
\midrule
Adult & $\mathbf{0.071 \pm 0.029}$ & $0.079 \pm 0.032$ & $0.385 \pm 0.001$ & $0.389 \pm 0.001$ & $0.385 \pm 0.001$ & $0.379 \pm 0.005$ \\
Insurance & $\mathbf{0.028 \pm 0.008}$ & $\mathbf{0.028 \pm 0.008}$ & $0.074 \pm 0.019$ & $0.071 \pm 0.012$ & $0.075 \pm 0.013$ & $0.080 \pm 0.024$ \\
Shoppers & $\mathbf{0.021 \pm 0.005}$ & $0.031 \pm 0.009$ & $0.365 \pm 0.004$ & $0.372 \pm 0.004$ & $0.365 \pm 0.004$ & $0.364 \pm 0.006$ \\
PM2.5 & $\mathbf{0.027 \pm 0.004}$ & $0.030 \pm 0.004$ & $0.041 \pm 0.002$ & $0.044 \pm 0.002$ & $0.041 \pm 0.002$ & $0.047 \pm 0.002$ \\
\bottomrule
\end{tabular}
\end{table}


\begin{table}[htbp]
\caption{Overall energy distance (mean $\pm$ std). Lower is better.}
\label{tab:real_energy_distance}
\centering
\small
\setlength{\tabcolsep}{2pt}
\begin{tabular}{lcccccc}
\toprule
Dataset & CI-Reg & CI-Reg-0 & AE & VAE & DPA & DDPM \\
\midrule
Adult & $\mathbf{0.047 \pm 0.034}$ & $0.060 \pm 0.044$ & $0.405 \pm 0.002$ & $0.423 \pm 0.002$ & $0.405 \pm 0.002$ & $0.420 \pm 0.011$ \\
Insurance & $0.013 \pm 0.003$ & $\mathbf{0.012 \pm 0.002}$ & $0.034 \pm 0.007$ & $0.090 \pm 0.012$ & $0.034 \pm 0.007$ & $0.056 \pm 0.024$ \\
Shoppers & $\mathbf{0.004 \pm 0.002}$ & $0.007 \pm 0.003$ & $0.227 \pm 0.005$ & $0.240 \pm 0.005$ & $0.229 \pm 0.006$ & $0.280 \pm 0.030$ \\
PM2.5 & $\mathbf{0.001 \pm 0.000}$ & $0.002 \pm 0.000$ & $0.009 \pm 0.000$ & $0.023 \pm 0.002$ & $0.009 \pm 0.000$ & $0.034 \pm 0.012$ \\
\bottomrule
\end{tabular}
\end{table}

Tables \ref{tab:real_tv_distance} and \ref{tab:real_energy_distance} summarize the reconstruction performance across the real-world datasets. Overall, CI-Reg and CI-Reg-0 consistently outperform AE, VAE, and DPA, indicating that the proposed split-decoder framework provides a substantial improvement in mixed-type conditional distribution recovery.

For categorical TV distance, CI-Reg and CI-Reg-0 achieve the best performance on all four datasets. In particular, the errors are substantially smaller than those of AE, VAE, and DPA, often by an order of magnitude. These results suggest that explicitly modeling categorical variables through a dedicated likelihood-based objective greatly improves recovery of the categorical conditional distributions.

For the real-world datasets considered in this paper, numerical features often have substantially different scales and variances. Therefore, all energy distances are computed after standardizing the numerical variables, ensuring that each numerical feature contributes equally to the metric and preventing variables with larger scales from dominating the evaluation. Under this evaluation protocol, a similar pattern is observed for the overall energy distance. CI-Reg achieves the lowest energy distance on three of the four datasets, while CI-Reg-0 performs best on the Insurance dataset. More importantly, both variants consistently outperform AE, VAE, and DPA by a large margin. This indicates that the proposed framework effectively recovers the overall conditional distribution of mixed-type data rather than only its conditional mean.

In contrast, AE and VAE exhibit substantially larger errors across both metrics. This behavior is consistent with their reconstruction objectives, which primarily target conditional means rather than full conditional distributions. The DPA baseline generally performs better than AE and VAE, reflecting the advantage of distributional reconstruction objectives, but still underperforms compared to the proposed framework designed specifically for mixed-type data.

\section{Discussion} \label{sec:Discussion}


We proposed Conditional-Independence-Regularized Distributional Autoencoders, a framework for representation learning and distribution recovery in mixed-type data. By combining distributional reconstruction objectives with an explicit conditional independence regularization mechanism, the proposed approach learns latent representations that preserve the conditional law of the data while encouraging structural disentanglement between numerical and categorical components. Theoretical analysis shows that the learned representation balances unexplained variability in numerical variables, uncertainty in categorical variables, and residual conditional dependence. Empirically, the method achieves strong performance across both synthetic and real-world datasets, consistently improving categorical distribution recovery, achieving competitive overall distribution recovery, and preserving mixed-type dependence structure.
From a computational perspective, CI-Reg-0 has training complexity comparable to DPA while consistently achieving substantially better reconstruction performance. We acknowledge that the conditional independence regularization in CI-Reg introduces additional computational overhead through the auxiliary models. However, the resulting performance improvements suggest that this additional cost can be worthwhile in practice.

Although the focus of this paper is conditional distribution recovery rather than downstream predictive performance, the learned representations may also be useful for subsequent tasks such as clustering, generation, and prediction. In particular, the additional structured-factor simulation study in Appendix \ref{appendix:structured_factor_simulation} shows that the learned embeddings successfully recover the latent cluster structure of the data (Figure \ref{fig:structured_factor_tsne_of_embeddings}). This suggests that preserving conditional distributions and dependence structure can simultaneously yield informative low-dimensional representations. A systematic investigation of downstream utility remains an interesting direction for future work.
Another interesting direction for future work is to extend the current conditional independence regularization beyond the numerical--categorical setting considered in this work. More generally, one could impose conditional independence constraints between multiple groups of variables. This extension may provide a principled way to incorporate domain knowledge into representation learning while further improving interpretability and structural disentanglement.



\bibliographystyle{plainnat}
\bibliography{references_current_version}

\newpage
\appendix


\section*{Appendix}

The appendix is organized as follows. Appendix \ref{appendix:Proofs} contains proofs of the theoretical results presented in Section \ref{sec:theory}. Appendix \ref{appendix:Experimental details} provides additional experimental details for both the simulation studies and real-world datasets, including model architectures, training procedures, and hyperparameter configurations. Appendix \ref{appendix:Additional results} presents additional experimental results, supplementary figures and tables, a sensitivity analysis with respect to the regularization parameter $\lambda_{\mathrm{CInd}}$,  as well as a more challenging simulation study with a structured latent-factor partition, which are omitted from the main text due to space limitations.

\section{Proofs} \label{appendix:Proofs}

In this appendix, we prove the theoretical results stated in Section \ref{sec:theory}. Throughout, let
\[
X = (X^{\mathrm{num}}, X^{\mathrm{cat}}) \sim P^\ast, 
\qquad Z = e(X),
\]
and recall the assumptions stated in Section \ref{sec:theory}.

\begin{proof}[Proof of Lemma \ref{lemma:energy score is a strictly proper scoring rule}] See Proposition 2 and its proof in \citet{szekely2003statistics}.
\end{proof}


\begin{proof}[Proof of Proposition \ref{proposition:optimal decoder for any fixed encoder}]
Fix an encoder $e(\cdot)$, and write $Z=e(X)$. We first consider the numerical decoder.

For $P_Z$-almost every latent value $z$, let $P_z^\ast$ denote a version of the conditional law of $X^{\mathrm{num}}$ given $Z=z$, and let $P_{d,z}$ denote the distribution induced by $d^{\mathrm{num}}(z,\varepsilon)$. Then
\begin{align*}
\mathcal{L}_{\mathrm{num}}
&=
\mathbb{E}_{Z}
\Bigg[
\mathbb{E}_{X^{\mathrm{num}} \sim P_z^\ast,\; Y \sim P_{d,z}}
\|X^{\mathrm{num}}-Y\|_2^\beta
-
\frac12
\mathbb{E}_{Y,Y' \sim P_{d,z}}
\|Y-Y'\|_2^\beta
\Bigg].
\end{align*}
Under Assumption (A3), we have $\mathbb{E}\|X^{\mathrm{num}}\|_2^\beta < \infty$, and hence for $P_Z$-almost every $z$, the conditional law $P_z^\ast$ has finite $\beta$-moment. Moreover, for any admissible decoder, the distribution $P_{d,z}$ also has finite $\beta$-moment. Therefore Lemma 1 applies.

By Lemma \ref{lemma:energy score is a strictly proper scoring rule}, for $P_Z$-almost every $z$, the inner quantity is minimized if and only if
\[
P_{d,z} = P_z^\ast,
\]
that is,
\[
d^{\mathrm{num}\,*}(z,\varepsilon)\overset{d}{=} X^{\mathrm{num}} \mid Z=z.
\]
Therefore,
\[
d^{\mathrm{num}\,*}(e(x),\varepsilon)\overset{d}{=} X^{\mathrm{num}} \mid Z=e(x)
\]
for $P^\ast$-almost every $x$.

Next consider the categorical decoder. For $P_Z$-almost every $z$, let
\[
p_z^\ast(\cdot) = P(X^{\mathrm{cat}}\mid Z=z),
\qquad
p_{d,z}(\cdot) = d^{\mathrm{cat}}(\cdot \mid z).
\]
Then
\begin{align*}
\mathcal{L}_{\mathrm{cat}}
&=
\mathbb{E}\big[-\log d^{\mathrm{cat}}(X^{\mathrm{cat}}\mid Z)\big] \\
&=
\mathbb{E}_{Z}
\Big[
\mathbb{E}_{X^{\mathrm{cat}} \sim p_z^\ast}
\big[-\log p_{d,z}(X^{\mathrm{cat}})\big]
\Big].
\end{align*}

For $P_Z$-almost every $z$, we decompose the inner expectation as
\begin{align*}
\mathbb{E}_{X^{\mathrm{cat}} \sim p_z^\ast}
\big[-\log p_{d,z}(X^{\mathrm{cat}})\big]
&=
\sum_x p_z^\ast(x)\, \big(-\log p_{d,z}(x)\big) \\
&=
\sum_x p_z^\ast(x)\log \frac{p_z^\ast(x)}{p_{d,z}(x)}
\;-\;
\sum_x p_z^\ast(x)\log p_z^\ast(x) \\
&=
\mathrm{KL}\big(p_z^\ast \,\|\, p_{d,z}\big)
+
H\big(p_z^\ast\big),
\end{align*}
where
\[
H(p_z^\ast) = - \sum_x p_z^\ast(x)\log p_z^\ast(x)
\]
denotes the entropy of $p_z^\ast$.

Therefore,
\[
\mathcal{L}_{\mathrm{cat}}
=
\mathbb{E}_{Z}
\Big[
H\big(p_z^\ast\big)
+
\mathrm{KL}\big(p_z^\ast \,\|\, p_{d,z}\big)
\Big].
\]
Since $H(p_z^\ast)$ does not depend on the decoder and the KL divergence is nonnegative, the loss is minimized if and only if
\[
p_{d,z}=p_z^\ast
\]
for $P_Z$-almost every $z$. Equivalently,
\[
d^{\mathrm{cat}\,*}(\cdot \mid e(x))
=
P(X^{\mathrm{cat}} \mid Z=e(x))
\]
for $P^\ast$-almost every $x$.
\end{proof}

The role of the conditional independence regularization is understood by the following proposition.

\begin{proposition} \label{proposition:role of the conditional independence regularization}
Under Assumption (A2), for any encoder $e(\cdot)$, the conditional independence term satisfies
\[
\mathcal{L}_{\mathrm{CInd}} 
= \inf_{\theta} \mathbb{E}\big[-\log q_\theta(X^{\mathrm{cat}} \mid Z)\big]
- \inf_{\psi} \mathbb{E}\big[-\log q_\psi(X^{\mathrm{cat}} \mid Z, X^{\mathrm{num}})\big]
= I\big( X^{\mathrm{num}} ; X^{\mathrm{cat}} \mid Z \big),
\]
where the conditional mutual information is defined as
\[
I\big( X^{\mathrm{num}} ; X^{\mathrm{cat}} \mid Z \big)
=
H(X^{\mathrm{cat}} \mid Z)
-
H(X^{\mathrm{cat}} \mid Z, X^{\mathrm{num}}),
\]
with $H(\cdot \mid \cdot)$ denoting conditional entropy.
\end{proposition}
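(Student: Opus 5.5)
The plan is to treat each infimum separately, using the same cross-entropy decomposition that appeared in the proof of Proposition \ref{proposition:optimal decoder for any fixed encoder} for $\mathcal{L}_{\mathrm{cat}}$, and then subtract. Fix an encoder $e(\cdot)$ and write $Z=e(X)$.

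\textbf{First term.} For any $\theta$, conditioning on $Z=z$ gives
\[
\mathbb{E}\big[-\log q_\theta(X^{\mathrm{cat}}\mid Z)\big]
=\mathbb{E}_Z\Big[H\big(P(X^{\mathrm{cat}}\mid Z)\big)+\mathrm{KL}\big(P(X^{\mathrm{cat}}\mid Z)\,\|\,q_\theta(\cdot\mid Z)\big)\Big].
\]
This is the pointwise identity $\sum_x p(x)(-\log q(x)) = H(p)+\mathrm{KL}(p\|q)$ for the finite category space, integrated over $P_Z$. Since KL is nonnegative, this expression is at least $H(X^{\mathrm{cat}}\mid Z)$. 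By (A2), the value $q_\theta(\cdot\mid Z)=P(X^{\mathrm{cat}}\mid Z)$ is attained for almost every input, so equality holds. Hence
\[
\inf_\theta \mathbb{E}\big[-\log q_\theta(X^{\mathrm{cat}}\mid Z)\big]=H(X^{\mathrm{cat}}\mid Z).
\]

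\textbf{Second term.} Repeat the argument with the conditioning variable $W=(Z,X^{\mathrm{num}})$. Conditioning on $W$ gives
\[
\mathbb{E}\big[-\log q_\psi(X^{\mathrm{cat}}\mid W)\big]=\mathbb{E}_W\big[H(P(X^{\mathrm{cat}}\mid W))+\mathrm{KL}(P(X^{\mathrm{cat}}\mid W)\,\|\,q_\psi(\cdot\mid W))\big].
\]
Invoking (A2) for $q_\psi$, this yields
\[
\inf_\psi \mathbb{E}\big[-\log q_\psi(X^{\mathrm{cat}}\mid Z,X^{\mathrm{num}})\big]=H(X^{\mathrm{cat}}\mid Z,X^{\mathrm{num}}).
\]
Subtracting the two infima gives exactly the stated definition of $I(X^{\mathrm{num}};X^{\mathrm{cat}}\mid Z)$.

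\textbf{Main obstacle: measure-theoretic care.} The argument needs the following points checked.
\begin{itemize}
\item Regular conditional distributions of the finite-valued $X^{\mathrm{cat}}$ given $Z$, and given $(Z,X^{\mathrm{num}})$, must exist. This holds because the target space is finite and the conditioning spaces are Euclidean.
\item Both conditional entropies must be finite so that the subtraction is not of the form $\infty-\infty$. They are bounded by $\sum_j \log K_j$, since $X^{\mathrm{cat}}$ takes at most $\prod_j K_j$ values.
\item When some $q_\theta$ assigns zero mass to a realized category, the cross-entropy is $+\infty$. This only increases the expectation, so the lower bound still holds and the infimum is unaffected.
\end{itemize}
With these checks in place, the identity holds with all quantities finite, which is also what is needed later to substitute this expression into the proof of Theorem \ref{theorem:main}.
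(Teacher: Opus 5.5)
Your proposal is correct and follows essentially the same argument as the paper. You decompose each cross-entropy into conditional entropy plus an expected KL term, lower-bound by the conditional entropy, attain the bound via (A2), and subtract. Your extra checks are sound and make explicit what the paper leaves implicit: existence of the regular conditional laws, finiteness of both entropies (bounded by $\sum_j \log K_j$), and the harmless $+\infty$ case when $q_\theta$ gives zero mass to a realized category.
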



\begin{proof}[Proof of Proposition \ref{proposition:role of the conditional independence regularization}] 

The proof relies on standard information-theoretic identities relating conditional entropy, KL divergence, and conditional mutual information; see, for example, \citet{10.1007/978-3-642-04138-9_30} and \citet{batina2011mutual}.

Fix an encoder $e(\cdot)$ and let $Z=e(X)$. For each $z$, denote
\[
p_z^\ast(\cdot) = P(X^{\mathrm{cat}} \mid Z=z).
\]

For any $\theta$, we have
\begin{align*}
\mathbb{E}\big[-\log q_\theta(X^{\mathrm{cat}} \mid Z)\big]
&=
\mathbb{E}_Z \Big[
\mathbb{E}_{X^{\mathrm{cat}} \sim p_z^\ast}
\big[-\log q_\theta(X^{\mathrm{cat}} \mid z)\big]
\Big].
\end{align*}

For each fixed $z$, we decompose the inner expectation as
\begin{align*}
\mathbb{E}_{X^{\mathrm{cat}} \sim p_z^\ast}
\big[-\log q_\theta(X^{\mathrm{cat}} \mid z)\big]
&=
\sum_x p_z^\ast(x)\, \big(-\log q_\theta(x \mid z)\big) \\
&=
\sum_x p_z^\ast(x)\log \frac{p_z^\ast(x)}{q_\theta(x \mid z)}
\;-\;
\sum_x p_z^\ast(x)\log p_z^\ast(x) \\
&=
\mathrm{KL}\big(p_z^\ast \,\|\, q_\theta(\cdot \mid z)\big)
+
H(p_z^\ast),
\end{align*}
where
\[
H(p_z^\ast) = -\sum_x p_z^\ast(x)\log p_z^\ast(x)
\]
denotes the entropy of $p_z^\ast$.

Therefore,
\begin{align*}
\mathbb{E}\big[-\log q_\theta(X^{\mathrm{cat}} \mid Z)\big]
=
\mathbb{E}_Z \Big[
H(p_z^\ast)
+
\mathrm{KL}\big(p_z^\ast \,\|\, q_\theta(\cdot \mid z)\big)
\Big].
\end{align*}

Since the KL divergence is nonnegative, we obtain the lower bound
\[
\mathbb{E}\big[-\log q_\theta(X^{\mathrm{cat}} \mid Z)\big]
\ge
\mathbb{E}_Z\big[H(p_z^\ast)\big]
=
H(X^{\mathrm{cat}} \mid Z),
\]
where the conditional entropy is defined as
\[
H(X^{\mathrm{cat}} \mid Z)
=
\mathbb{E}_Z\big[H(p_z^\ast)\big].
\]

By Assumption (A2), the model class for $q_\theta$ is sufficiently rich so that there exists $\theta^\ast$ satisfying
\[
q_{\theta^\ast}(\cdot \mid z) = p_z^\ast(\cdot)
\quad \text{for almost every } z.
\]
For such $\theta^\ast$, the KL divergence vanishes, and hence
\[
\inf_\theta \mathbb{E}\big[-\log q_\theta(X^{\mathrm{cat}} \mid Z)\big]
=
H(X^{\mathrm{cat}} \mid Z).
\]

An analogous argument yields
\[
\inf_\psi \mathbb{E}\big[-\log q_\psi(X^{\mathrm{cat}} \mid Z, X^{\mathrm{num}})\big]
=
H(X^{\mathrm{cat}} \mid Z, X^{\mathrm{num}}).
\]

Therefore,
\begin{align*}
\mathcal{L}_{\mathrm{CInd}}
&=
\inf_\theta \mathbb{E}\big[-\log q_\theta(X^{\mathrm{cat}} \mid Z)\big]
-
\inf_\psi \mathbb{E}\big[-\log q_\psi(X^{\mathrm{cat}} \mid Z,X^{\mathrm{num}})\big] \\
&=
H(X^{\mathrm{cat}} \mid Z)
-
H(X^{\mathrm{cat}} \mid Z,X^{\mathrm{num}}) \\
&=
I(X^{\mathrm{num}}; X^{\mathrm{cat}} \mid Z),
\end{align*}
where the last equality follows from the definition of conditional mutual information.

In particular,
\[
\mathcal{L}_{\mathrm{CInd}} \ge 0,
\]
with equality if and only if
\[
X^{\mathrm{num}} \perp\!\!\!\perp X^{\mathrm{cat}} \mid Z.
\]
\end{proof}


\begin{proof}[Proof of Theorem \ref{theorem:main}]
Let $(e^\ast,d^\ast)$ be a minimizer of
\[
\mathcal{L}_{\mathrm{total}}
=
\lambda_{\mathrm{num}}\mathcal{L}_{\mathrm{num}}
+
\lambda_{\mathrm{cat}}\mathcal{L}_{\mathrm{cat}}
+
\lambda_{\mathrm{CInd}}\mathcal{L}_{\mathrm{CInd}}.
\]

Fix an arbitrary encoder $e(\cdot)$ and denote $Z=e(X)$. We first minimize the objective over the decoder and auxiliary models.

By Proposition \ref{proposition:optimal decoder for any fixed encoder}, for each fixed $z$, the optimal numerical decoder induces the conditional distribution of $X^{\mathrm{num}}$ given $Z=z$. Substituting this into $\mathcal{L}_{\mathrm{num}}$, we obtain
\begin{align*}
\inf_d \mathcal{L}_{\mathrm{num}}
&=
\mathbb{E}_Z \Bigg[
\frac12
\mathbb{E}_{Y,Y' \sim P(X^{\mathrm{num}} \mid Z)}
\|Y-Y'\|_2^\beta
\Bigg] \\
&=
\frac12
\mathbb{E}_{X \sim P^\ast}
\mathbb{E}_{Y,Y' \sim P(X^{\mathrm{num}} \mid e(X))}
\|Y-Y'\|_2^\beta.
\end{align*}

Again by Proposition \ref{proposition:optimal decoder for any fixed encoder}, the optimal categorical decoder recovers the conditional distribution of $X^{\mathrm{cat}}$ given $Z$. Therefore,
\[
\inf_d \mathcal{L}_{\mathrm{cat}}
=
H(X^{\mathrm{cat}} \mid Z)
=
H(X^{\mathrm{cat}} \mid e(X)).
\]

Moreover, Proposition \ref{proposition:role of the conditional independence regularization} shows that
\[
\mathcal{L}_{\mathrm{CInd}}
=
I(X^{\mathrm{num}};X^{\mathrm{cat}}\mid Z)
=
I(X^{\mathrm{num}};X^{\mathrm{cat}}\mid e(X)).
\]

Combining the above, after optimizing over the decoder and auxiliary models, the encoder is characterized by the reduced objective
\begin{align*}
e^\ast \in \arg\min_e \Bigg\{
&\lambda_{\mathrm{num}}
\frac12
\mathbb{E}_{X \sim P^\ast}
\mathbb{E}_{Y,Y' \sim P(X^{\mathrm{num}} \mid e(X))}
\|Y-Y'\|_2^\beta \\
&\qquad
+
\lambda_{\mathrm{cat}} \,
H(X^{\mathrm{cat}} \mid e(X))
+
\lambda_{\mathrm{CInd}} \,
I(X^{\mathrm{num}}; X^{\mathrm{cat}} \mid e(X))
\Bigg\}.
\end{align*}

This shows that the optimal encoder balances three terms: unexplained variability in the numerical component, conditional entropy of the categorical component, and residual dependence between the two components.
\end{proof}

\section{Experimental details} \label{appendix:Experimental details}

\subsection{Simulations} \label{appendix:Experimental details-Simulations}

\paragraph{Evaluation protocol.}
To evaluate conditional distribution recovery, we use exactly the same latent variables $Z^\ast$ to generate both training and test samples, with independent noise realizations in the test set. Thus, the comparison isolates the discrepancy between the true conditional distribution $P(X \mid Z^\ast)$ and the learned conditional distribution induced by the decoder, rather than conflating it with latent mismatch.

\paragraph{Simulation setup.}
We consider a grid of simulation scenarios defined by varying the number of categories per categorical variable $K \in \{2,3,5,8\}$ and the true latent dimension $d_z \in \{3,5,10\}$, resulting in 12 distinct settings. Each dataset consists of $n=2000$ samples with $15$ numerical and $5$ categorical variables. The latent variable $Z^\ast$ is generated from a 3-component Gaussian mixture model with mixing proportions $(0.1, 0.2, 0.7)$, cluster separation $5.0$, and unit variance. Observed variables are generated via a nonlinear transformation of $Z^\ast$ with additive Gaussian noise.

The DGP parameters $b$, $A_1$, $A_2$, $\{W_j\}$, and $\{b_j\}$ are generated once per scenario and held fixed across all 30 repetitions. Specifically, the bias vector $b \in \mathbb{R}^{d_{\mathrm{num}}}$ is drawn from $\mathcal{N}(0, 4I)$, the linear loading matrix $A_1 \in \mathbb{R}^{d_{\mathrm{num}} \times d_z}$ from $\mathcal{N}(0, I)$, and the nonlinear loading matrix $A_2 \in \mathbb{R}^{d_{\mathrm{num}} \times d_z}$ from $\mathcal{N}(0, 0.25I)$. For each categorical feature $j$, the weight matrix $W_j \in \mathbb{R}^{d_z \times K_j}$ is drawn from $\mathcal{N}(0, 0.25I)$ and the bias vector $b_j \in \mathbb{R}^{K_j}$ from $\mathcal{N}(0, I)$. The noise standard deviation is fixed at $\sigma = 1.0$. Within each repetition, the latent variables $Z^\ast$ are resampled from the GMM and fresh noise is drawn for both the numerical and categorical feature generation.

We compare the following methods: AE, VAE, CI-Reg, CI-Reg-0, DPA, and DDPM. The first five methods are representation-learning approaches and therefore share the same encoder architecture with latent dimension fixed to $8$, while decoder architectures differ according to each method. DDPM is a generative baseline and does not employ an explicit low-dimensional latent representation.

\paragraph{Model architecture and training details.}
For AE, VAE, CI-Reg, CI-Reg-0, and DPA, we use a shared encoder consisting of a multilayer perceptron with three hidden layers of width 64 and ReLU activations, mapping inputs to an 8-dimensional latent representation. Decoder architectures vary by method: split-decoder models use separate networks for numerical and categorical variables, while DPA uses a unified decoder network.

All representation-learning methods are trained using Adam with learning rate $10^{-3}$ and batch size 128 for up to 200 epochs. For CI-Reg models, we consistently set raw (unnormalized) weights $\lambda_{\mathrm{num}} = 0.06$, $\lambda_{\mathrm{cat}} = 0.04$, and $\lambda_{\mathrm{CInd}} = 0.03$, with 100 auditor updates per training step. The diversity parameter is fixed to $m=100$, and the default choice $\beta=1$ is used throughout.

The DDPM implementation uses a standard denoising diffusion model with $T=1000$ diffusion steps, a linear noise schedule ranging from $10^{-4}$ to $0.02$, and a four-layer denoising MLP with hidden dimensions $(512,512,512,512)$. The model is trained using Adam with learning rate $10^{-3}$, batch size $128$, and early stopping. The DDPM configuration is fixed across all experiments in the work.


\subsection{Real data experiments} \label{appendix:Experimental details-Real data experiments}

\paragraph{Datasets.}
We briefly describe the real-world datasets used in our experiments:

\begin{itemize}
    \item \href{https://archive.ics.uci.edu/dataset/2/adult}{\textbf{Adult.}} The Adult Census Income dataset contains demographic and employment-related attributes of individuals. The original task is to predict whether an individual's annual income exceeds \$50{,}000.
    \item \href{https://www.kaggle.com/datasets/mirichoi0218/insurance}{\textbf{Insurance.}}The Insurance dataset contains demographic and health-related attributes, including age, sex, body mass index (BMI), number of children, smoking status, and region. The original task is to predict individual medical insurance charges.
    \item \href{https://archive.ics.uci.edu/dataset/468/online+shoppers+purchasing+intention+dataset}{\textbf{Shoppers.}} The Online Shoppers Purchasing Intention dataset contains information on users' web browsing sessions. The original task is to predict whether a session results in a purchase.
    \item \href{https://archive.ics.uci.edu/dataset/381/beijing+pm2+5+data}{\textbf{PM2.5.}} The PM2.5 dataset contains hourly measurements of PM2.5 concentration recorded at the U.S. Embassy in Beijing, along with meteorological variables from Beijing Capital International Airport. The original task is to predict the PM2.5 concentration.
\end{itemize}


\begin{table}[htbp]
\caption{Details of datasets used in the real-world experiments.}
\label{tab:dataset_stats}
\centering
\setlength{\tabcolsep}{4pt}
\begin{tabular}{lccccccc}
\toprule
Dataset & \# Total & \# Train & \# Validation & \# Test & \# Num & \# Cat \\
\midrule
Adult & 45,222 & 28,941 & 7,236 & 9,045 & 6 & 9 \\
Insurance & 1,338 & 749 & 321 & 268 & 4 & 3 \\
Shoppers & 12,330 & 7,891 & 1,973 & 2,466 & 10 & 8 \\
PM2.5 & 41,757 & 23,383 & 10,022 & 8,352 & 7 & 5 \\
\bottomrule
\end{tabular}
\end{table}

\paragraph{Experimental details.}

We remove any rows containing missing values. All experiments are repeated over 10 independent runs with different random splits; the statistics reported in Table \ref{tab:dataset_stats} correspond to each individual run.
Model selection is performed using the validation set, where we tune the conditional independence regularization parameter $\lambda_{\mathrm{CInd}}$ by minimizing the combined association relative Frobenius error. The selected model is then retrained on the union of the training and validation sets and evaluated on the test set.

We note that the $\lambda_{\mathrm{CInd}}$ values reported below correspond to the raw (unnormalized) weights used during hyperparameter tuning. The constraint $\lambda_{\mathrm{num}}+\lambda_{\mathrm{cat}}+\lambda_{\mathrm{CInd}}=1 $ is imposed only for normalization but is not necessary for model training. Since multiplying all three weights by a common positive constant leaves the optimizer unchanged, the normalization removes an unnecessary scaling degree of freedom and allows the hyperparameters to be interpreted as relative trade-off weights.

For the Adult dataset, we use a large-capacity architecture with latent dimension $d = 8$ and noise dimension $d_{\epsilon}=512$. The encoder consists of 3 hidden layers of width 512, while the split decoder uses deep networks for both numerical and categorical components. DPA baseline uses a comparable decoder architecture with 4 layers of width 512. Models are trained for up to 200 epochs using Adam with learning rate $10^{-3}$ and batch size 256, with early stopping (patience 30). We apply gradient clipping to stabilize training. The (unnormalized) conditional independence regularization parameter $\lambda_{\mathrm{CInd}}$ is selected from $\{0, 0.01, 0.03, 0.05, 0.1, 0.3\}$ based on validation performance, while $\lambda_{\mathrm{num}}=0.06$ and $\lambda_{\mathrm{cat}}=0.04$ are fixed.

For the Insurance dataset, we use a latent dimension $d = 16$ and $d_{\epsilon}=128$. The encoder consists of 4 hidden layers of width 256, and the decoder networks are correspondingly shallower and narrower. Models are trained for up to 200 epochs using Adam with learning rate $10^{-3}$ and batch size 64, with early stopping (patience 30). Gradient clipping is applied to stabilize training. The (unnormalized) conditional independence regularization parameter $\lambda_{\mathrm{CInd}}$ is tuned over $\{ 0, 0.01, 0.03, 0.05, 0.1, 0.3, 0.5 \}$ using the validation set, while $\lambda_{\mathrm{num}}=0.06$ and $\lambda_{\mathrm{cat}}=0.04$ are fixed.

For the Shoppers dataset, we use the same large-capacity architecture as in Adult, with latent dimension $d = 8$ and $d_{\epsilon}=512$. The encoder consists of 4 hidden layers of width 512, and the decoder networks follow the same deep configuration for both split and single-decoder variants. Models are trained for up to 200 epochs using Adam with learning rate $10^{-3}$ and batch size 256, with early stopping (patience 30). Gradient clipping is used to stabilize training. The (unnormalized) regularization parameter $\lambda_{\mathrm{CInd}}$ is selected from $\{ 0, 0.005, 0.01, 0.03, 0.05 \}$ based on validation performance, with $\lambda_{\mathrm{num}}=0.06$ and $\lambda_{\mathrm{cat}}=0.04$ fixed.

For the PM2.5 dataset, we use latent dimension $d = 8$ and noise dimension $d_{\epsilon}=512$. The encoder consists of 3 hidden layers of width 512, while the split decoder uses deep networks for both numerical and categorical components. DPA baselines and AE/VAE models use a comparable architecture with 4 hidden layers of width 512. Models are trained for up to 200 epochs using Adam with learning rate $10^{-3}$ and batch size 256, with early stopping (patience 30). Gradient clipping is applied to stabilize training. For CI-regularized models, the auditor networks are trained using learning rate $10^{-4}$ with 20 auditor updates per training step. The (unnormalized) conditional independence regularization parameter $\lambda_{\mathrm{CInd}}$ is selected from $\{ 0, 0.005, 0.01, 0.03, 0.05, 0.1 \}$ based on validation performance, while $\lambda_{\mathrm{num}}=0.06$ and $\lambda_{\mathrm{cat}}=0.04$ are fixed.

\section{Additional results} \label{appendix:Additional results}

\subsection{Additional evaluation metrics for the simulation in Section \ref{subsec:Simulations}} \label{appendix:Additional results-Simulations}



Figures \ref{fig:simulation_KS_grid}--\ref{fig:simulation_energy_cat_grid} provide additional simulation results evaluating conditional distribution recovery for both numerical and categorical components across all simulation settings.

Figure \ref{fig:simulation_KS_grid} reports the Kolmogorov--Smirnov (KS) statistic for numerical variables, computed as the average KS statistic across all numerical components. Overall, CI-Reg and CI-Reg-0 consistently achieve strong performance and remain highly competitive with the DPA baseline across different latent dimensions and category configurations. In most settings, all three distributional approaches outperform AE and VAE, indicating the importance of explicitly modeling conditional distributions rather than focusing solely on mean reconstruction.

To assess the overall dependence structure, we follow \citet{pmlr-v202-kotelnikov23a} and construct mixed-type correlation matrices using the Pearson correlation coefficient for numerical--numerical pairs, the correlation ratio for categorical--numerical pairs, and Theil's U statistic for categorical--categorical pairs. To ensure symmetry, Theil’s U is averaged across both directions. Performance is then summarized via relative Frobenius norm errors between the true and reconstructed association matrices.

Figure \ref{fig:simulation_combined_assoc_relative_frobenius_comparison_grid} reports the relative Frobenius norm between the true and learned conditional distributions. CI-Reg consistently ranks among the best-performing methods and typically achieves performance comparable to or better than DPA. In contrast, AE and VAE exhibit substantially larger relative Frobenius norms. These results further demonstrate that the proposed framework effectively recovers the full conditional distribution while handling mixed-type data.

These trends are further clarified in Figures \ref{fig:simulation_energy_cont_grid} and \ref{fig:simulation_energy_cat_grid}, which separately report the energy distances for the numerical and categorical components. Figure \ref{fig:simulation_energy_cont_grid} shows that CI-Reg and CI-Reg-0 remain highly competitive with DPA on the numerical component across nearly all simulation settings, while substantially outperforming AE and VAE. Figure \ref{fig:simulation_energy_cat_grid} demonstrates a much larger advantage for CI-Reg and CI-Reg-0 on the categorical component, particularly as the number of categories increases. In many settings, CI-Reg and CI-Reg-0 achieve dramatically smaller categorical energy distances than AE, VAE, and DPA, indicating that the proposed split-decoder framework together with the likelihood-based categorical objective substantially improves recovery of categorical conditional distributions.

\begin{figure}[htbp]
\centering
\includegraphics[width=\textwidth]{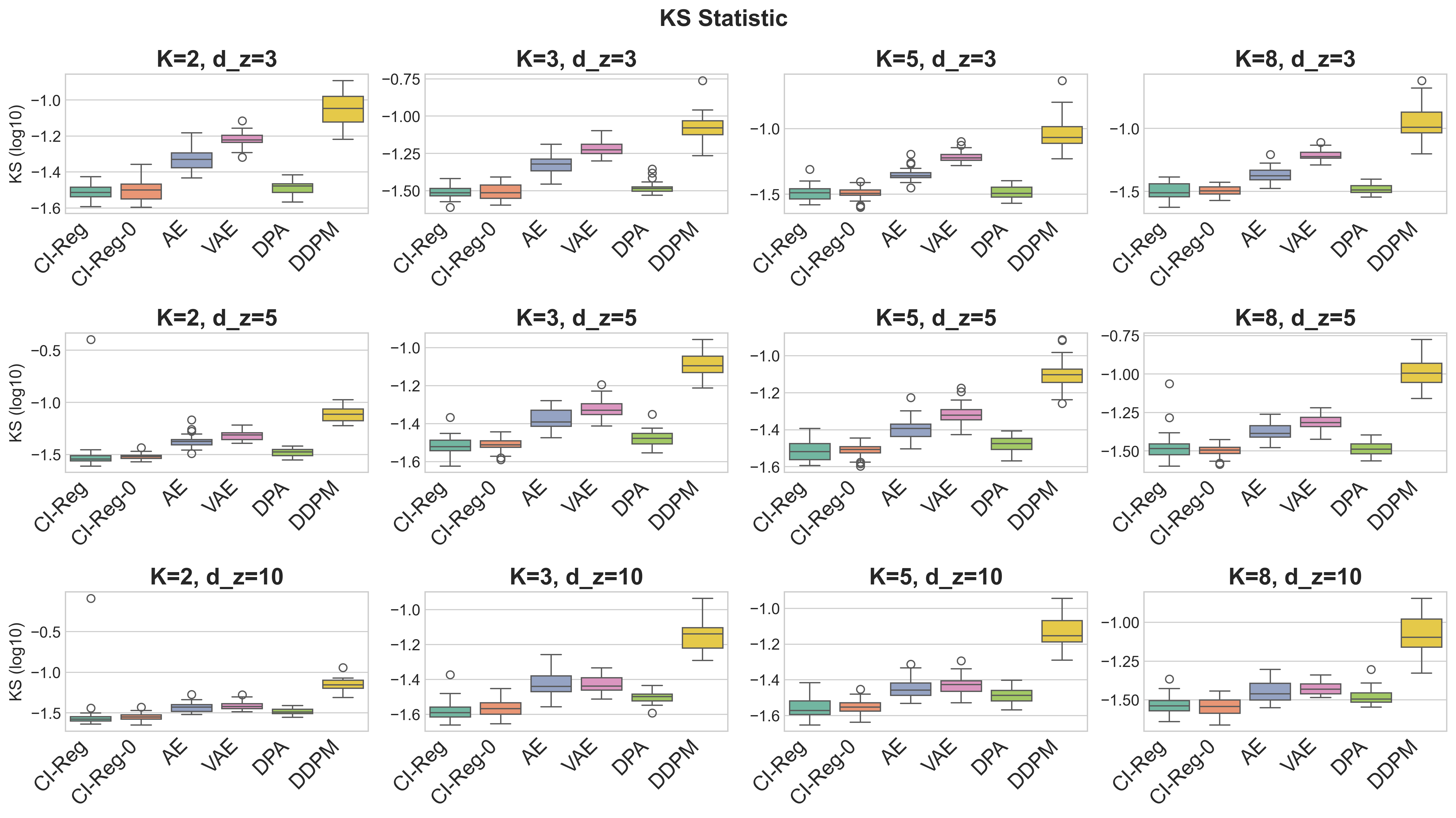}
\caption{
KS statistic for numerical components (column-wise, log-scale) between the true and learned conditional distributions across simulation settings. Lower values indicate better conditional distribution recovery. CI-Reg is consistently among the best-performing methods.
}
\label{fig:simulation_KS_grid}
\end{figure}

\begin{figure}[htbp]
\centering
\includegraphics[width=\textwidth]{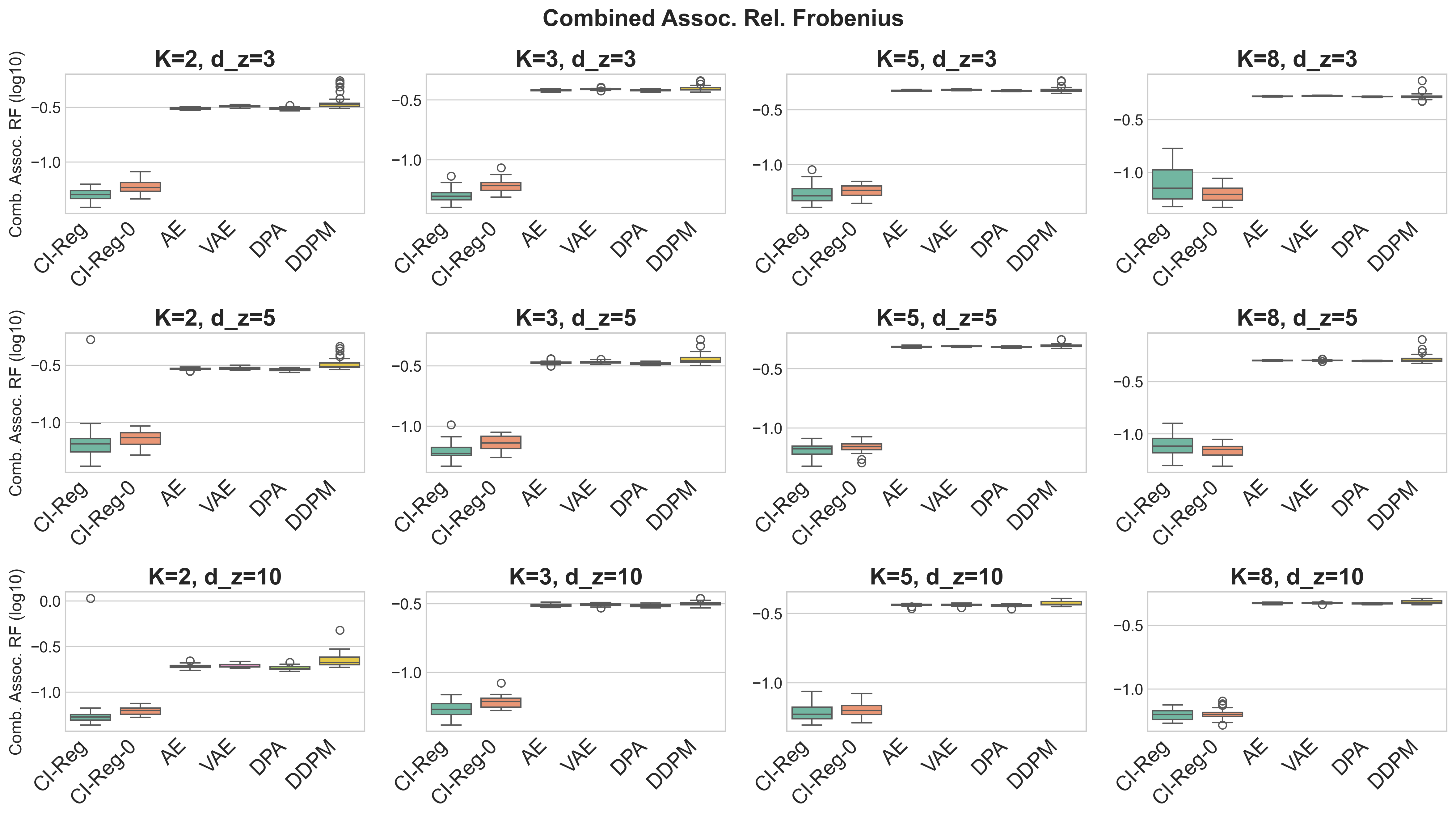}
\caption{
Relative Frobenius norm of the correlation matrix (log-scale) comparing covariance structure between the true and learned conditional distributions. Lower values indicate better recovery of second-order structure. CI-Reg performs competitively across all settings.
}
\label{fig:simulation_combined_assoc_relative_frobenius_comparison_grid}
\end{figure}

\begin{figure}[htbp]
\centering
\includegraphics[width=\textwidth]{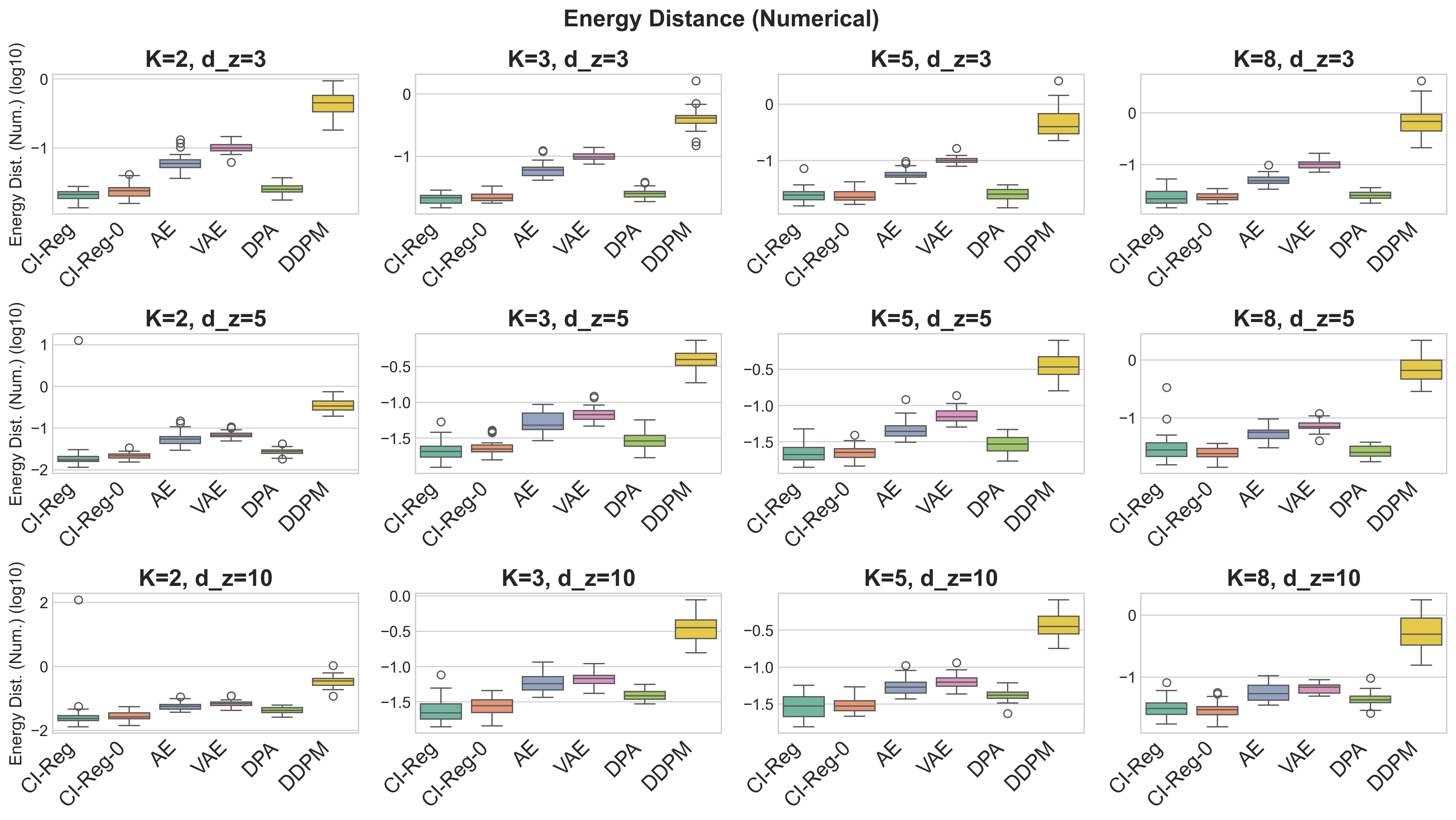}
\caption{
Energy distance for the numerical part (log-scale) between the true and learned conditional distributions across simulation settings. Lower values indicate better conditional distribution recovery.
}
\label{fig:simulation_energy_cont_grid}
\end{figure}

\begin{figure}[htbp]
\centering
\includegraphics[width=\textwidth]{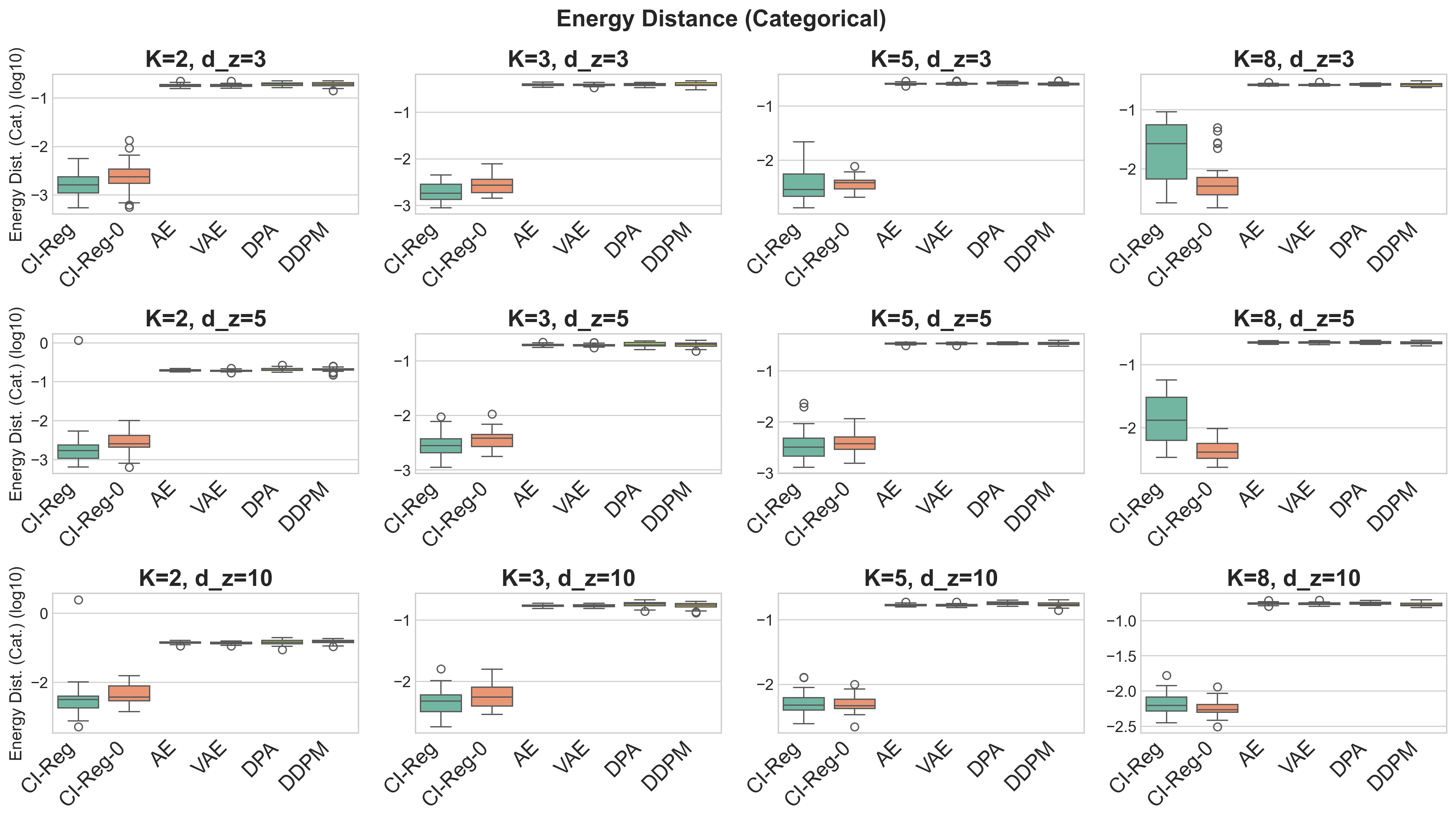}
\caption{
Energy distance for the categorical part between the true and learned conditional distributions across simulation settings. Lower values indicate better conditional distribution recovery.
}
\label{fig:simulation_energy_cat_grid}
\end{figure}

\subsection{Sensitivity analysis of $\lambda_{\mathrm{CInd}}$}
\label{appendix:sensitivity_original_simulation}

To better understand the effect of the conditional independence regularizer, we conduct a sensitivity analysis with respect to the regularization parameter $\lambda_{\mathrm{CInd}}$. The motivation is that conditional independence should be viewed as a soft structural inductive bias rather than a hard modeling assumption. The sensitivity analysis is performed on the simulation settings described in Section \ref{subsec:Simulations} and Appendix \ref{appendix:Experimental details-Simulations}. To provide a focused sensitivity analysis, we consider two representative scenarios: $K=2$, $d_z=3$ and $K=8$, $d_z=10$, corresponding to the simplest and most challenging settings considered in the original simulation study, respectively. For each scenario, we train the proposed method using $\lambda_{\mathrm{CInd}} \in \{0, 0.01, 0.03, 0.1, 0.3, 0.9\}$, while keeping all other hyperparameters fixed. Please note that the constraint $\lambda_{\mathrm{num}}+\lambda_{\mathrm{cat}}+\lambda_{\mathrm{CInd}}=1$ is imposed only for normalization and is not required for model training, since multiplying all three weights by a common positive constant leaves the optimizer unchanged. The number of independent repetitions is 10.

\begin{figure}[p]
\centering

\begin{subfigure}{0.48\textwidth}
\centering
\includegraphics[width=\textwidth]{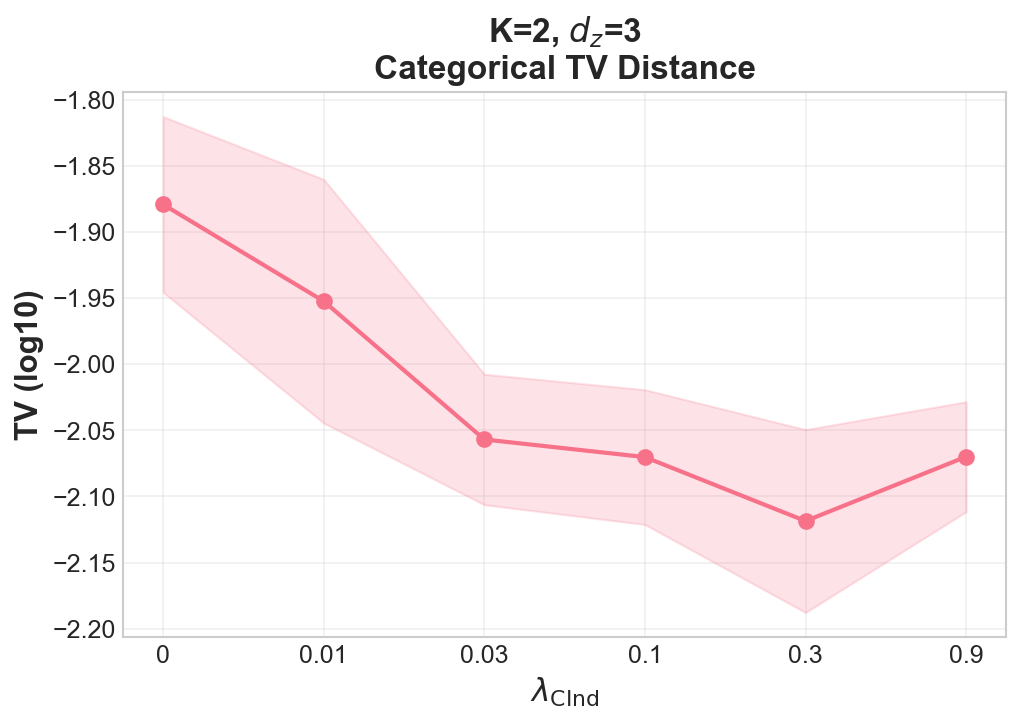}
\caption{Categorical TV distance}
\end{subfigure}
\hfill
\begin{subfigure}{0.48\textwidth}
\centering
\includegraphics[width=\textwidth]{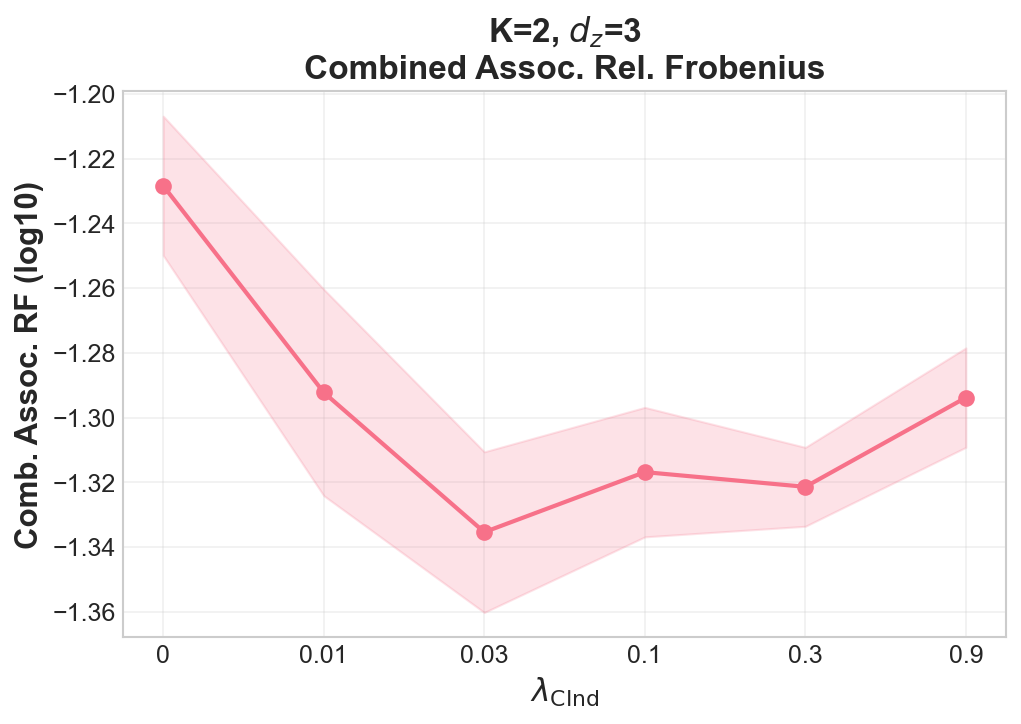}
\caption{Combined association relative Frobenius error}
\end{subfigure}

\vspace{0.5em}

\begin{subfigure}{0.48\textwidth}
\centering
\includegraphics[width=\textwidth]{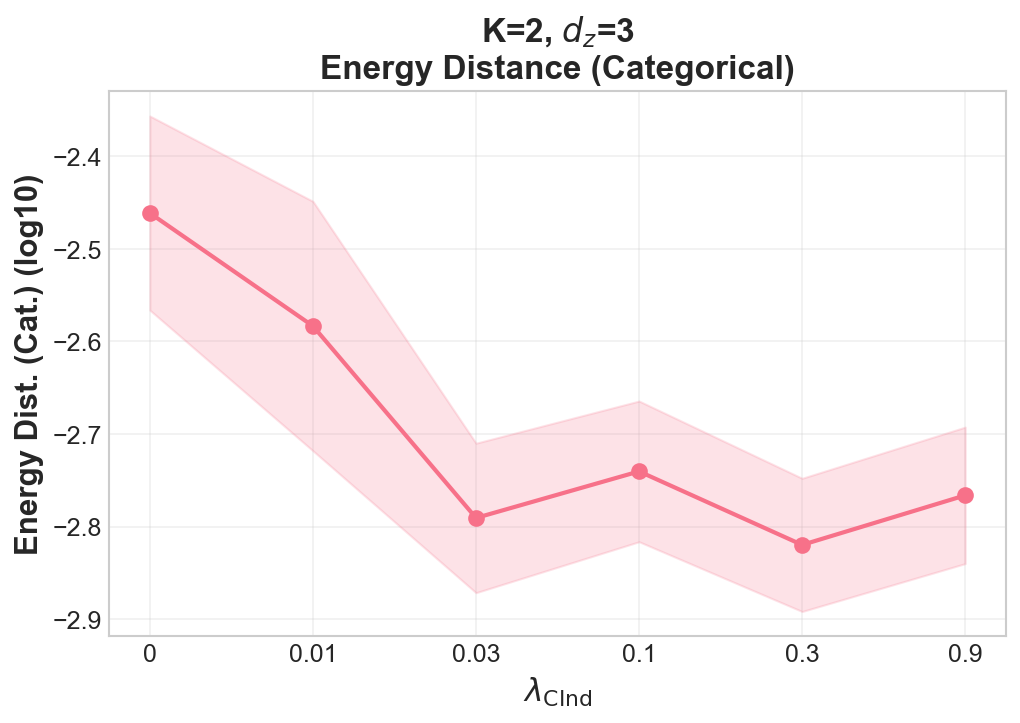}
\caption{Categorical energy distance}
\end{subfigure}
\hfill
\begin{subfigure}{0.48\textwidth}
\centering
\includegraphics[width=\textwidth]{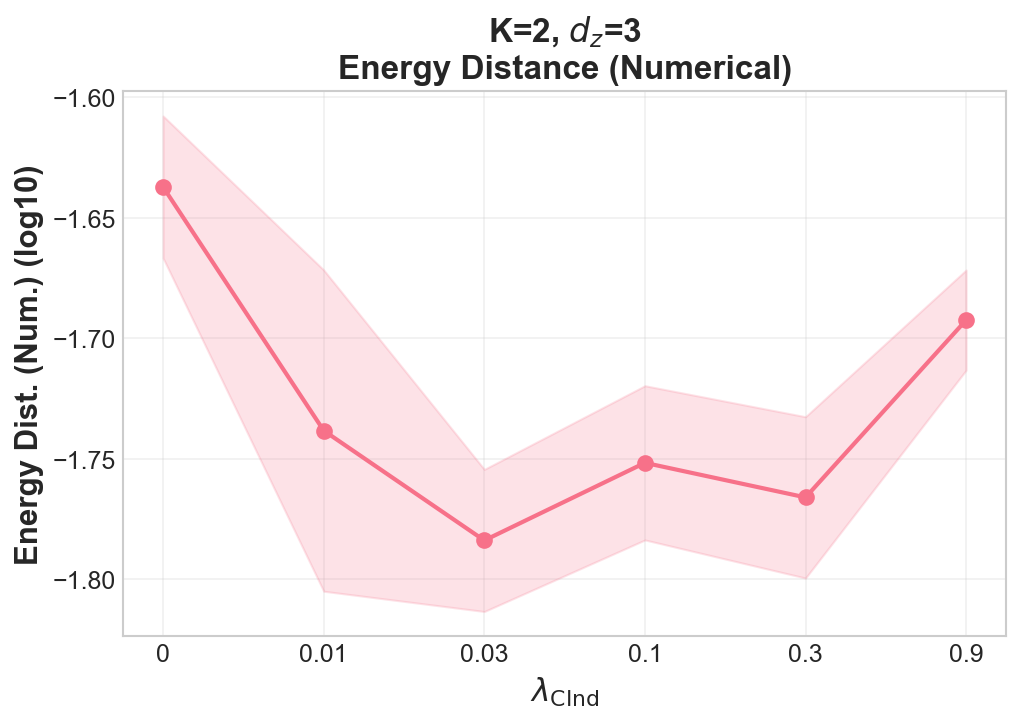}
\caption{Numerical energy distance}
\end{subfigure}

\vspace{0.5em}

\begin{subfigure}{0.48\textwidth}
\centering
\includegraphics[width=\textwidth]{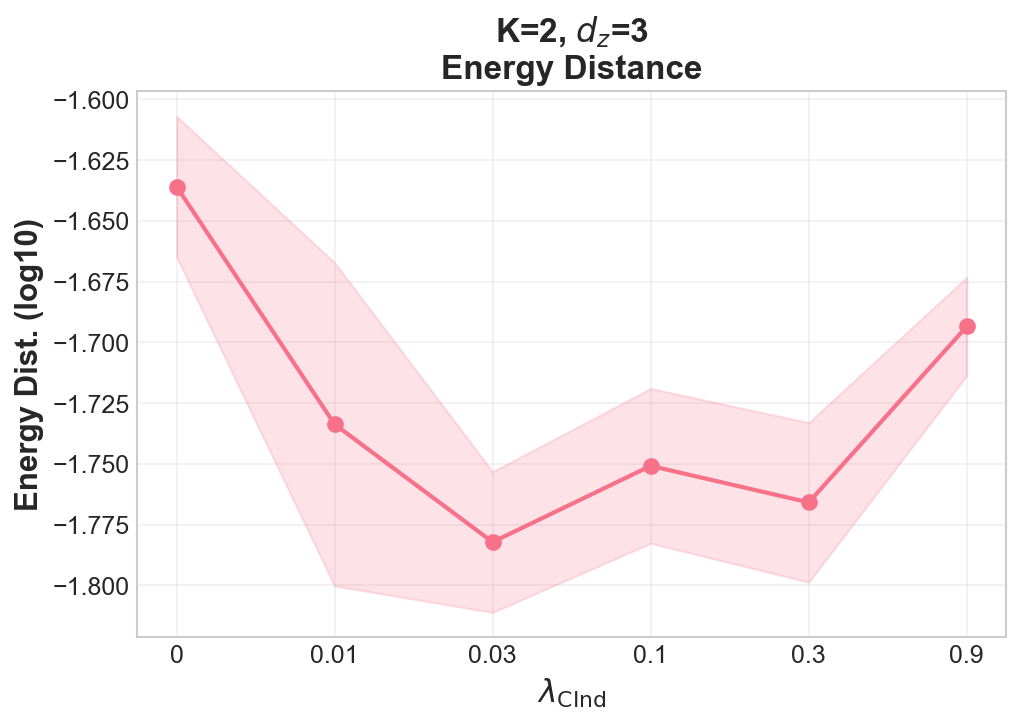}
\caption{Overall energy distance}
\end{subfigure}
\hfill
\begin{subfigure}{0.48\textwidth}
\centering
\includegraphics[width=\textwidth]{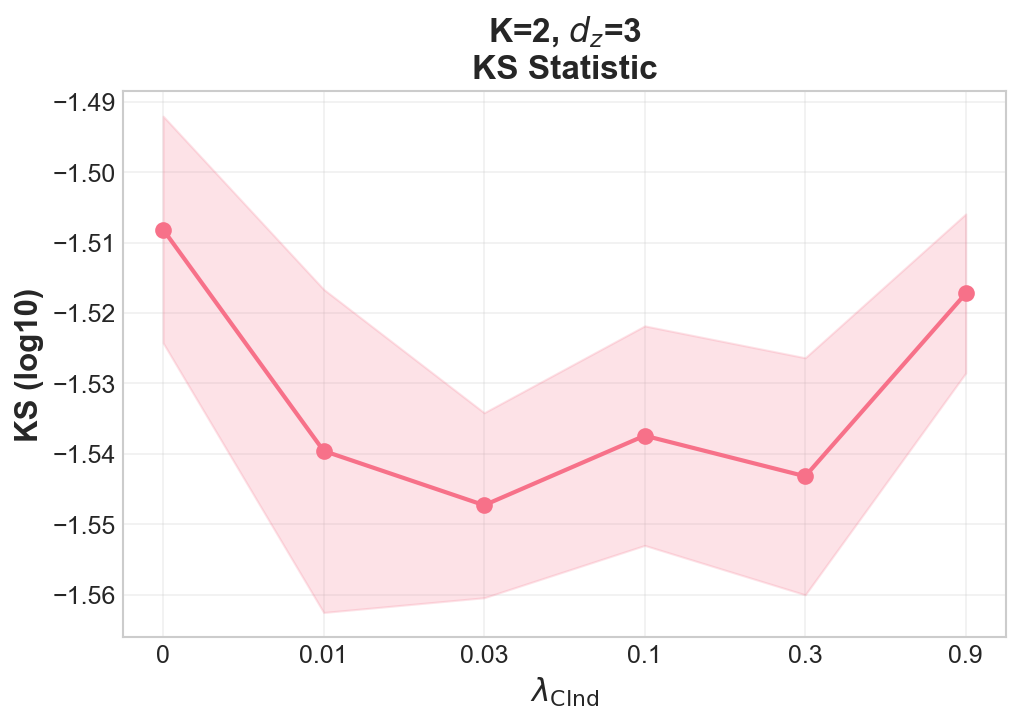}
\caption{Numerical KS statistic}
\end{subfigure}

\caption{
Sensitivity analysis with respect to $\lambda_{\mathrm{CInd}}$ for the setting $K=2$, $d_z=3$. Lower values indicate better performance. Solid lines show the mean across 10 independent repetitions and shaded regions indicate $\pm$ one standard deviation.
}
\label{fig:sensitivity_lambda_cind_K2}
\end{figure}

\begin{figure}[p]
\centering

\begin{subfigure}{0.48\textwidth}
\centering
\includegraphics[width=\textwidth]{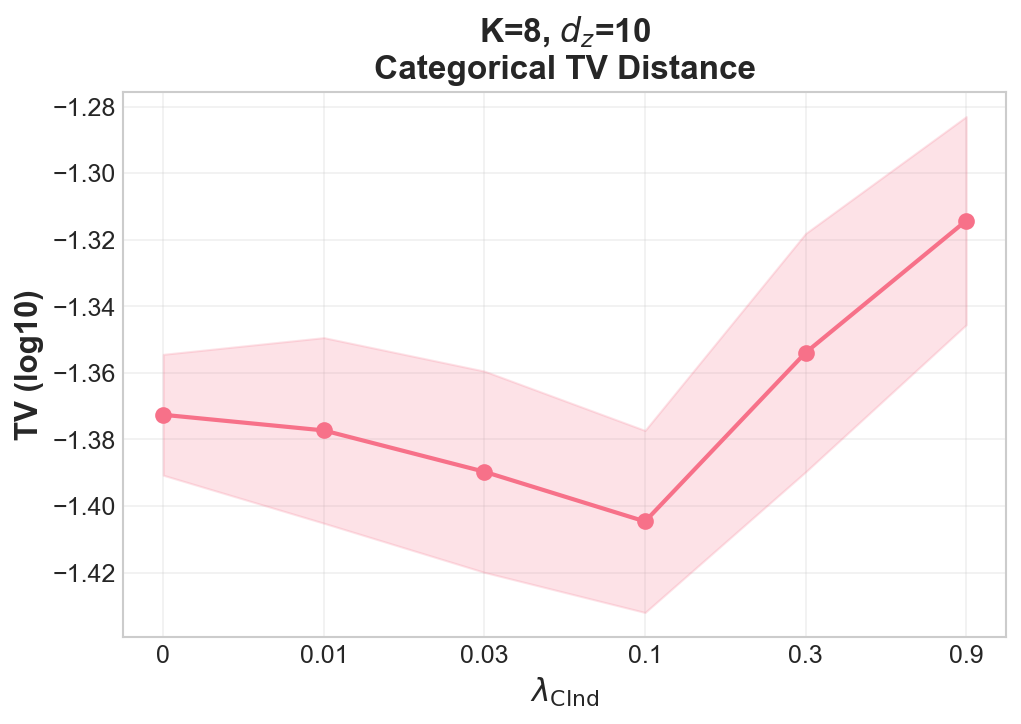}
\caption{Categorical TV distance}
\end{subfigure}
\hfill
\begin{subfigure}{0.48\textwidth}
\centering
\includegraphics[width=\textwidth]{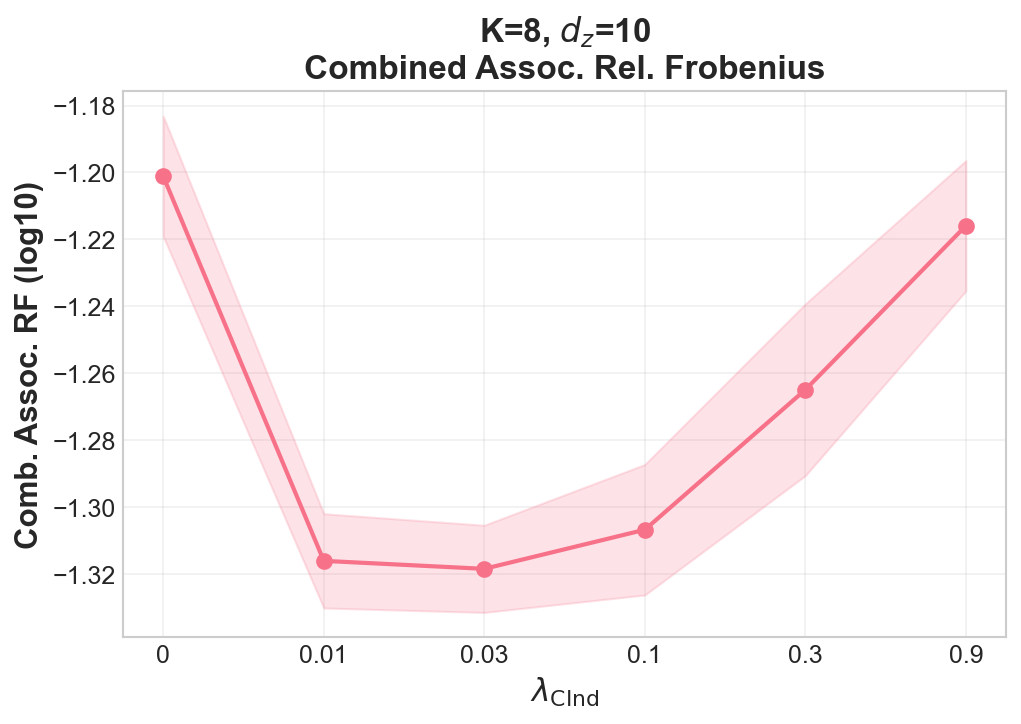}
\caption{Combined association relative Frobenius error}
\end{subfigure}

\vspace{0.5em}

\begin{subfigure}{0.48\textwidth}
\centering
\includegraphics[width=\textwidth]{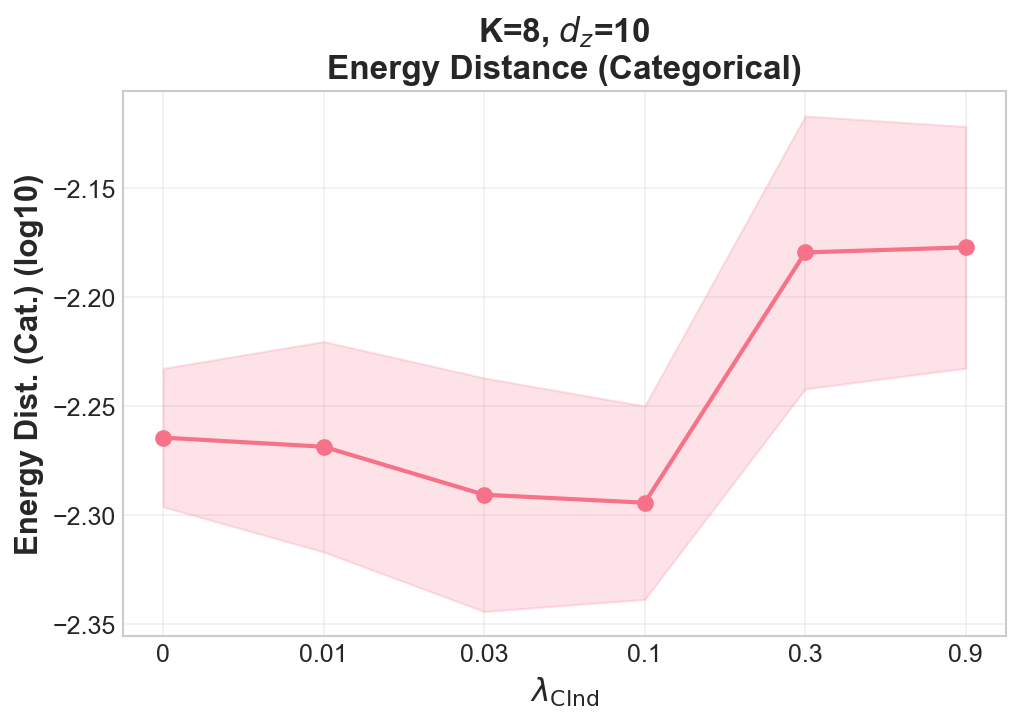}
\caption{Categorical energy distance}
\end{subfigure}
\hfill
\begin{subfigure}{0.48\textwidth}
\centering
\includegraphics[width=\textwidth]{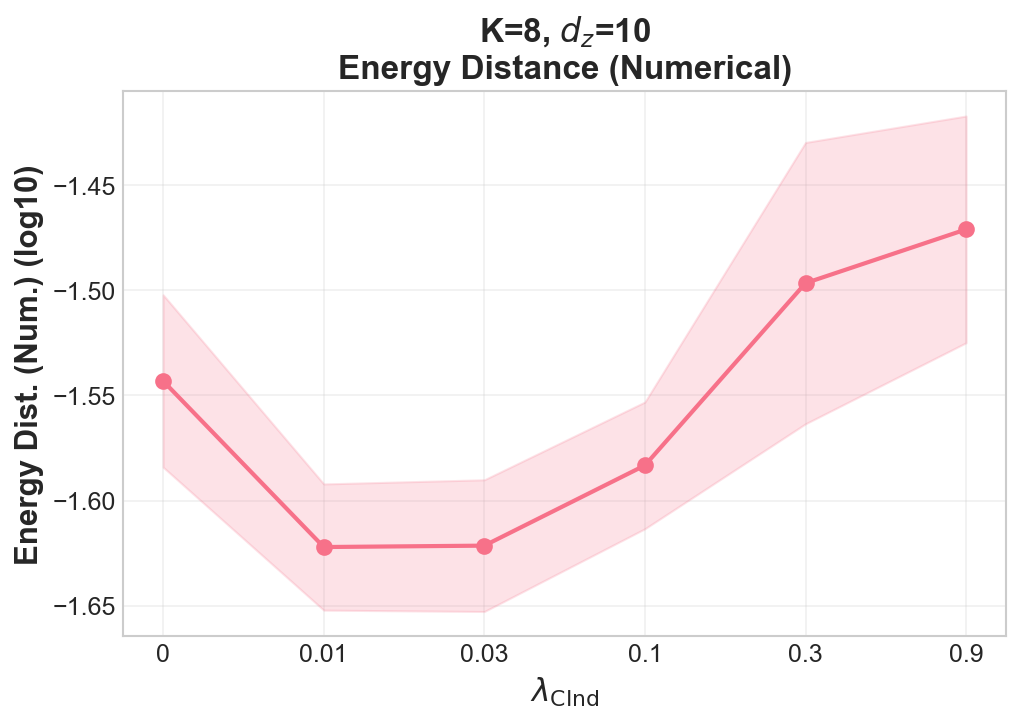}
\caption{Numerical energy distance}
\end{subfigure}

\vspace{0.5em}

\begin{subfigure}{0.48\textwidth}
\centering
\includegraphics[width=\textwidth]{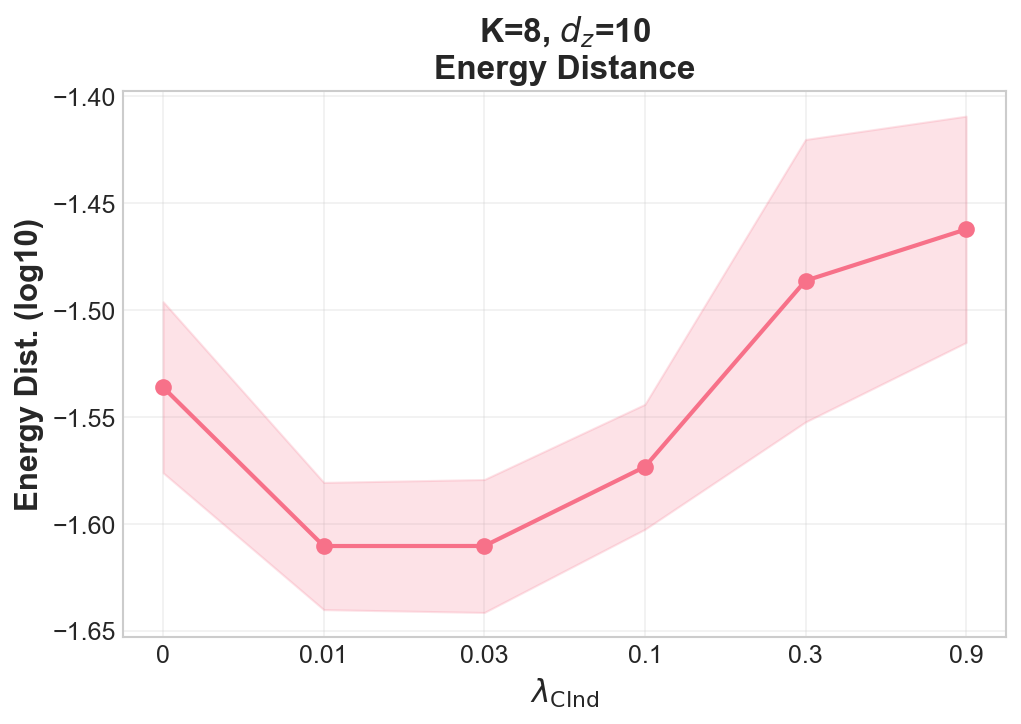}
\caption{Overall energy distance}
\end{subfigure}
\hfill
\begin{subfigure}{0.48\textwidth}
\centering
\includegraphics[width=\textwidth]{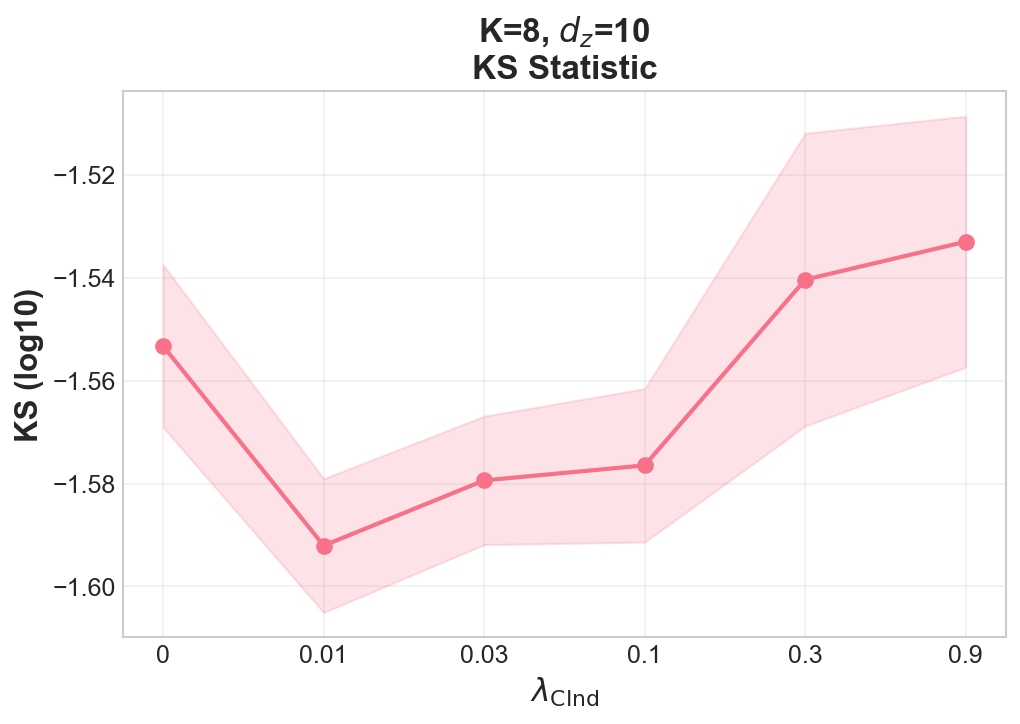}
\caption{Numerical KS statistic}
\end{subfigure}

\caption{
Sensitivity analysis with respect to $\lambda_{\mathrm{CInd}}$ for the setting $K=8$, $d_z=10$. Lower values indicate better performance. Solid lines show the mean across 10 independent repetitions and shaded regions indicate $\pm$ one standard deviation.
}
\label{fig:sensitivity_lambda_cind_K8}
\end{figure}

Figures~\ref{fig:sensitivity_lambda_cind_K2} and \ref{fig:sensitivity_lambda_cind_K8} summarize the effect of $\lambda_{\mathrm{CInd}}$ across several evaluation metrics. Overall, the results indicate that the proposed framework is relatively robust to the choice of $\lambda_{\mathrm{CInd}}$. Across both simulation settings, performance varies smoothly as the regularization strength changes, and a broad range of moderate values yields similar performance.

Several consistent patterns emerge. Introducing a moderate amount of conditional independence regularization generally improves categorical distribution recovery, as measured by both the categorical TV distance and the categorical energy distance. Moderate values of $\lambda_{\mathrm{CInd}}$ also tend to improve preservation of the mixed-type dependence structure, as measured by the combined association relative Frobenius error. At the same time, numerical reconstruction metrics, including the numerical energy distance and KS statistic, remain relatively stable over a wide range of regularization strengths.

Overall, these results suggest that the conditional independence regularizer acts as a useful structural inductive bias when applied moderately, while the proposed method is not overly sensitive to the precise choice of $\lambda_{\mathrm{CInd}}$.

\subsection{Additional structured-factor simulation study}
\label{appendix:structured_factor_simulation}

\paragraph{Simulation design.} We consider a more challenging simulation study where the synthetic mixed-type latent variable model has a structured factor partition. Let $Z^* \in \mathbb{R}^{d_z}$, $d_z = 15$, be the true latent factor, sampled from a 3-component Gaussian mixture model with mixing proportions $(0.1, 0.2, 0.7)$, cluster separation $5.0$, and unit within-cluster variance.

The 15 latent factors are partitioned into three groups of five:
\begin{itemize}
    \item $Z^*_{\mathrm{num}} = Z^*_{1:5}$: factors that affect \emph{only}
          the numerical variables;
    \item $Z^*_{\mathrm{cat}} = Z^*_{6:10}$: factors that affect \emph{only}
          the categorical variables;
    \item $Z^*_{\mathrm{both}} = Z^*_{11:15}$: factors that affect
          \emph{both} numerical and categorical variables.
\end{itemize}
The numerical vector is generated as
\[
    X^{\mathrm{num}} = b + A_1 Z^* + \sin(A_2 Z^*) + \varepsilon,
    \quad \varepsilon \sim \mathcal{N}(0, \sigma^2 I),
\]
where the loading matrices $A_1, A_2 \in \mathbb{R}^{d_{\mathrm{num}} \times d_z}$
have their columns corresponding to $Z^*_{\mathrm{cat}}$ (indices 6--10) set
to zero, so that the categorical-only factors do not influence the numerical
features. For each categorical variable $j = 1, \ldots, d_{\mathrm{cat}}$,
\[
    X^{\mathrm{cat}}_j \sim \mathrm{Categorical}
    \bigl(\mathrm{softmax}(W_j Z^* + b_j)\bigr),
\]
where the weight matrices $W_j \in \mathbb{R}^{d_z \times K_j}$ have their
rows corresponding to $Z^*_{\mathrm{num}}$ (indices 1--5) set to zero, so
that the numerical-only factors do not influence the categorical features.
The observed sample is
$X = (X^{\mathrm{num}}, X^{\mathrm{cat}}_1, \ldots, X^{\mathrm{cat}}_{d_{\mathrm{cat}}})$.

We consider encoder dimensions $d \in \{10, 15, 20\}$ to investigate the effect of latent dimension misspecification. Recall that the true latent dimension in the data-generating process is $d_z = 15$. Consequently, $d=10$ corresponds to an underestimated representation dimension, forcing the encoder to compress information into a lower-dimensional space than the true latent structure. In contrast, $d=15$ matches the true latent dimension and therefore represents the correctly specified setting. Finally, $d=20$ corresponds to an overestimated representation dimension, providing the encoder with additional capacity beyond that required by the data-generating process. Evaluating performance across these three settings allows us to assess the robustness of the competing methods to both under- and over-specification of the representation dimension, a situation commonly encountered in practice where the true latent dimension is unknown.

The DGP parameters $b$, $A_1$, $A_2$, $\{W_j\}$, and $\{b_j\}$ are generated once and held fixed across all 10 repetitions and all three encoder dimensions $d \in \{10, 15, 20\}$. Specifically, the bias vector $b \in \mathbb{R}^{d_{\mathrm{num}}}$ is drawn from $\mathcal{N}(0, 4I)$; the non-zero columns of the linear loading matrix
$A_1 \in \mathbb{R}^{d_{\mathrm{num}} \times d_z}$ are drawn from $\mathcal{N}(0, I)$; and the non-zero columns of the nonlinear loading matrix $A_2 \in \mathbb{R}^{d_{\mathrm{num}} \times d_z}$ are drawn from $\mathcal{N}(0, 0.25I)$. For each categorical feature $j$, the non-zero rows of the weight matrix $W_j \in \mathbb{R}^{d_z \times K_j}$ are drawn from $\mathcal{N}(0, 0.25I)$, and the bias vector $b_j \in \mathbb{R}^{K_j}$ is drawn from $\mathcal{N}(0, I)$. The noise standard deviation is fixed at $\sigma = 1.0$. Each dataset consists of $n = 2{,}000$ samples with $d_{\mathrm{num}} = 20$ numerical and $d_{\mathrm{cat}} = 10$ categorical variables, each taking $K = 3$ values. Within each repetition, the latent
variables $Z^*$ are resampled from the GMM and fresh noise is drawn for both numerical and categorical feature generation.

For CI-Reg models, we consider $\lambda_{\mathrm{num}} = 0.06$, $\lambda_{\mathrm{cat}} = 0.04$, and $\lambda_{\mathrm{CInd}} \in \left\{0, 0.01, 0.03, 0.1, 0.3, 0.9 \right\}$. Please note again that the constraint $\lambda_{\mathrm{num}}+\lambda_{\mathrm{cat}}+\lambda_{\mathrm{CInd}}=1$ is imposed only for normalization but is not necessary for model training since multiplying all three weights by a common positive constant leaves the optimizer unchanged. For each considered $\lambda_{\mathrm{CInd}}$ value, we denote our proposed method as CI-Reg-$\lambda_{\mathrm{CInd}}$ (e.g., CI-Reg-0, CI-Reg-0.01, CI-Reg-0.03, etc.).

\begin{figure}[p]
\centering

\begin{subfigure}{\textwidth}
    \centering
    \includegraphics[width=0.8\textwidth]{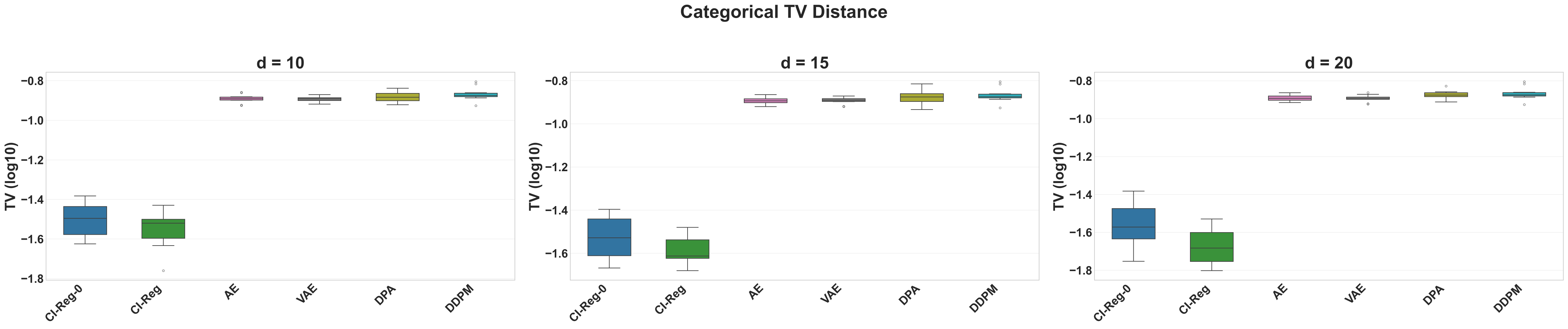}
    \caption{Categorical TV distance. Lower is better.}
\end{subfigure}

\vspace{0.5em}

\begin{subfigure}{\textwidth}
    \centering
    \includegraphics[width=0.8\textwidth]{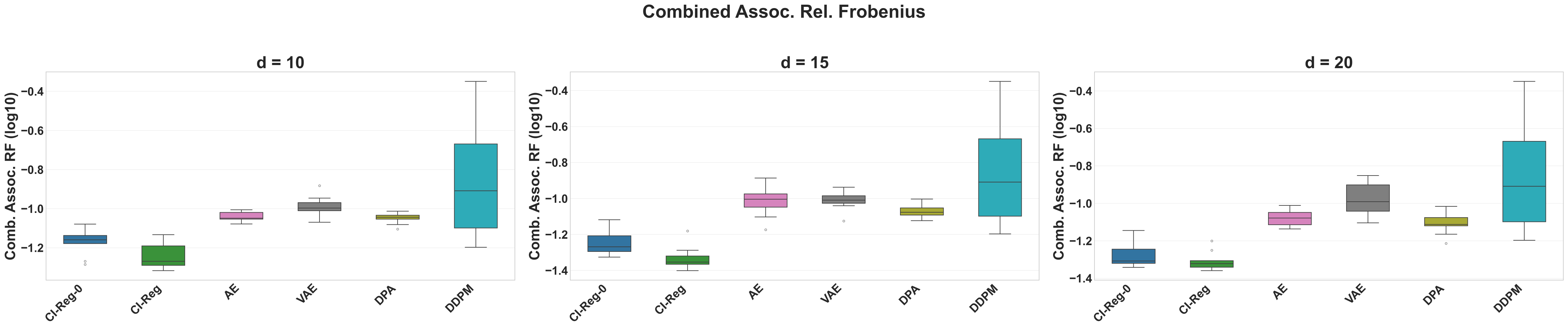}
    \caption{Combined association relative Frobenius error. Lower is better.}
\end{subfigure}

\vspace{0.5em}

\begin{subfigure}{\textwidth}
    \centering
    \includegraphics[width=0.8\textwidth]{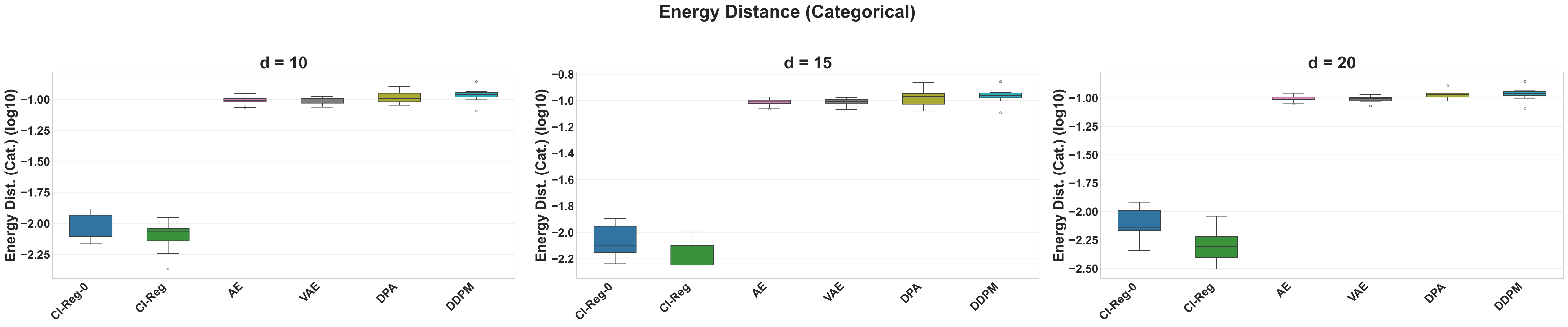}
    \caption{Categorical energy distance. Lower is better.}
\end{subfigure}

\vspace{0.5em}

\begin{subfigure}{\textwidth}
    \centering
    \includegraphics[width=0.8\textwidth]{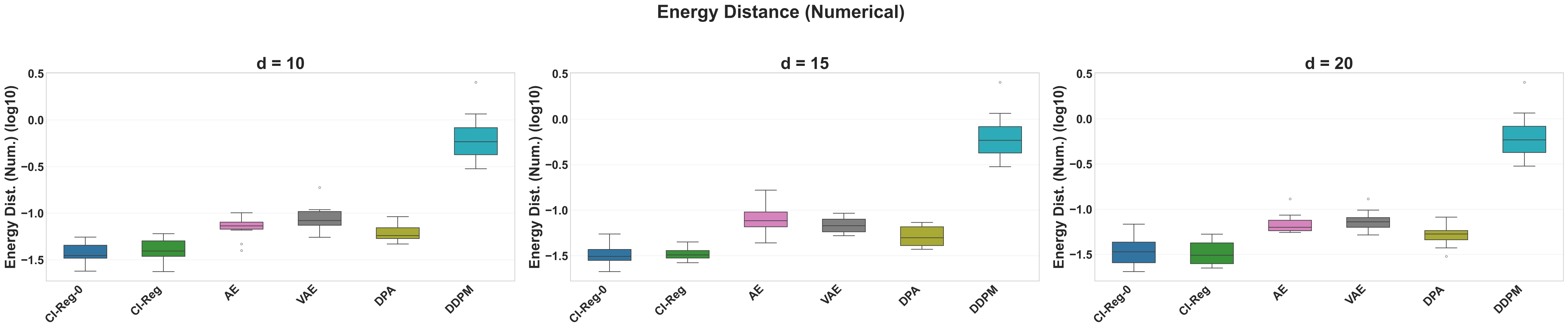}
    \caption{Numerical energy distance. Lower is better.}
\end{subfigure}

\vspace{0.5em}

\begin{subfigure}{\textwidth}
    \centering
    \includegraphics[width=0.8\textwidth]{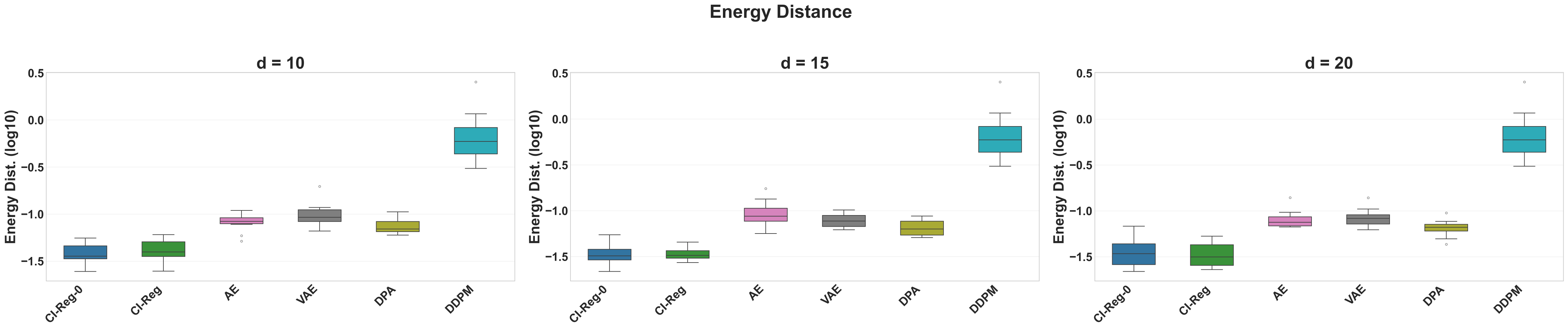}
    \caption{Overall energy distance. Lower is better.}
\end{subfigure}

\vspace{0.5em}

\begin{subfigure}{\textwidth}
    \centering
    \includegraphics[width=0.8\textwidth]{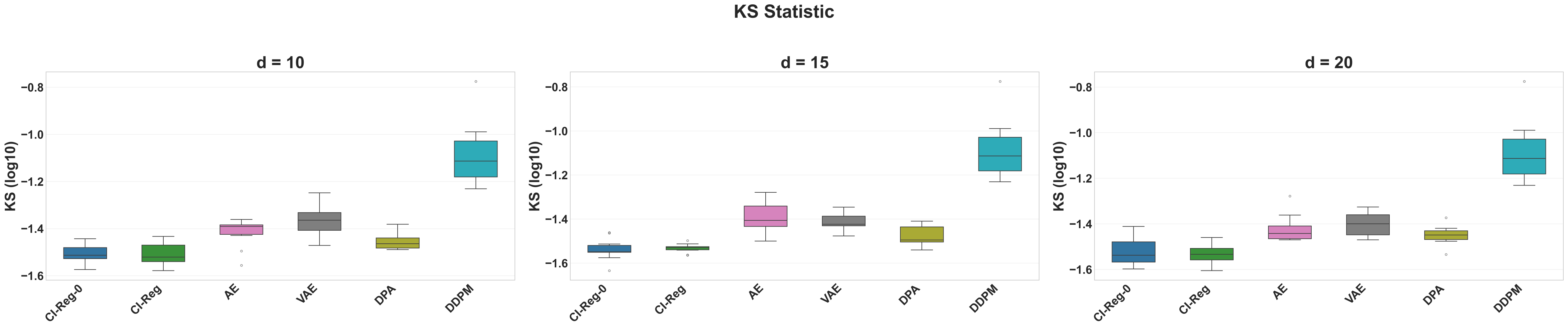}
    \caption{Numerical KS statistic. Lower is better.}
\end{subfigure}

\caption{
Comparison of CI-Reg, CI-Reg-0, AE, VAE, DPA, and DDPM on the structured-factor simulation study. Here CI-Reg denotes CI-Reg-0.03. Results are based on 10 independent repetitions.
}
\label{fig:rebuttal_main_comparison_new_simulation}
\end{figure}

Overall, the results in Figure \ref{fig:rebuttal_main_comparison_new_simulation} demonstrate that the proposed framework performs competitively across all evaluation metrics and encoder dimensions. Both CI-Reg and CI-Reg-0 substantially outperform AE, VAE, DPA, and DDPM on categorical distribution recovery, as measured by the categorical TV distance and categorical energy distance. At the same time, the proposed methods remain competitive on numerical reconstruction, achieving favorable numerical energy distances and KS statistics across all considered representation dimensions.

For preservation of mixed-type dependence structure, CI-Reg generally improves upon CI-Reg-0 in terms of the combined association relative Frobenius error, indicating the contribution of the conditional independence regularizer beyond the split-decoder architecture. Importantly, these qualitative conclusions remain consistent under under-specified ($d=10$), correctly specified ($d=15$), and over-specified ($d=20$) representation dimensions, suggesting that the proposed framework is robust to latent dimension misspecification.

\paragraph{Sensitivity analysis.}

To better understand the effect of the conditional independence regularizer, we further examine performance across a range of regularization strengths $\lambda_{\mathrm{CInd}} \in \{ 0, 0.01, 0.03, 0.1, 0.3, 0.9 \}$. Recall that $\lambda_{\mathrm{CInd}}=0$ corresponds exactly to CI-Reg-0.

\begin{figure}[htbp]
\centering

\begin{subfigure}{0.48\textwidth}
\centering
\includegraphics[width=\textwidth]{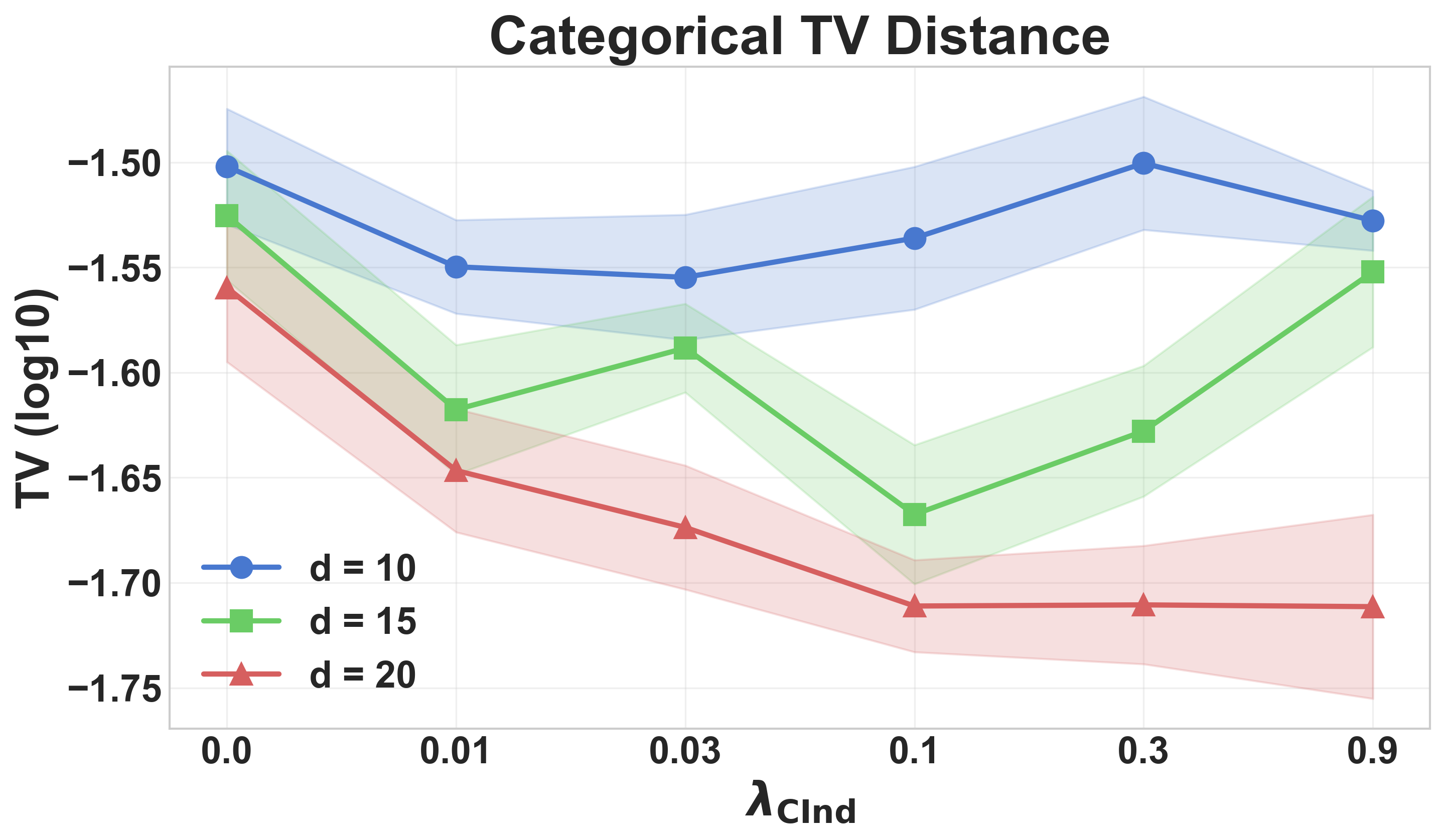}
\caption{Categorical TV distance.}
\end{subfigure}
\hfill
\begin{subfigure}{0.48\textwidth}
\centering
\includegraphics[width=\textwidth]{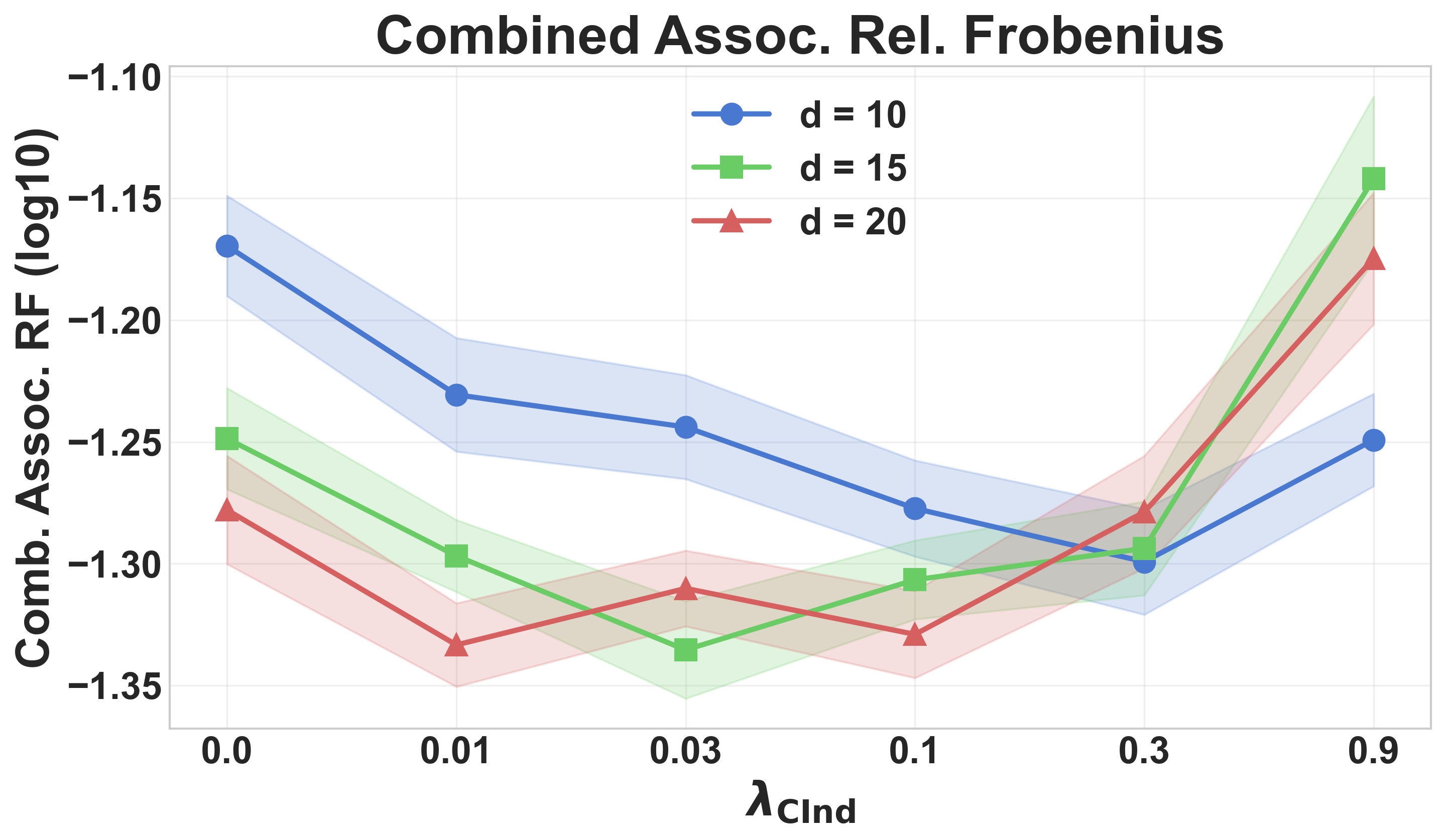}
\caption{Combined association relative Frobenius error.}
\end{subfigure}

\vspace{0.5em}

\begin{subfigure}{0.48\textwidth}
\centering
\includegraphics[width=\textwidth]{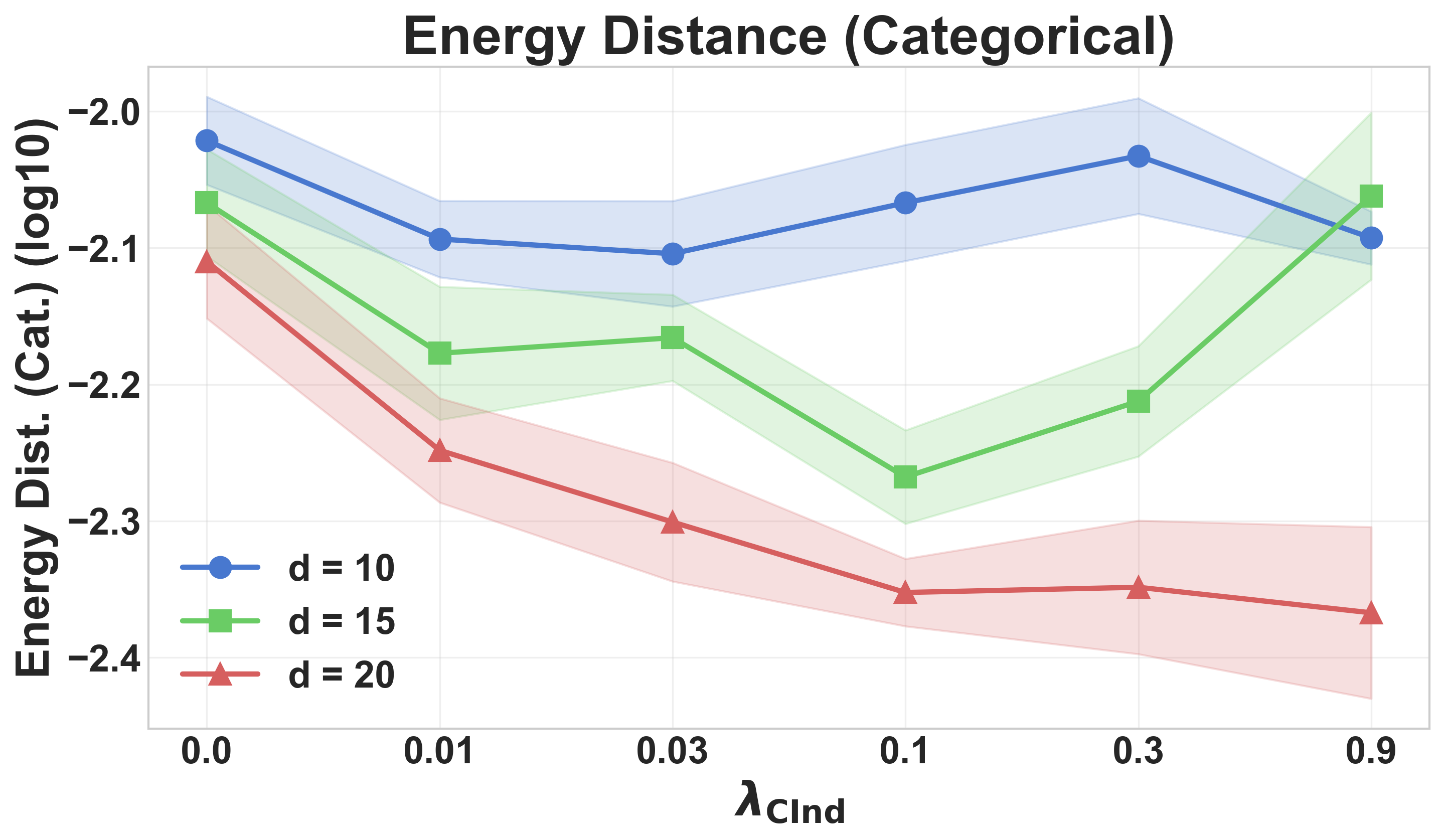}
\caption{Categorical energy distance.}
\end{subfigure}
\hfill
\begin{subfigure}{0.48\textwidth}
\centering
\includegraphics[width=\textwidth]{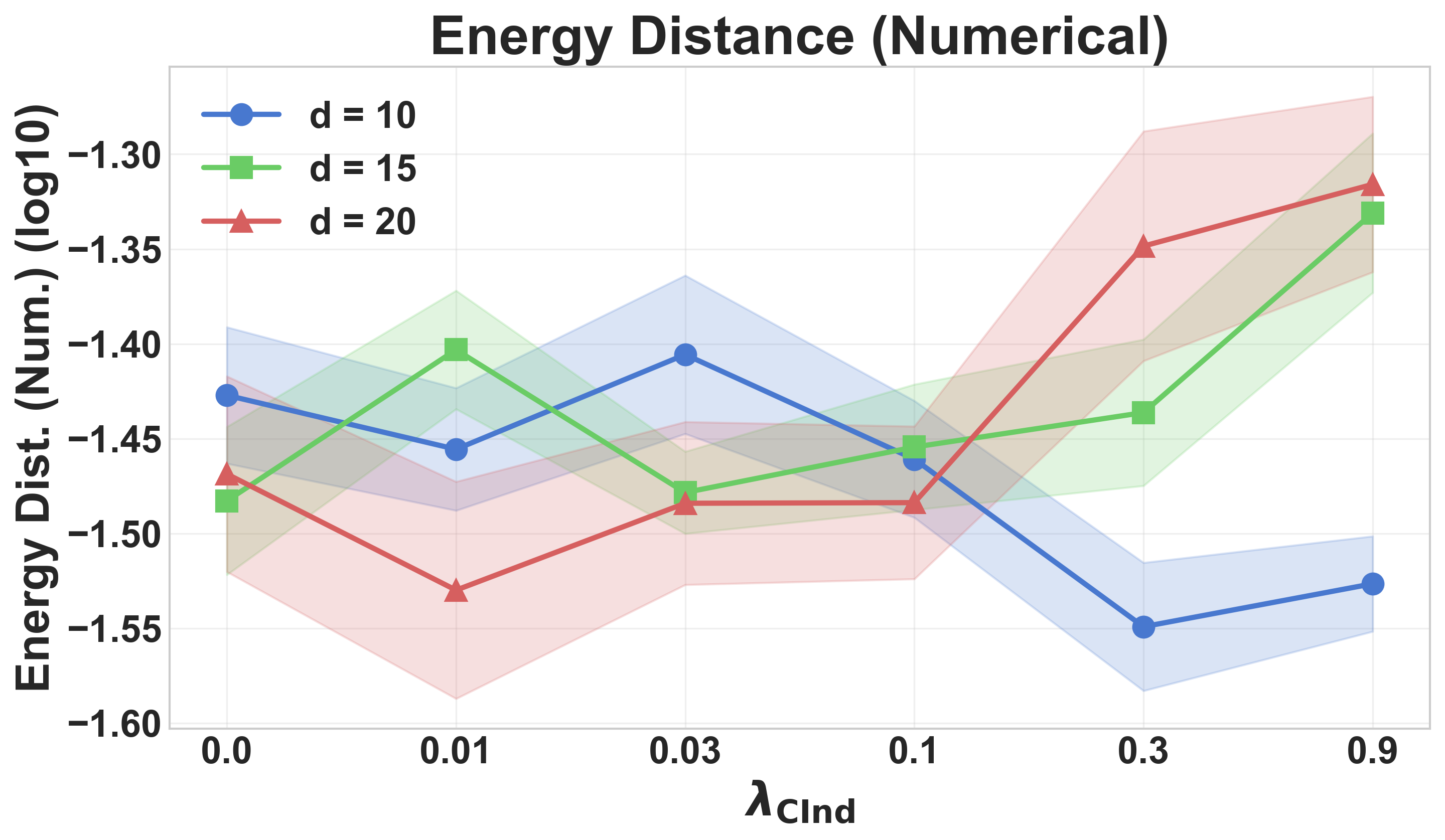}
\caption{Numerical energy distance.}
\end{subfigure}

\vspace{0.5em}

\begin{subfigure}{0.48\textwidth}
\centering
\includegraphics[width=\textwidth]{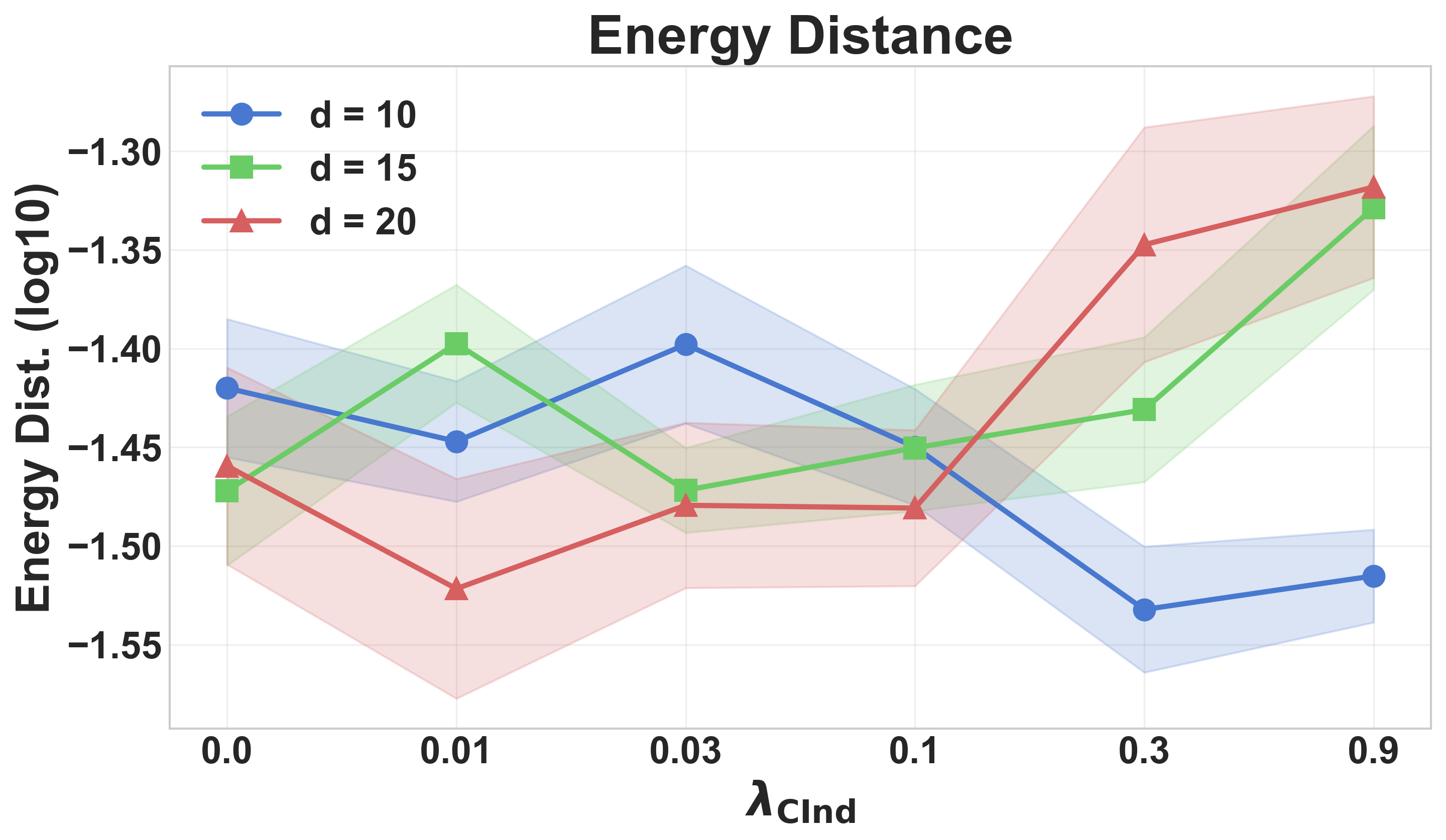}
\caption{Overall energy distance.}
\end{subfigure}
\hfill
\begin{subfigure}{0.48\textwidth}
\centering
\includegraphics[width=\textwidth]{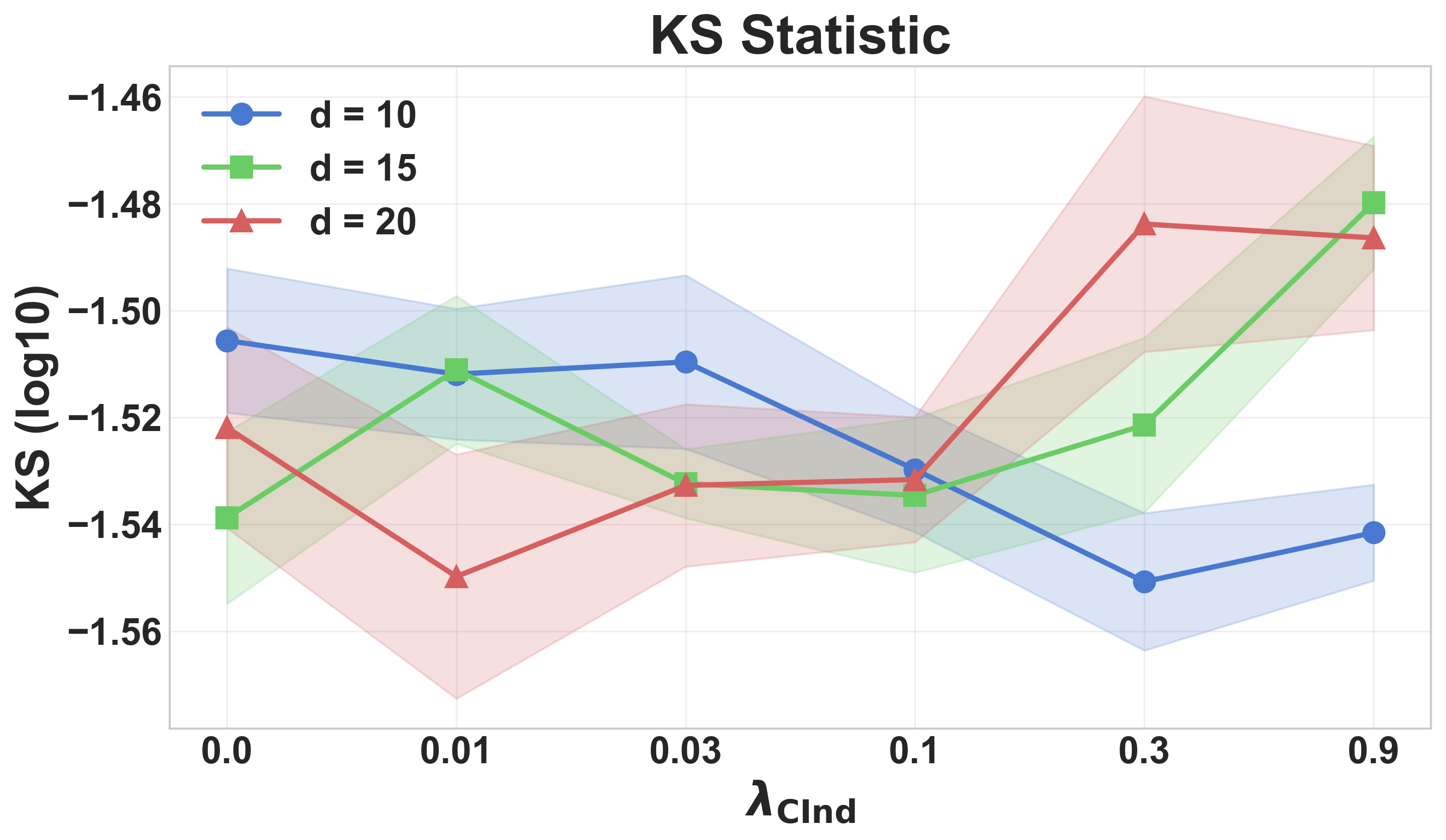}
\caption{Numerical KS statistic.}
\end{subfigure}

\caption{
Sensitivity analysis with respect to $\lambda_{\mathrm{CInd}}$ on the structured-factor simulation study. Lower values indicate better performance. Curves show the mean over 10 independent repetitions, with shaded regions indicating $\pm$ one standard deviation.
}
\label{fig:rebuttal_lambda_sensitivity_all_metrics}
\end{figure}

Figure~\ref{fig:rebuttal_lambda_sensitivity_all_metrics} further examines the effect of $\lambda_{\mathrm{CInd}}$ across all evaluation metrics. Overall, the results indicate that the proposed framework is relatively robust to the choice of $\lambda_{\mathrm{CInd}}$. Across all three encoder dimensions, performance varies smoothly as the regularization strength changes, and a broad range of moderate values ($0.01$--$0.1$, and in some cases $0.3$) yields similar performance.

Several consistent patterns emerge. First, introducing a moderate amount of conditional independence regularization improves categorical distribution recovery, as evidenced by both categorical TV distance and categorical energy distance. Second, moderate values of $\lambda_{\mathrm{CInd}}$ also improve preservation of the mixed-type dependence structure, as measured by the combined association relative Frobenius error. Third, numerical reconstruction metrics, including the numerical energy distance and KS statistic, remain relatively stable over a wide range of regularization strengths.

Overall, these results suggest that the conditional independence regularizer acts as a useful structural inductive bias when applied moderately, while the proposed method is not overly sensitive to the precise choice of $\lambda_{\mathrm{CInd}}$.

\begin{figure}[htbp]
\centering

\scriptsize
\begin{tabular}{cccccc}
CI-Reg-0 &
CI-Reg-0.01 &
CI-Reg-0.03 &
CI-Reg-0.10 &
CI-Reg-0.30 &
CI-Reg-0.90 \\[0.3em]

\includegraphics[width=0.14\textwidth]{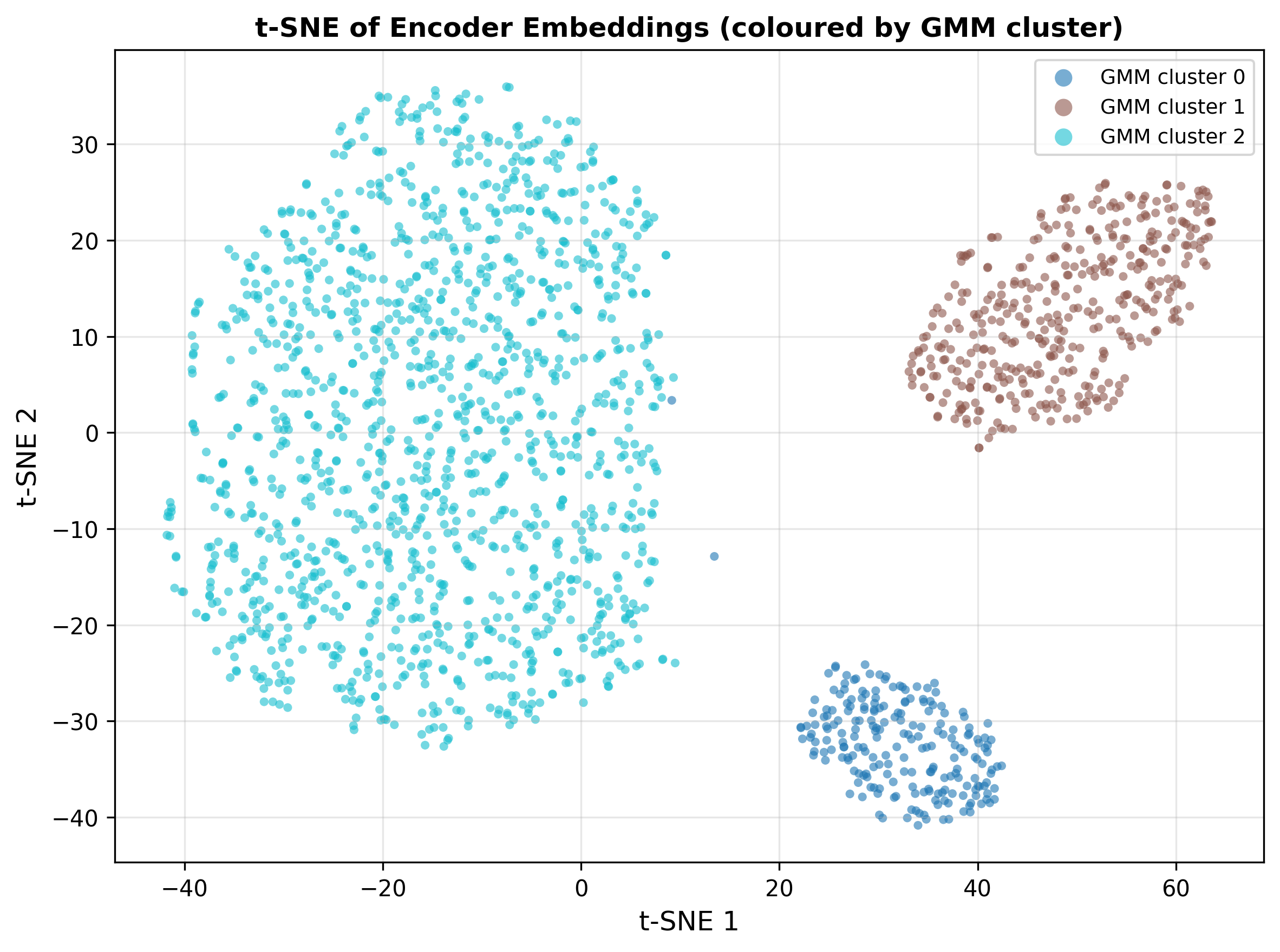} &
\includegraphics[width=0.14\textwidth]{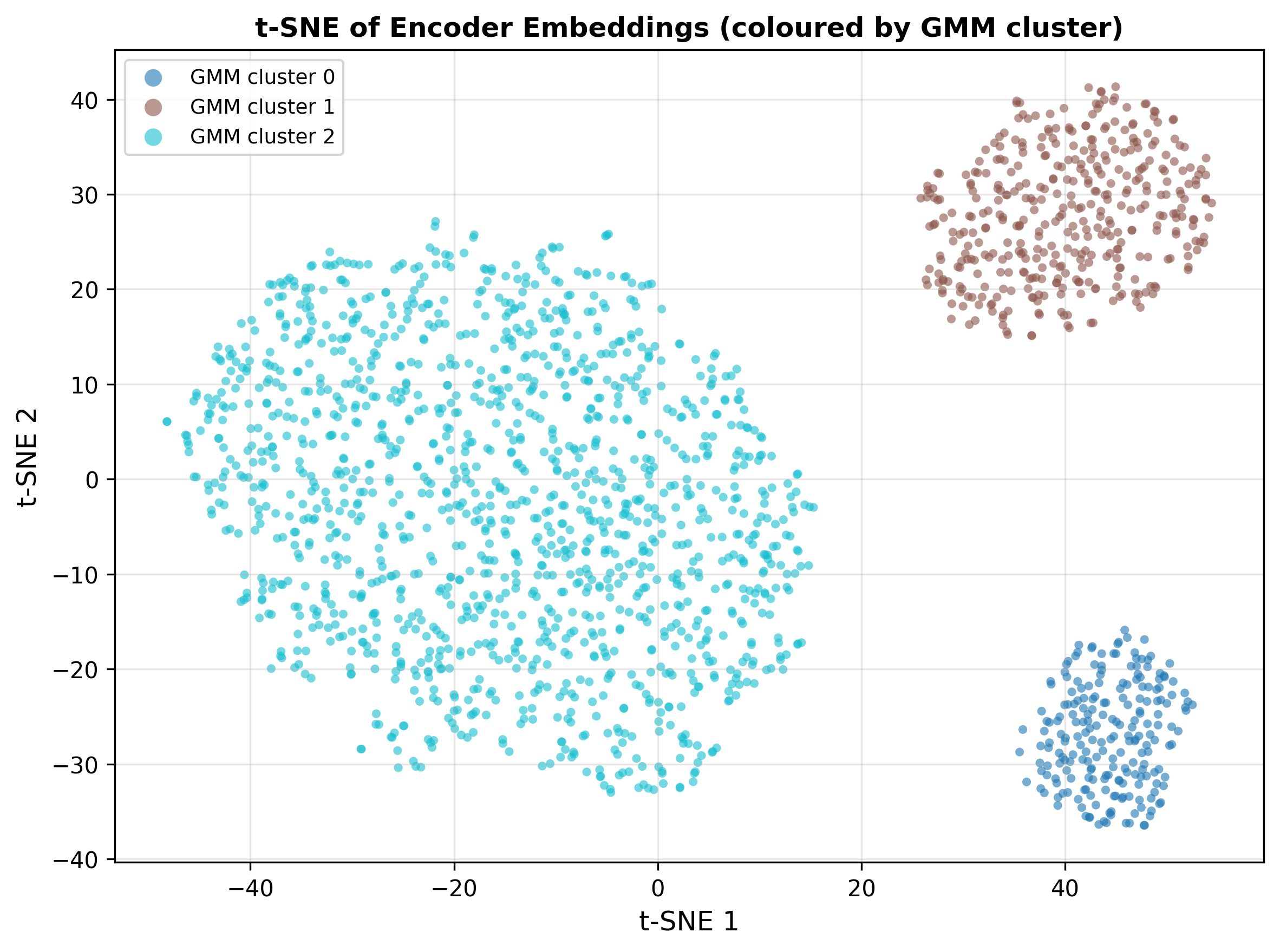} &
\includegraphics[width=0.14\textwidth]{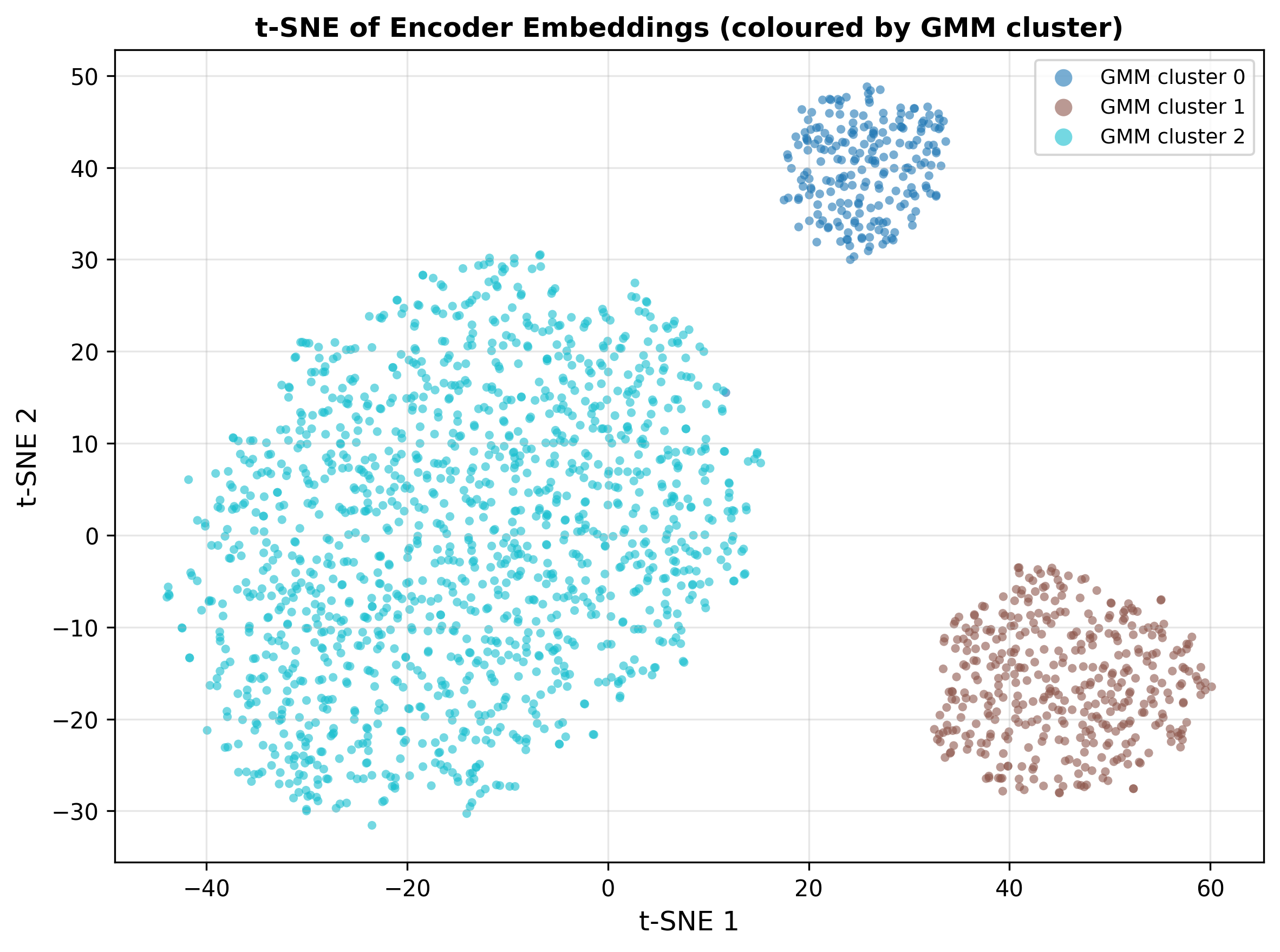} &
\includegraphics[width=0.14\textwidth]{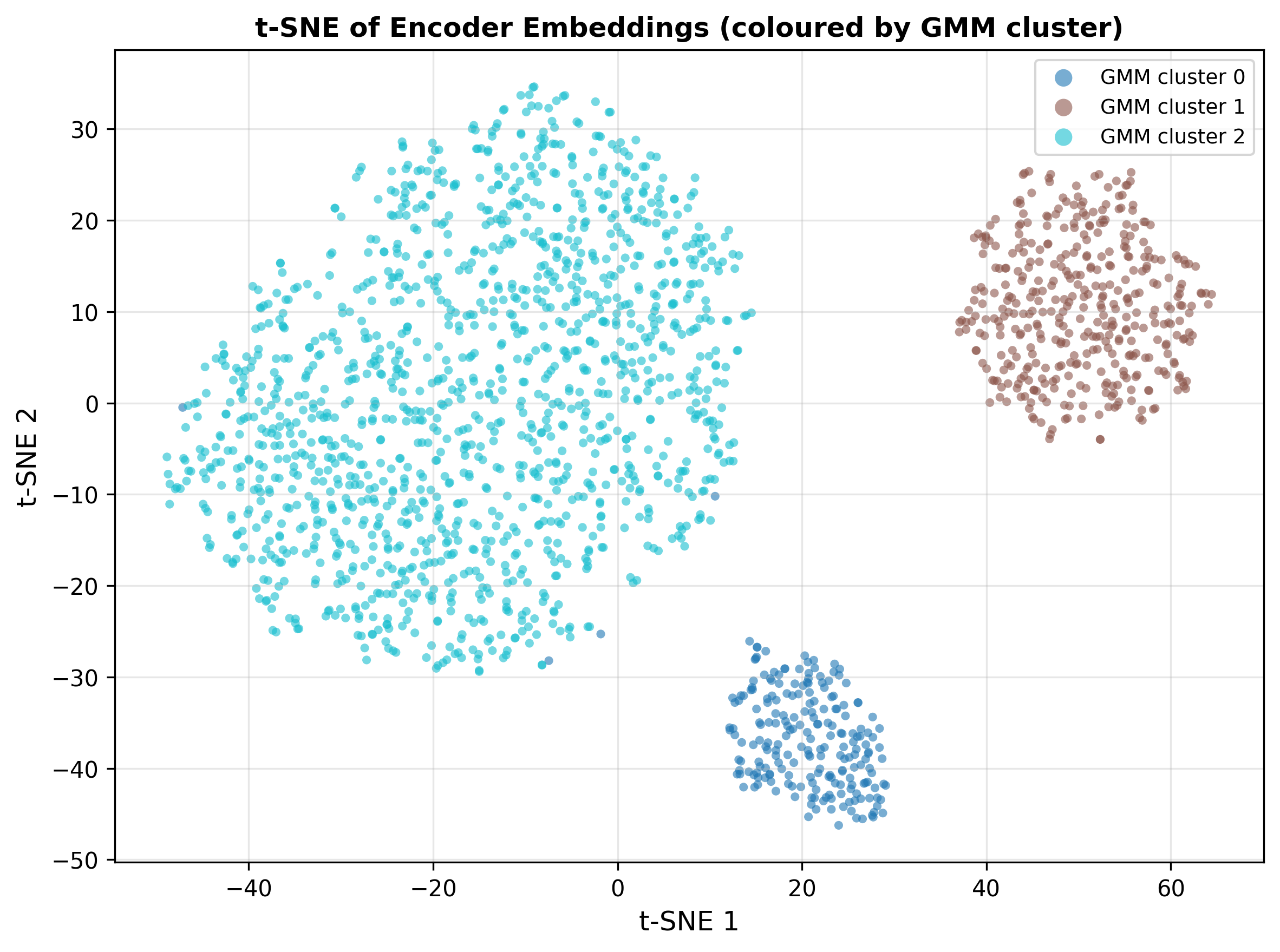} &
\includegraphics[width=0.14\textwidth]{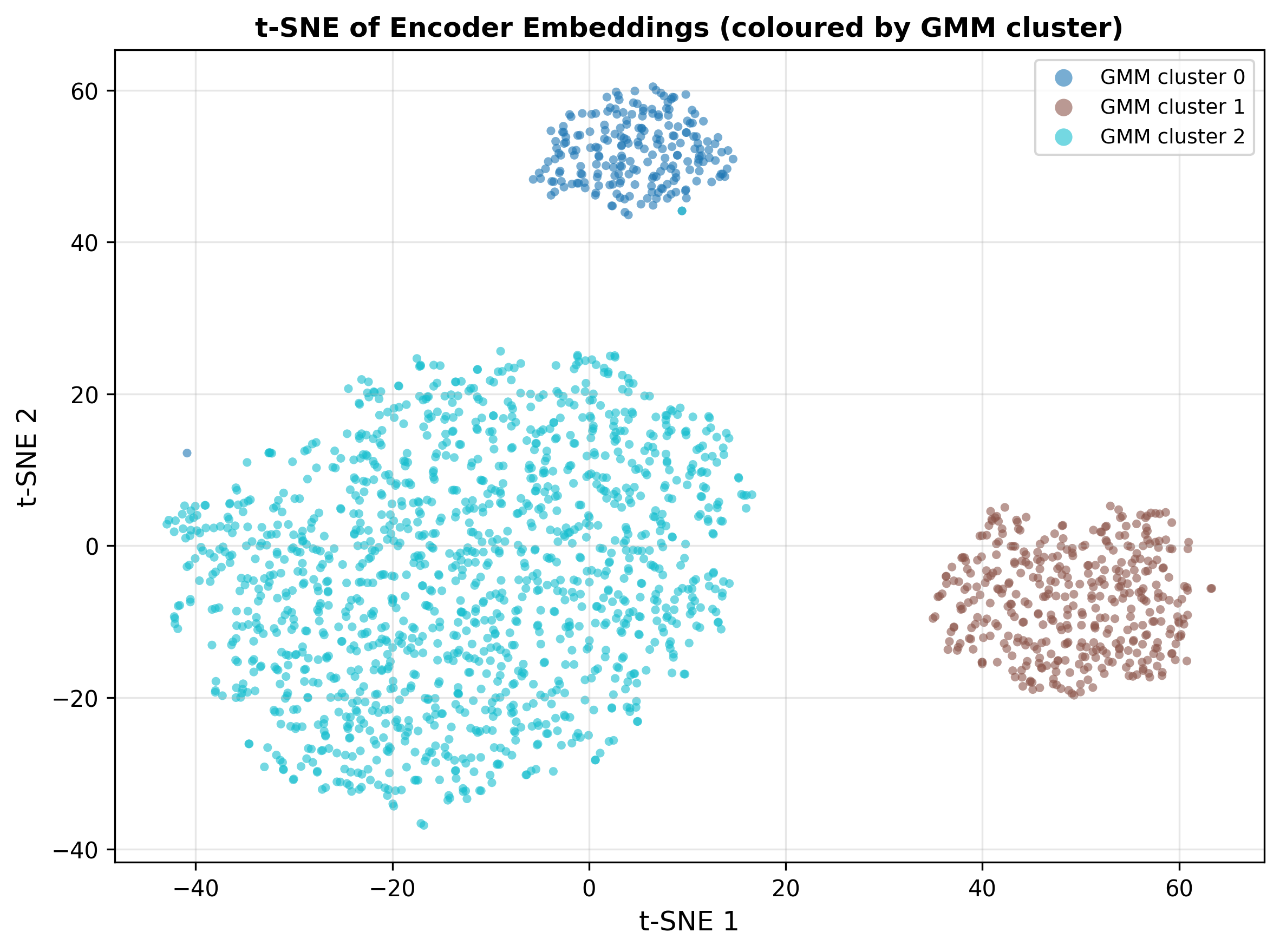} &
\includegraphics[width=0.14\textwidth]{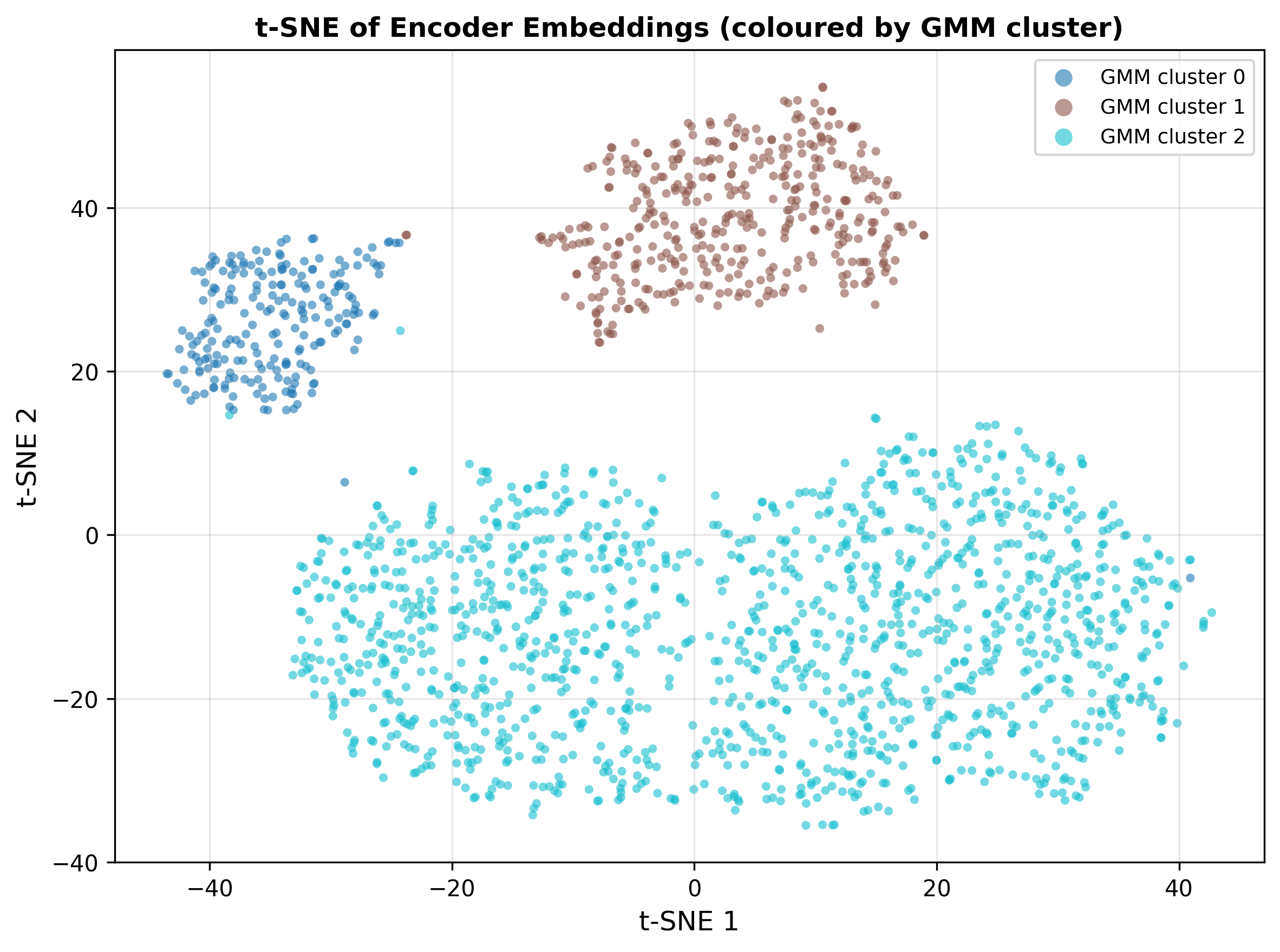}
\\
\multicolumn{6}{c}{$d=10$}
\\[0.5em]

\includegraphics[width=0.14\textwidth]{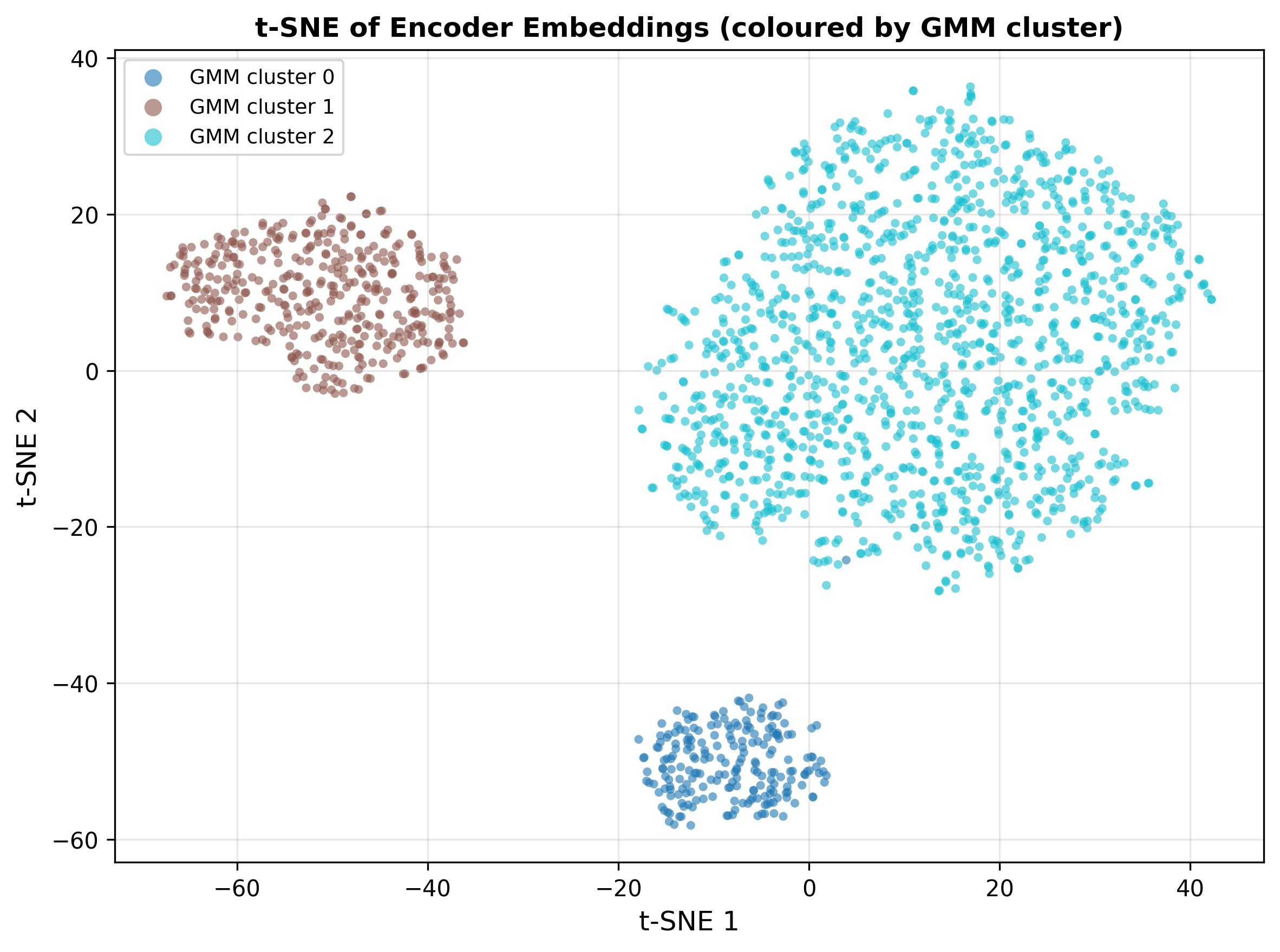} &
\includegraphics[width=0.14\textwidth]{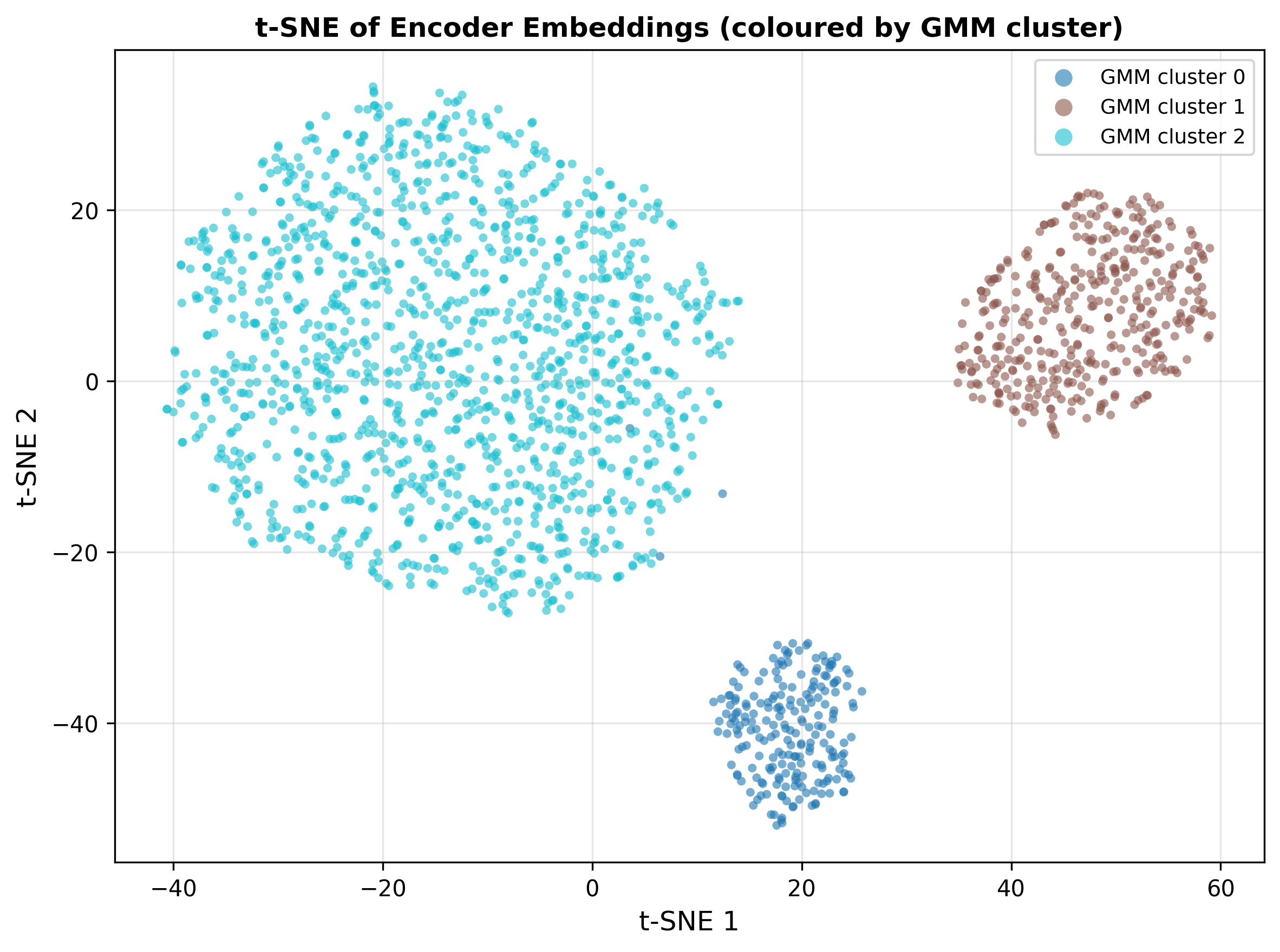} &
\includegraphics[width=0.14\textwidth]{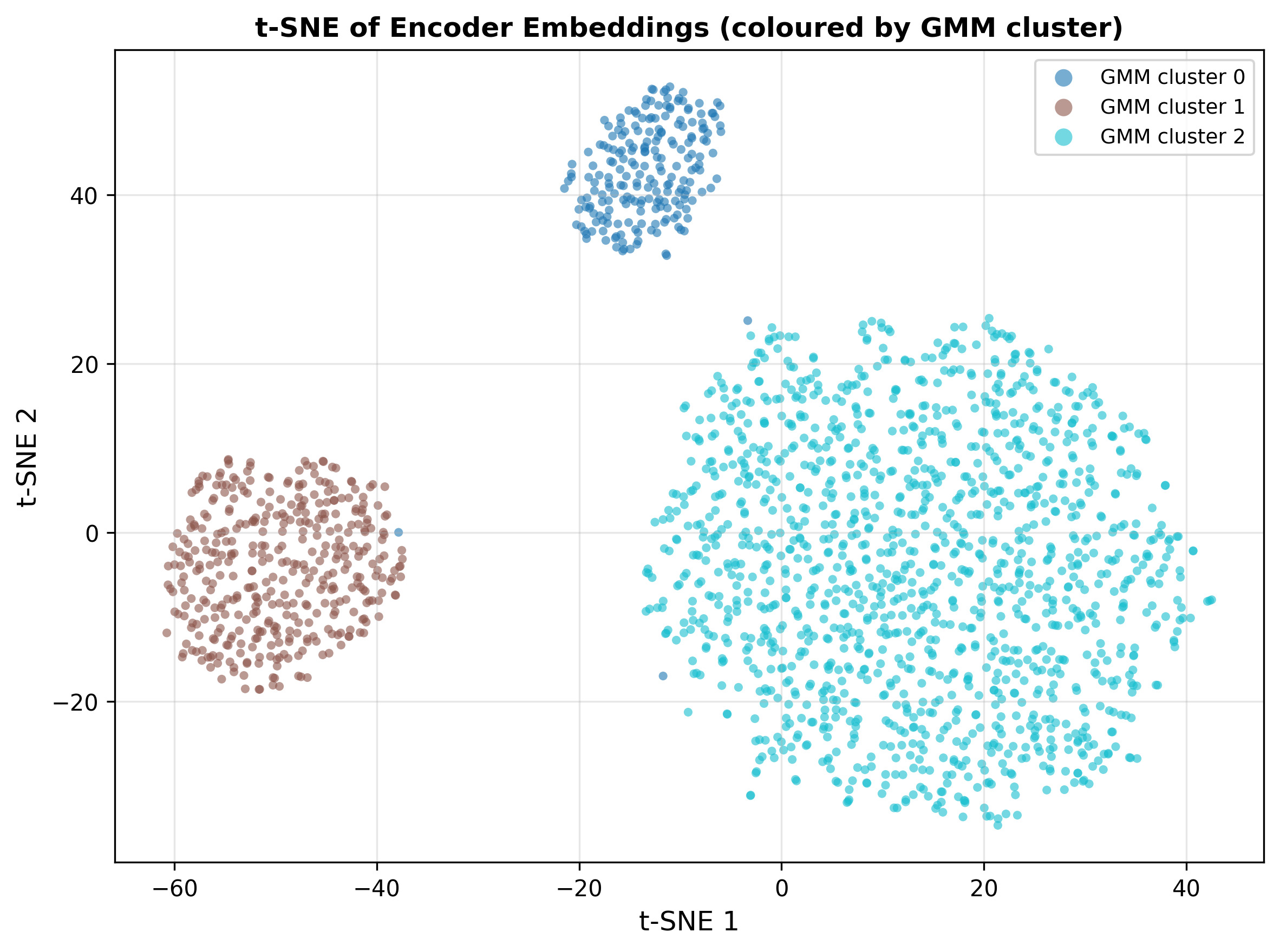} &
\includegraphics[width=0.14\textwidth]{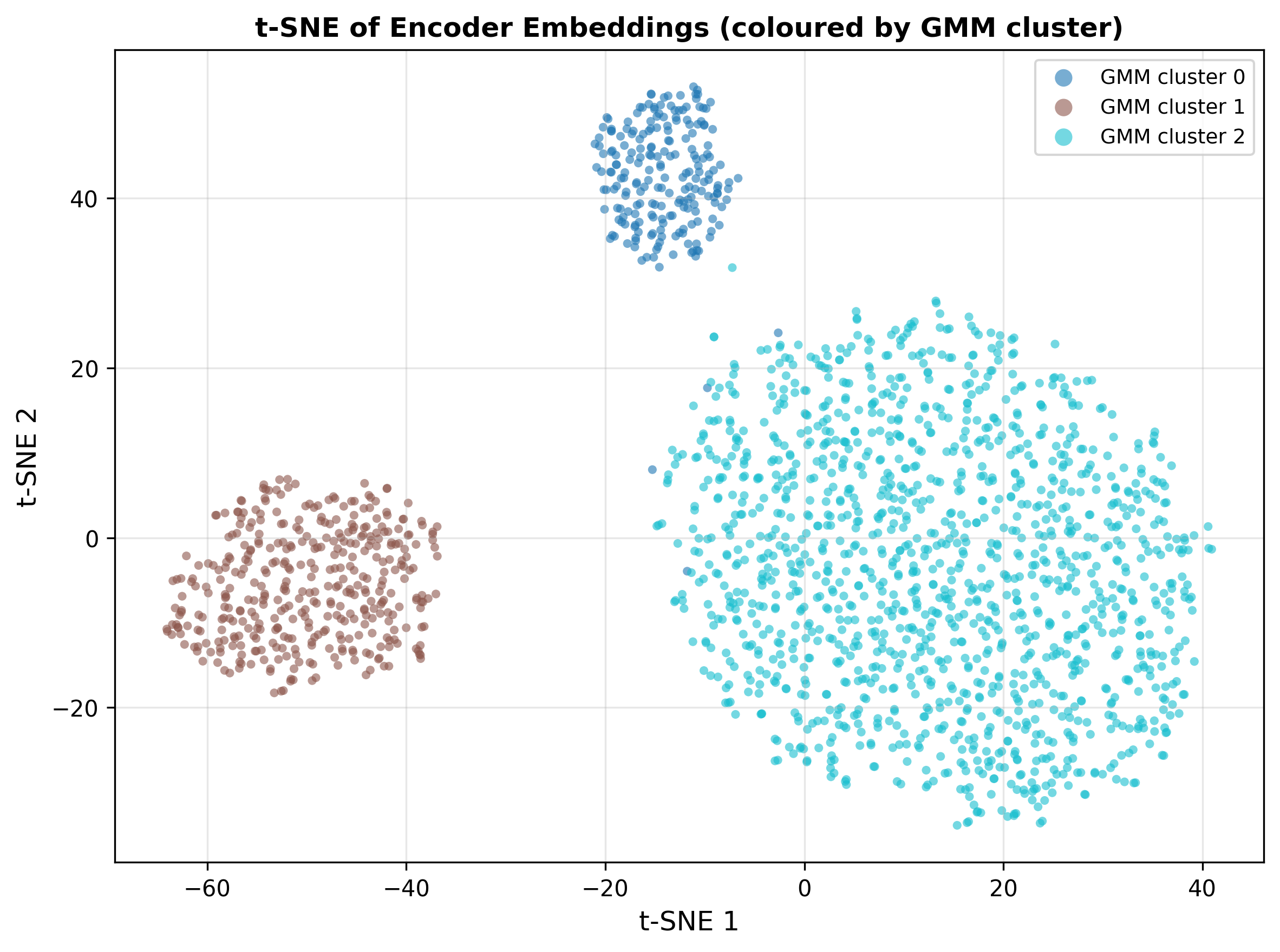} &
\includegraphics[width=0.14\textwidth]{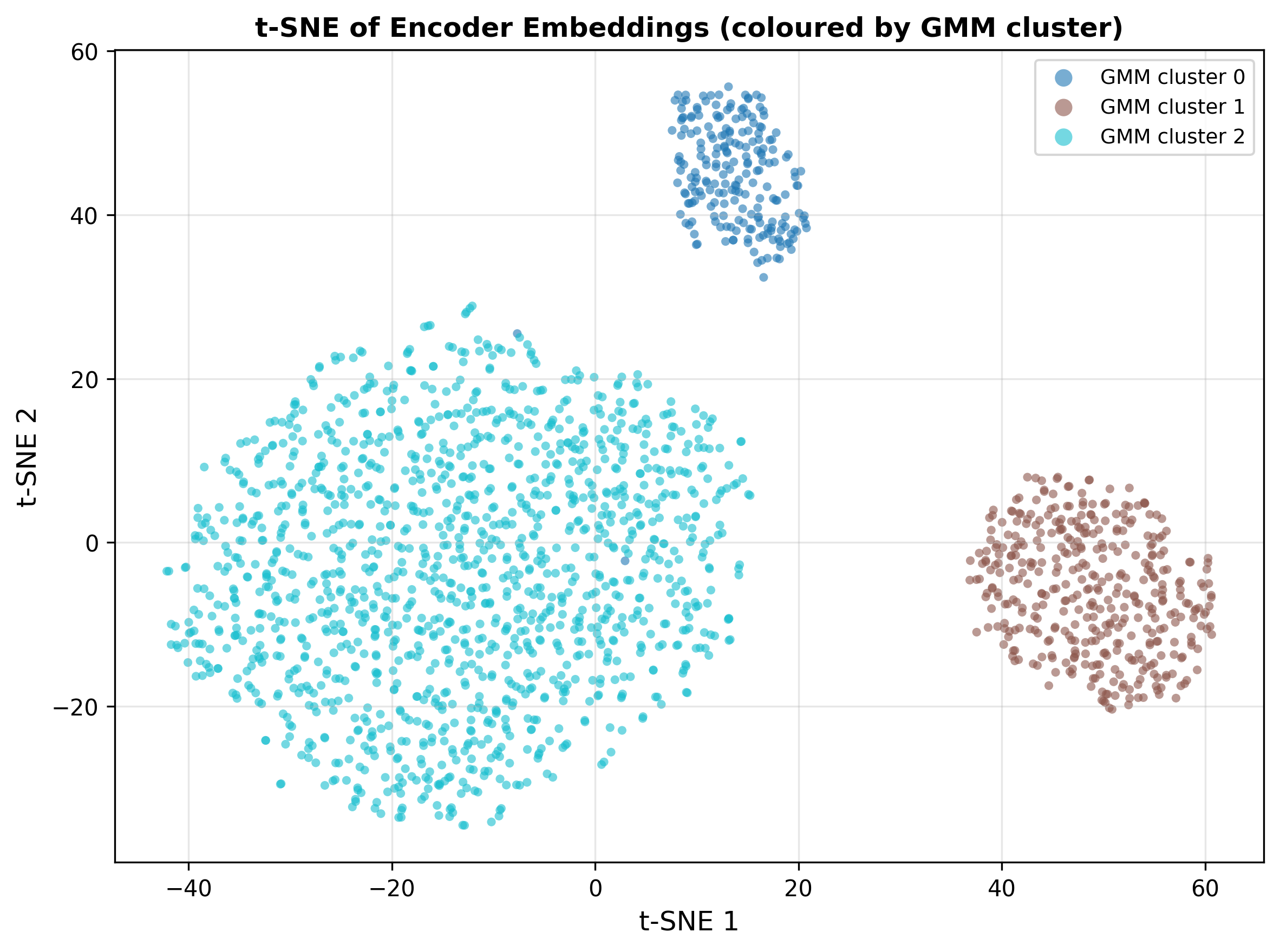} &
\includegraphics[width=0.14\textwidth]{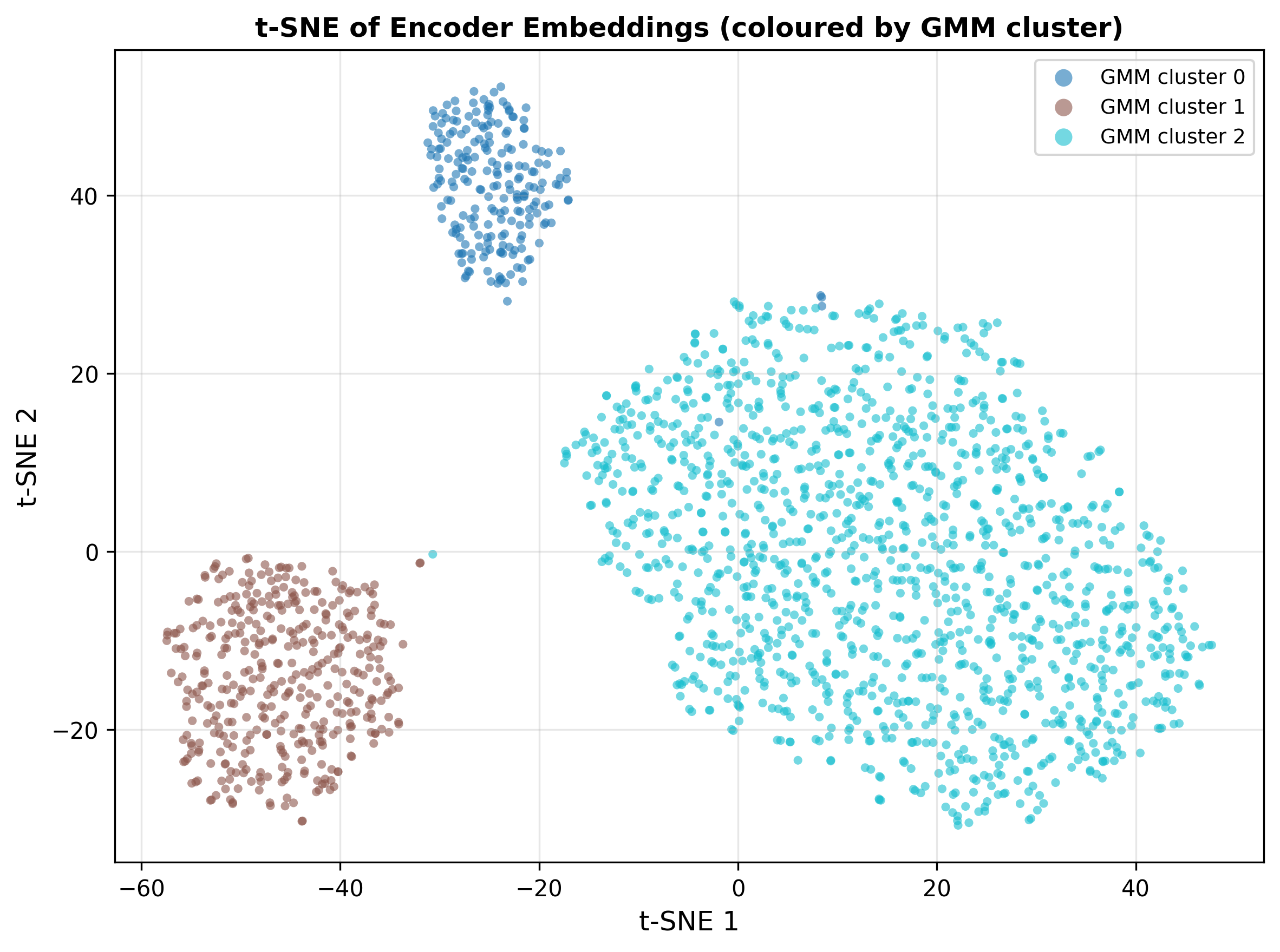}
\\
\multicolumn{6}{c}{$d=15$}
\\[0.5em]

\includegraphics[width=0.14\textwidth]{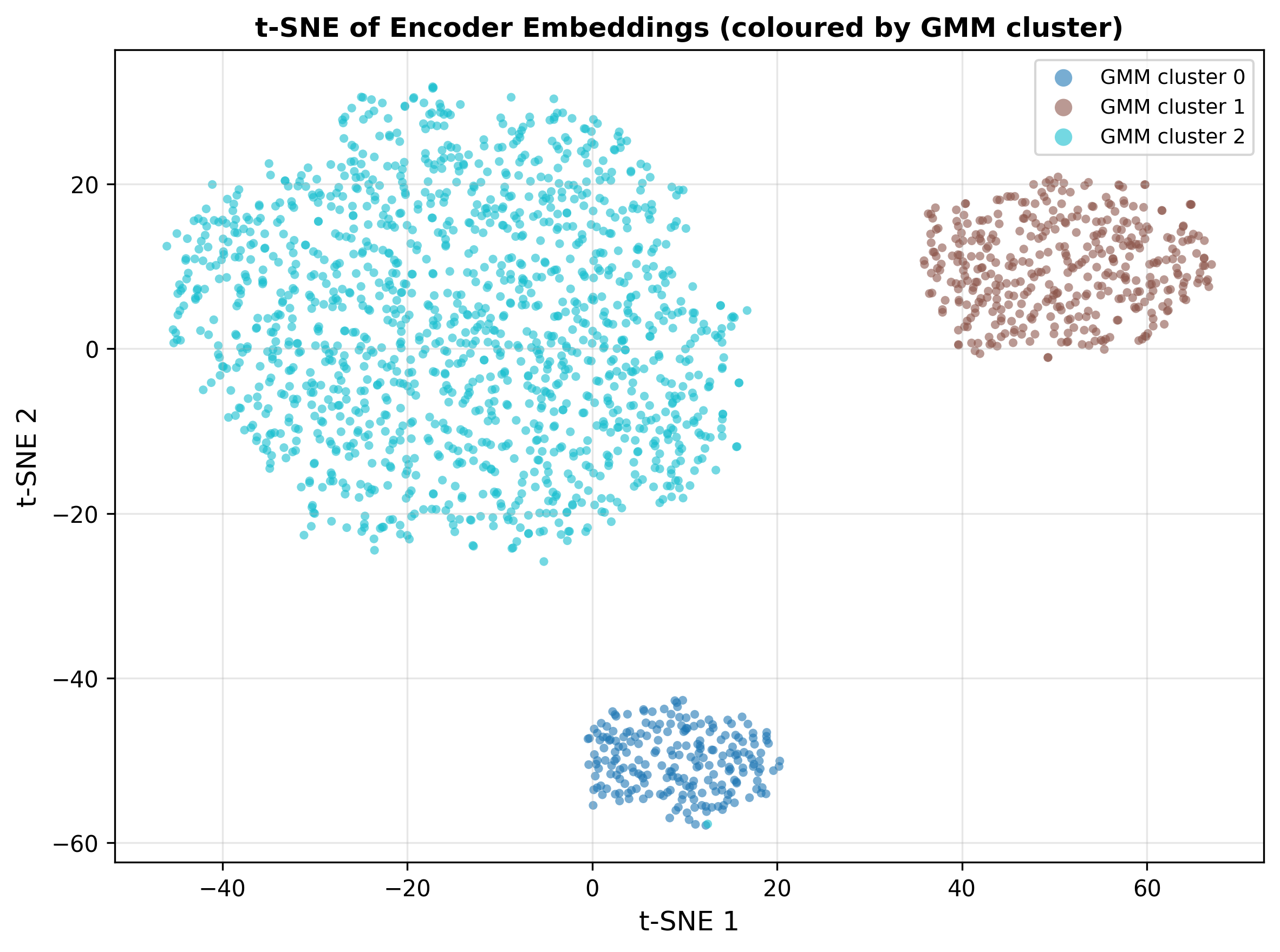} &
\includegraphics[width=0.14\textwidth]{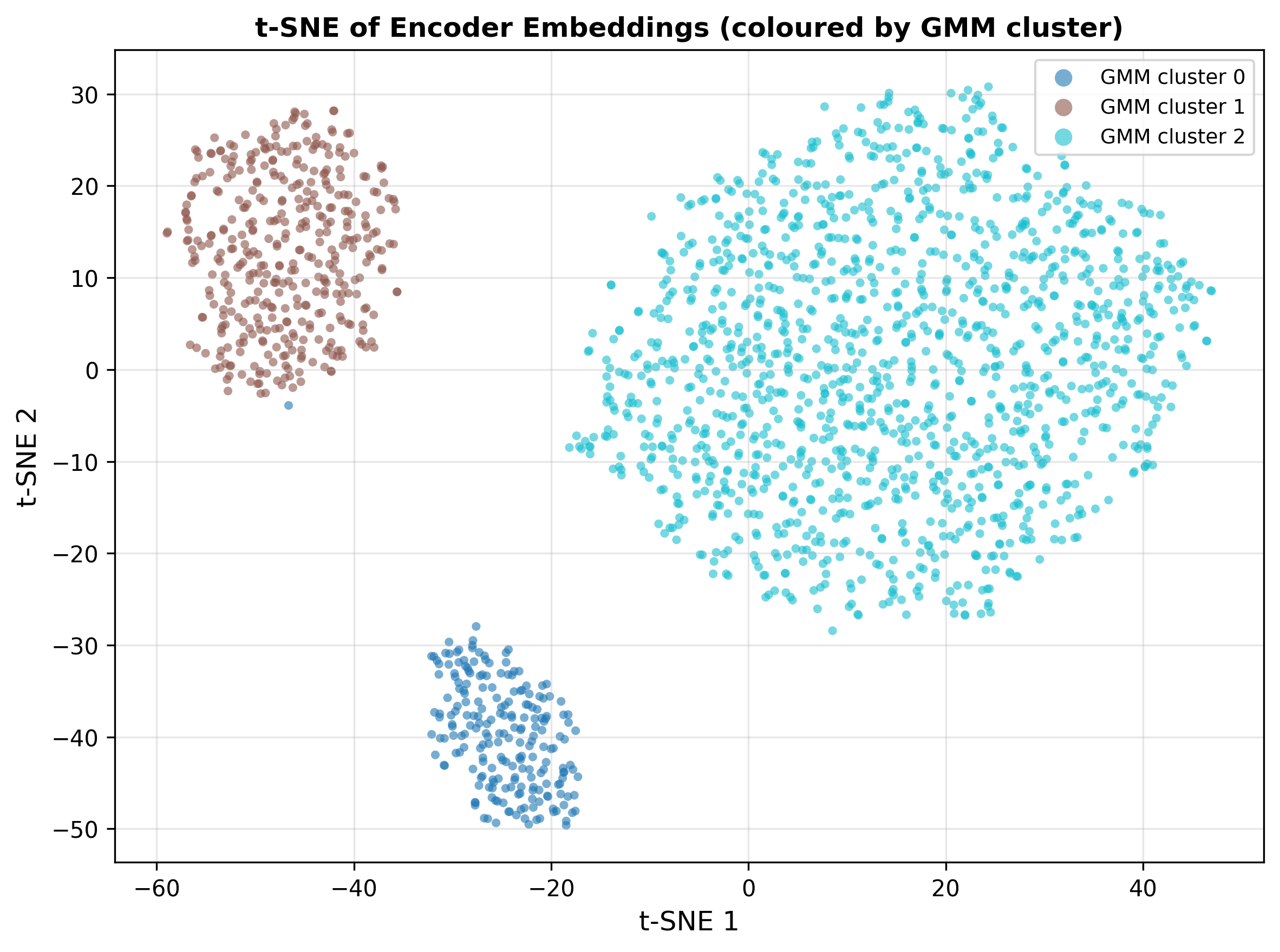} &
\includegraphics[width=0.14\textwidth]{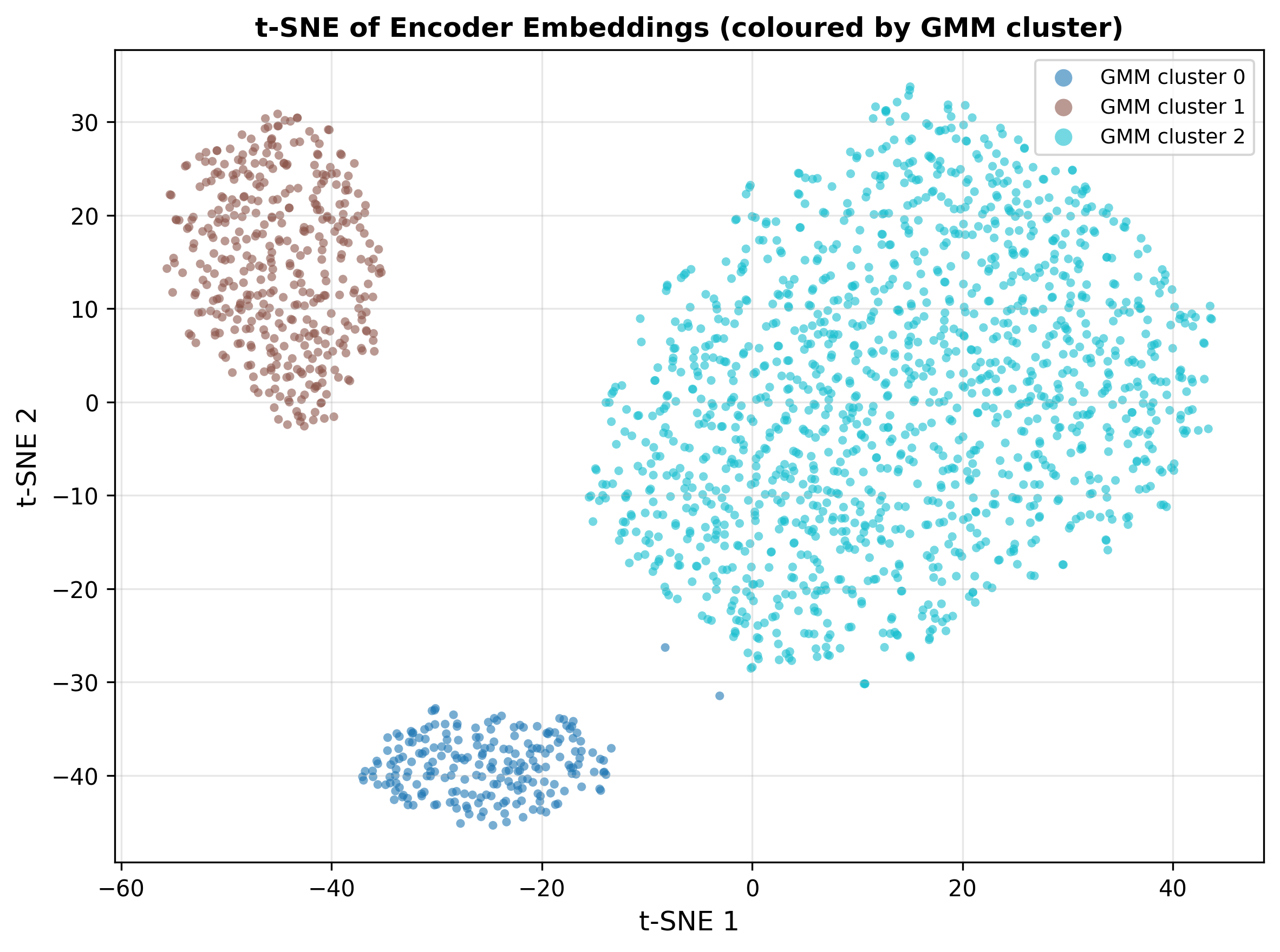} &
\includegraphics[width=0.14\textwidth]{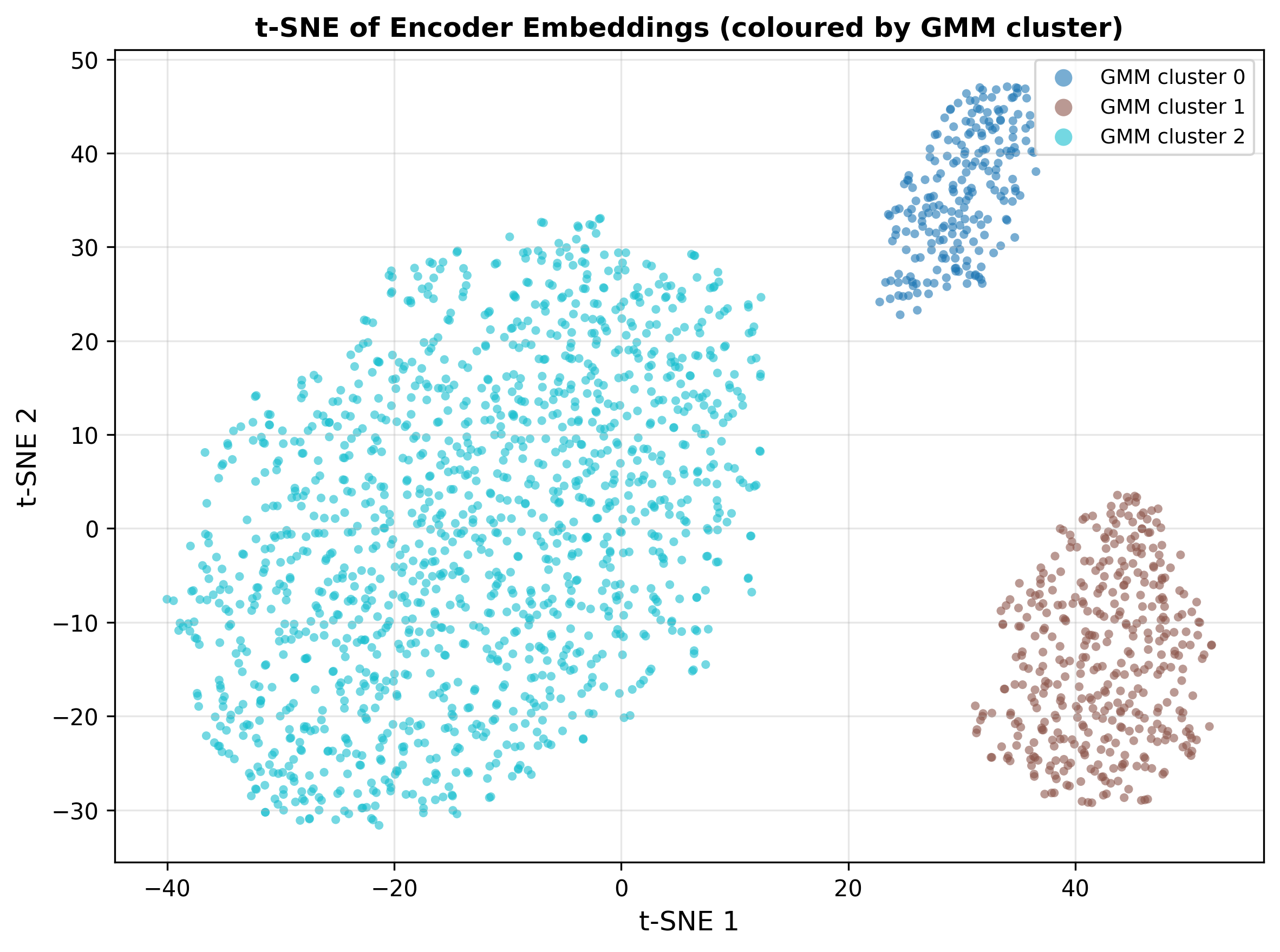} &
\includegraphics[width=0.14\textwidth]{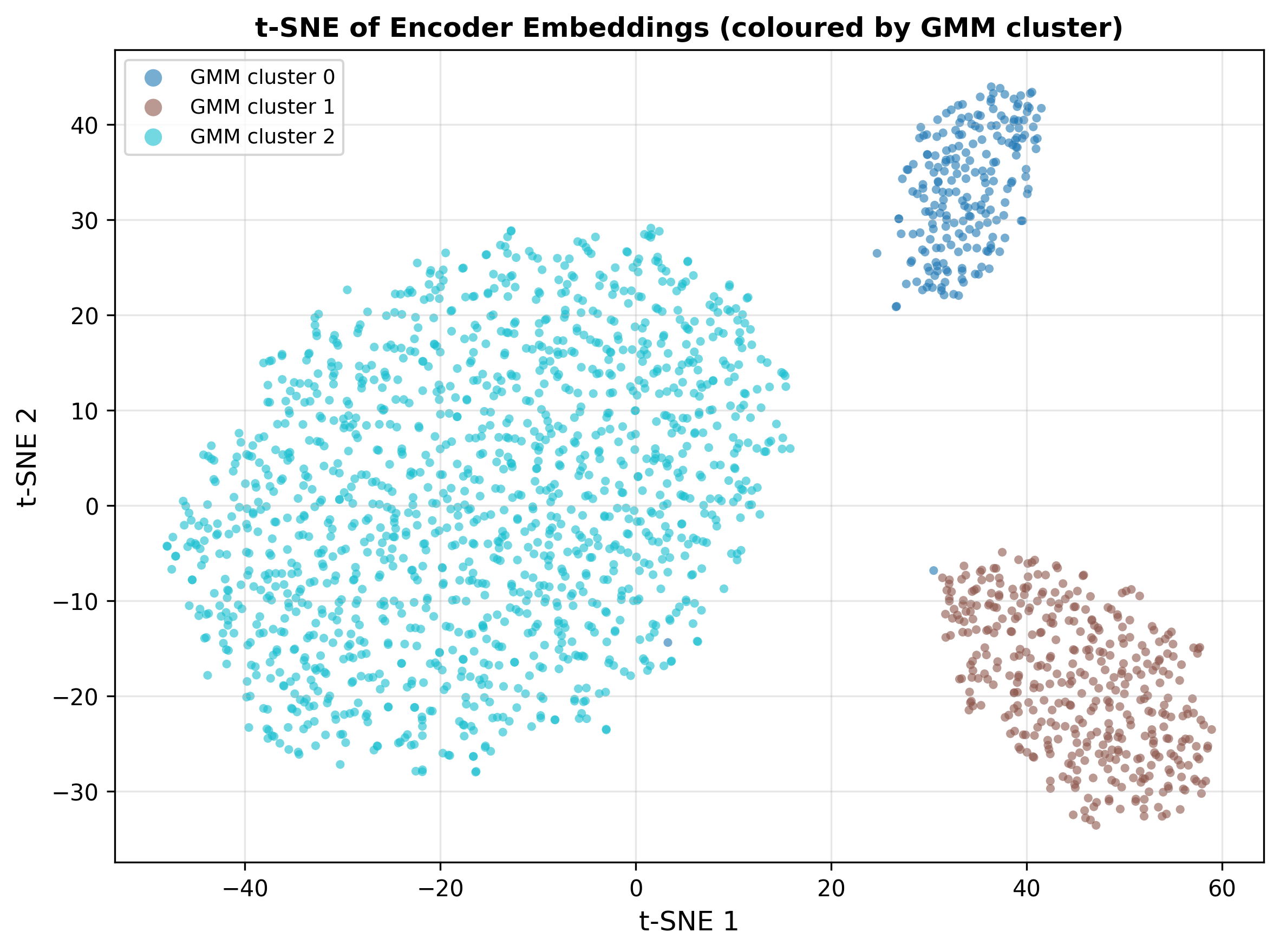} &
\includegraphics[width=0.14\textwidth]{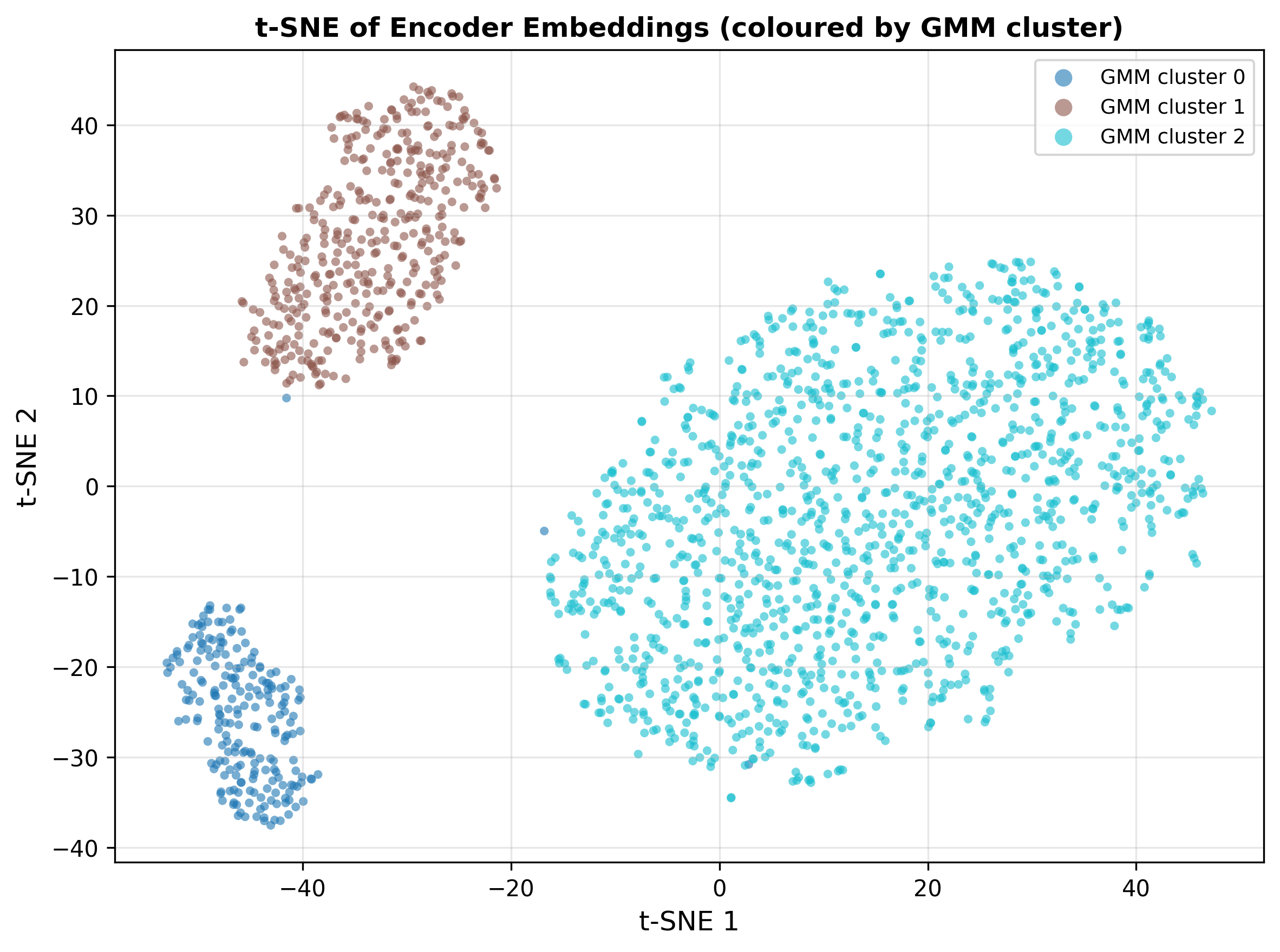}
\\
\multicolumn{6}{c}{$d=20$}
\end{tabular}

\caption{
t-SNE visualization of the learned representations for CI-Reg variants in the structured-factor simulation study with cluster separation 5.0. Rows correspond to encoder dimensions and columns correspond to different values of $\lambda_{\mathrm{CInd}}$. Points are colored according to the true latent Gaussian mixture component. Across all encoder dimensions and regularization strengths, the learned representations preserve the underlying latent cluster structure.
}
\label{fig:structured_factor_tsne_of_embeddings}
\end{figure}

\begin{figure}[htbp]
\centering

\scriptsize
\begin{tabular}{cccccc}
CI-Reg-0 &
CI-Reg-0.01 &
CI-Reg-0.03 &
CI-Reg-0.10 &
CI-Reg-0.30 &
CI-Reg-0.90 \\[0.3em]

\includegraphics[width=0.14\textwidth]{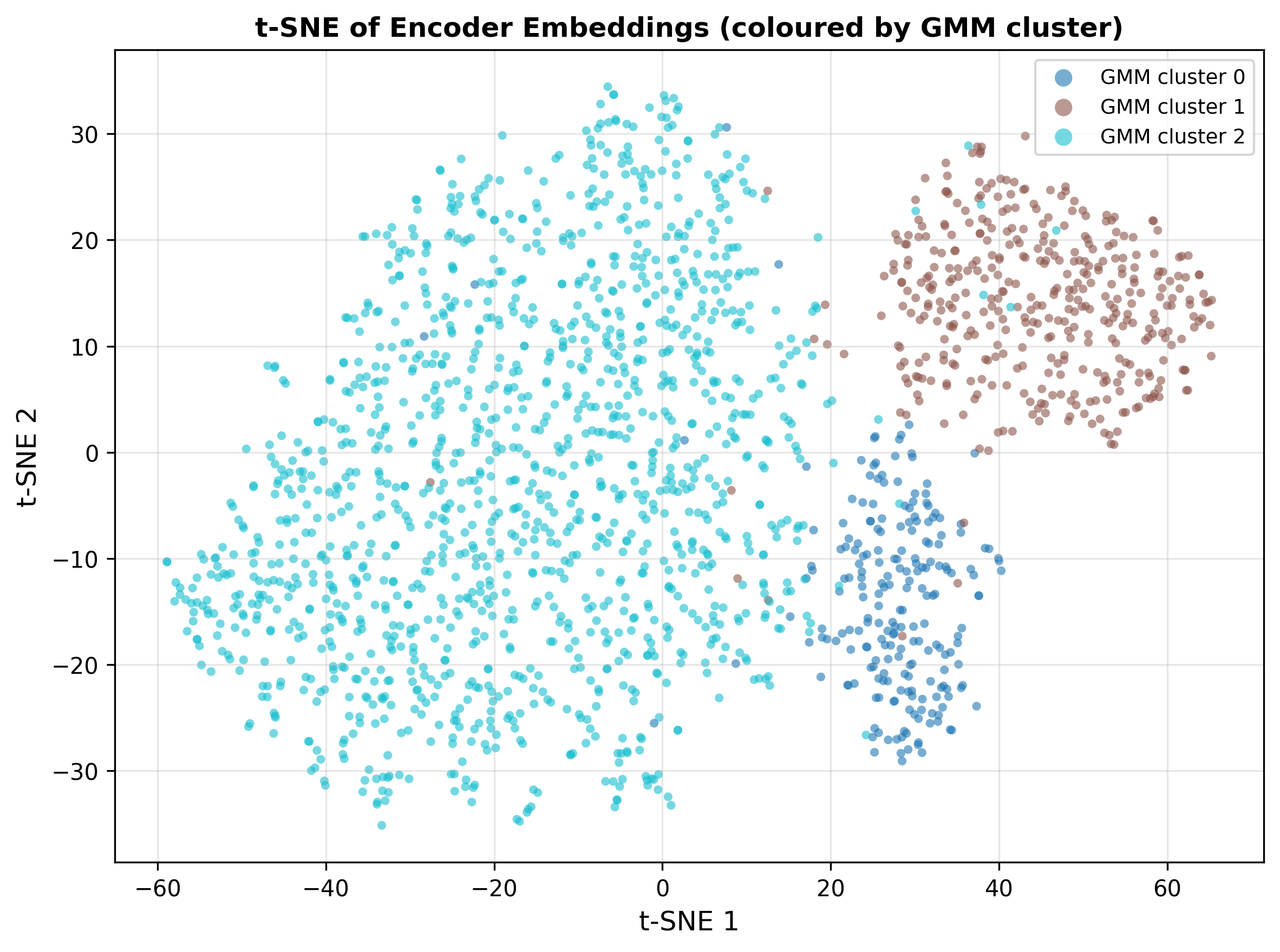} &
\includegraphics[width=0.14\textwidth]{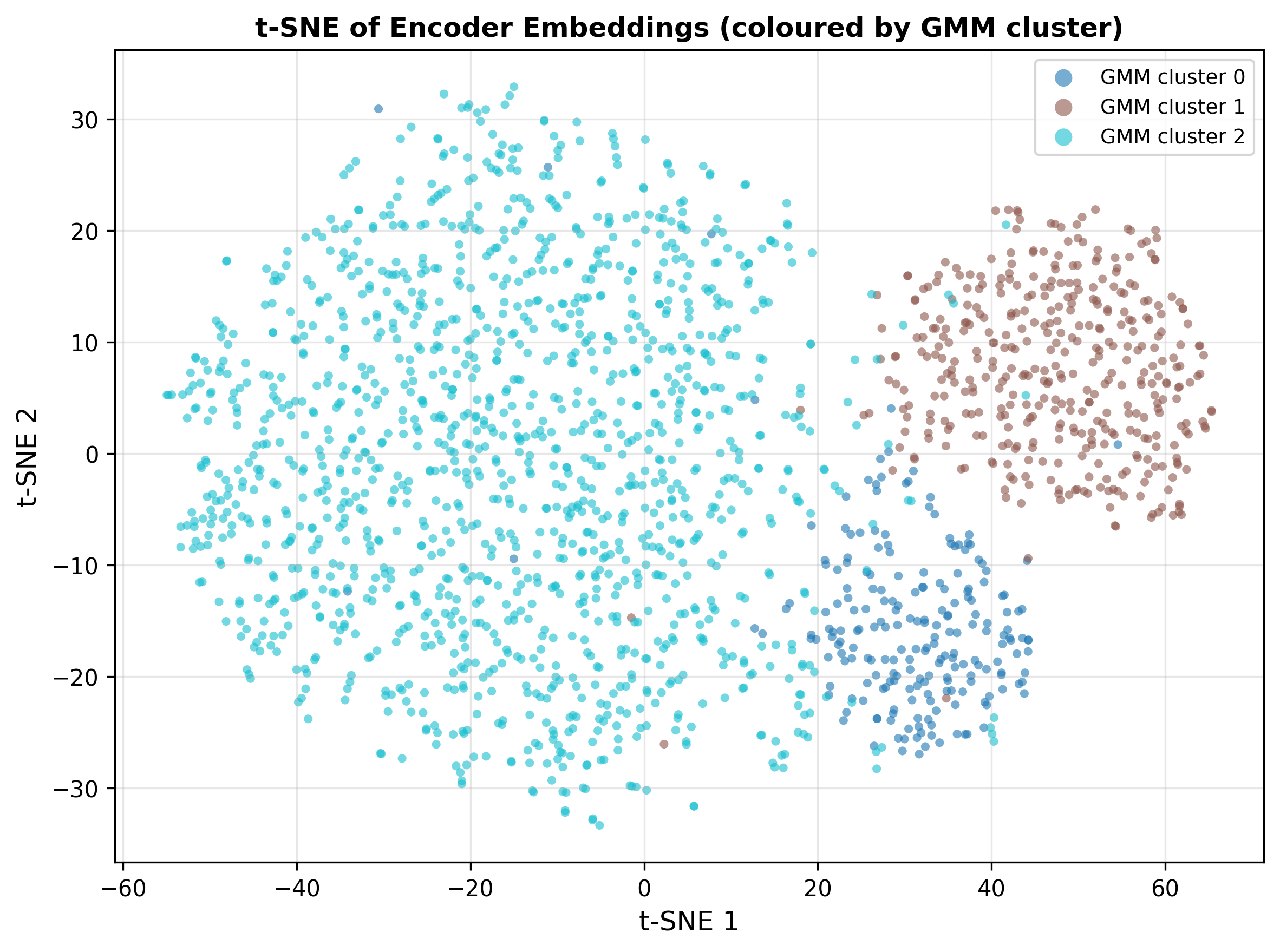} &
\includegraphics[width=0.14\textwidth]{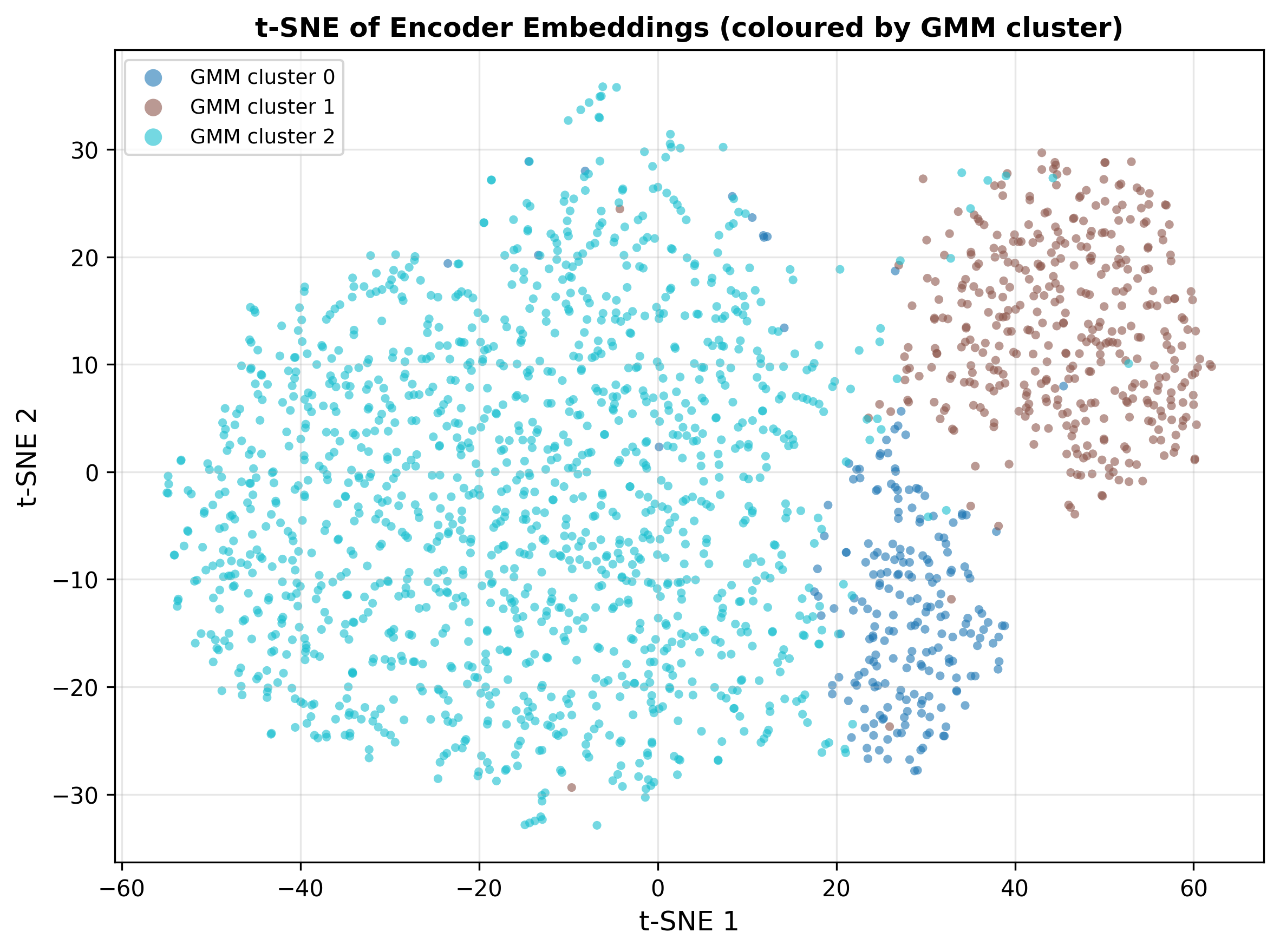} &
\includegraphics[width=0.14\textwidth]{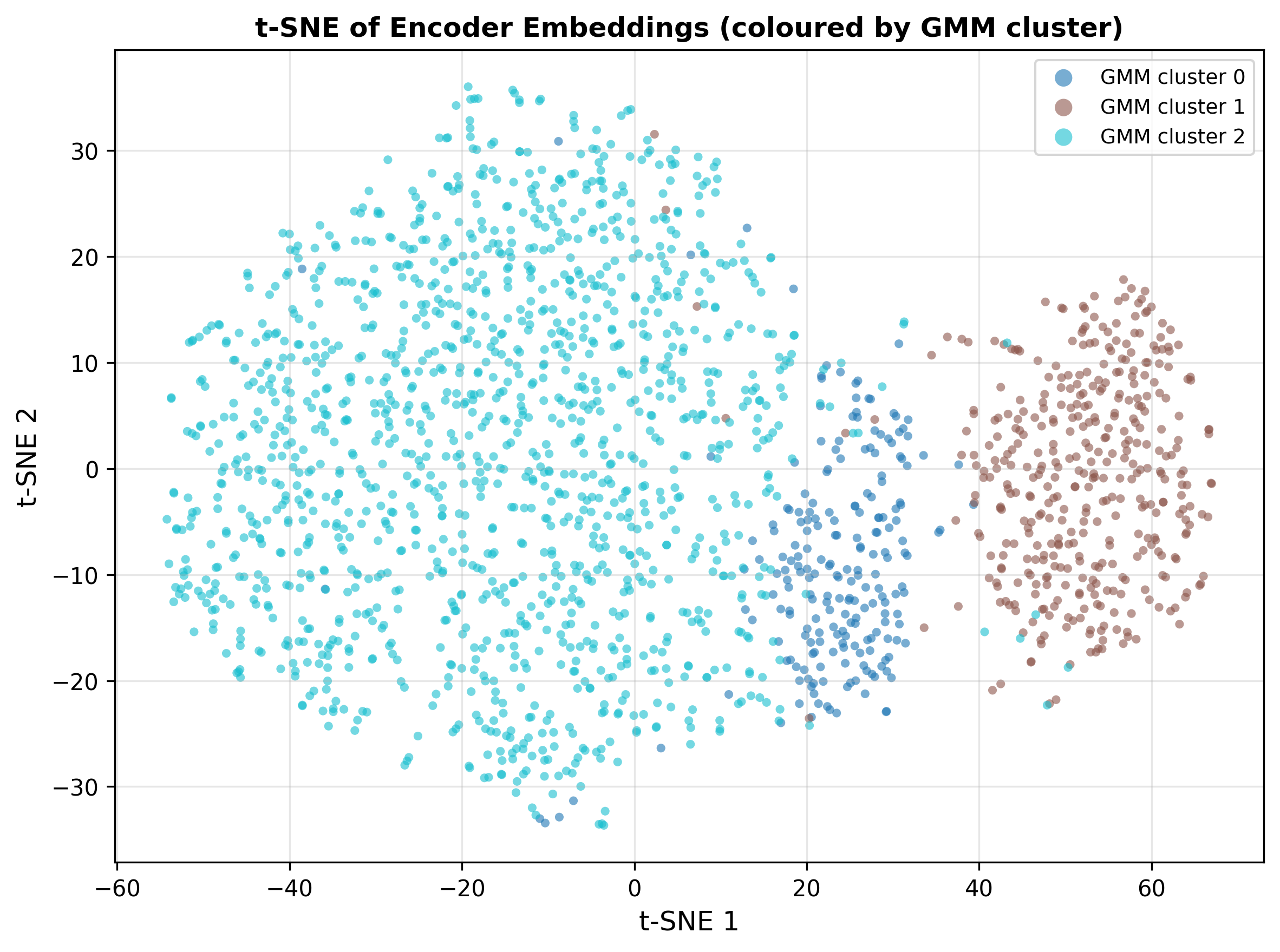} &
\includegraphics[width=0.14\textwidth]{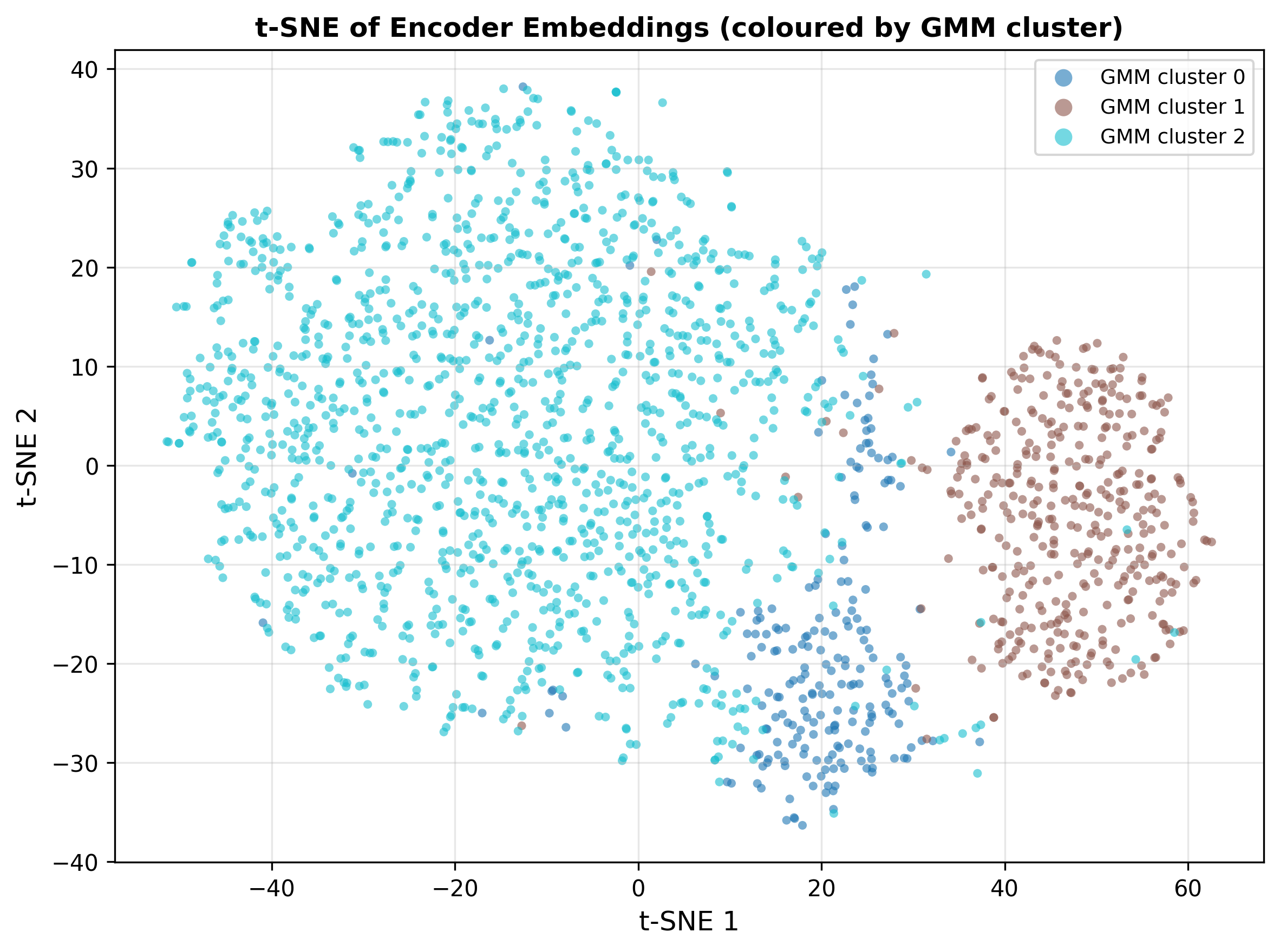} &
\includegraphics[width=0.14\textwidth]{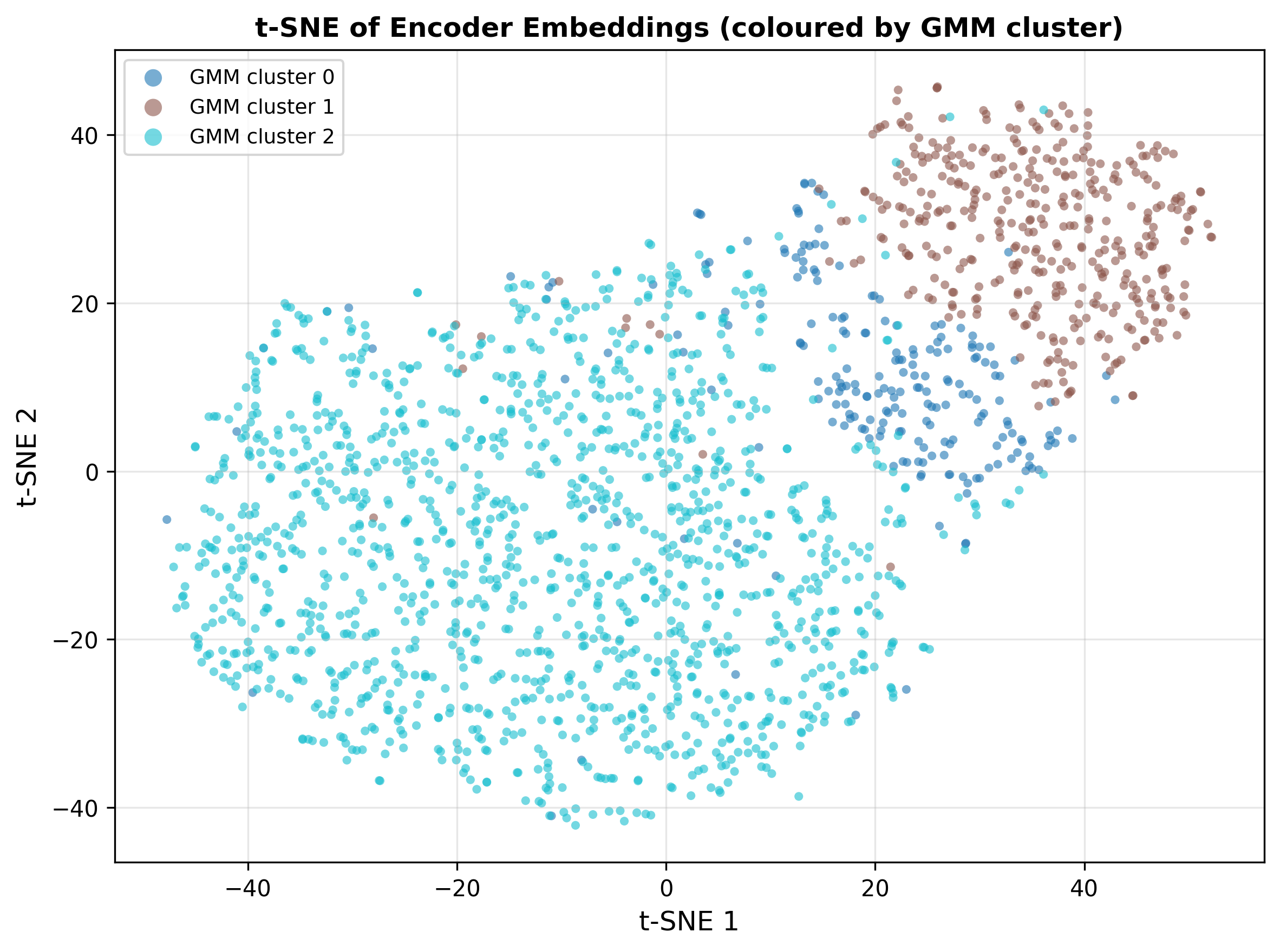}
\\
\multicolumn{6}{c}{$d=10$}
\\[0.5em]

\includegraphics[width=0.14\textwidth]{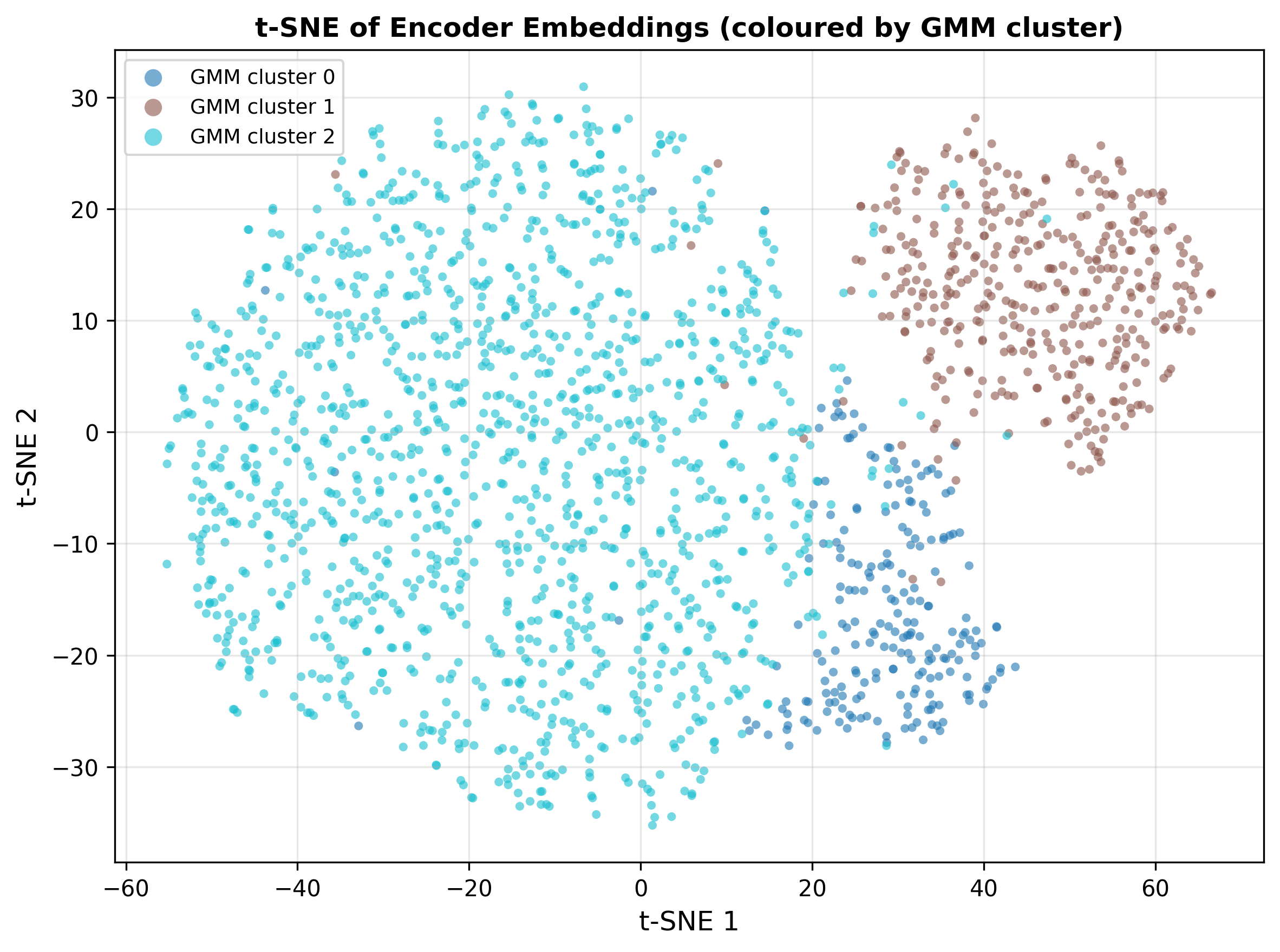} &
\includegraphics[width=0.14\textwidth]{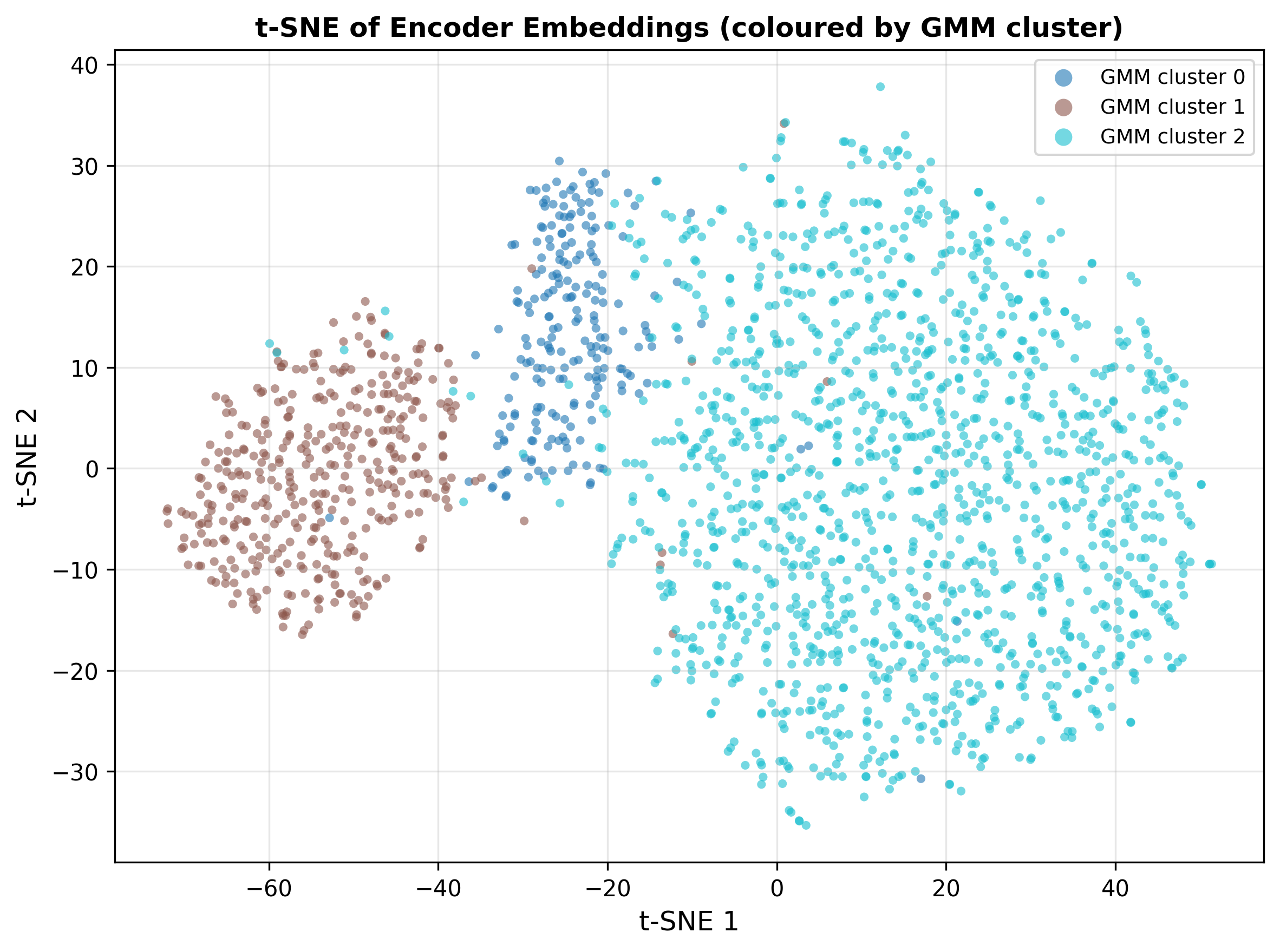} &
\includegraphics[width=0.14\textwidth]{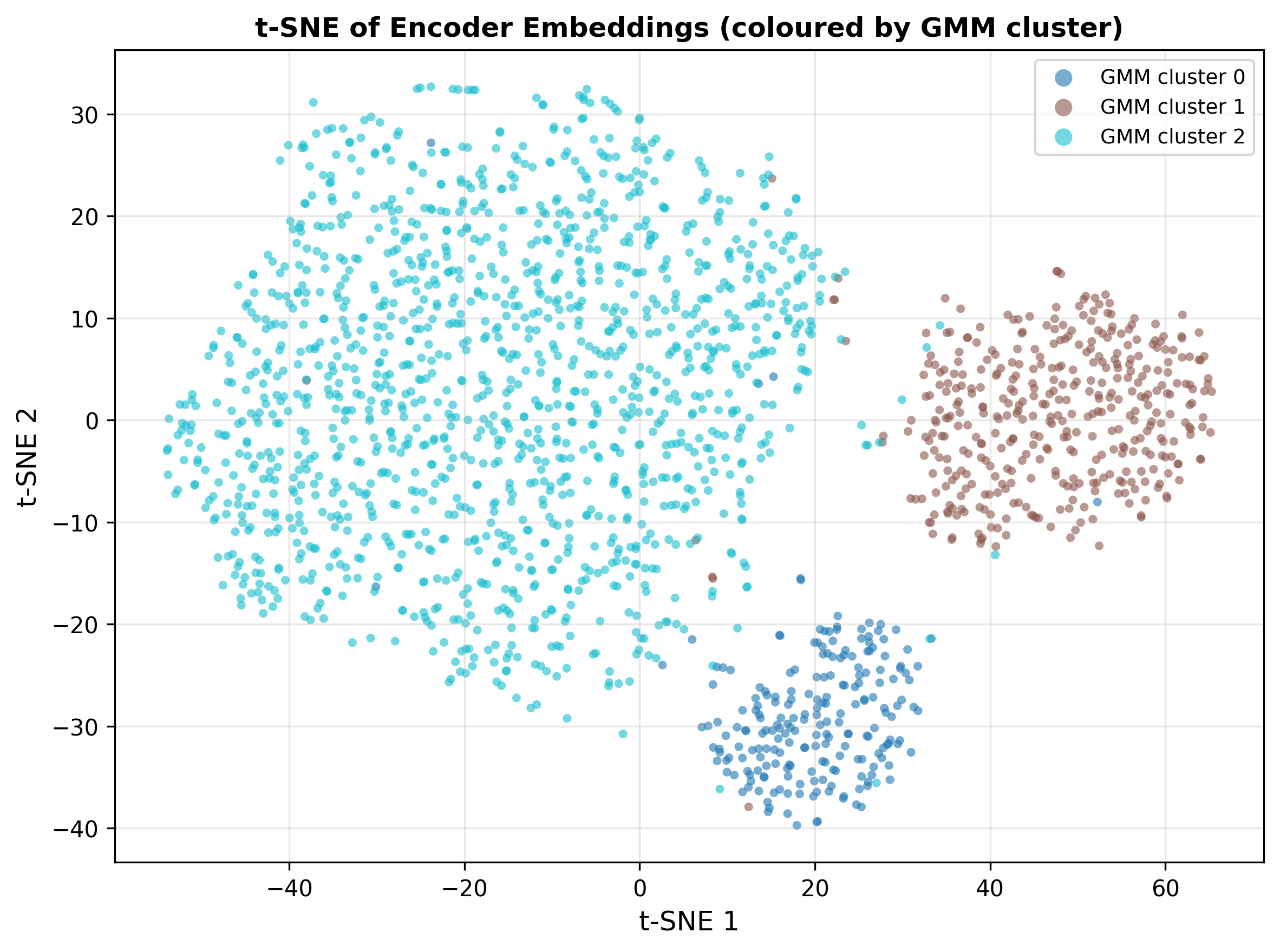} &
\includegraphics[width=0.14\textwidth]{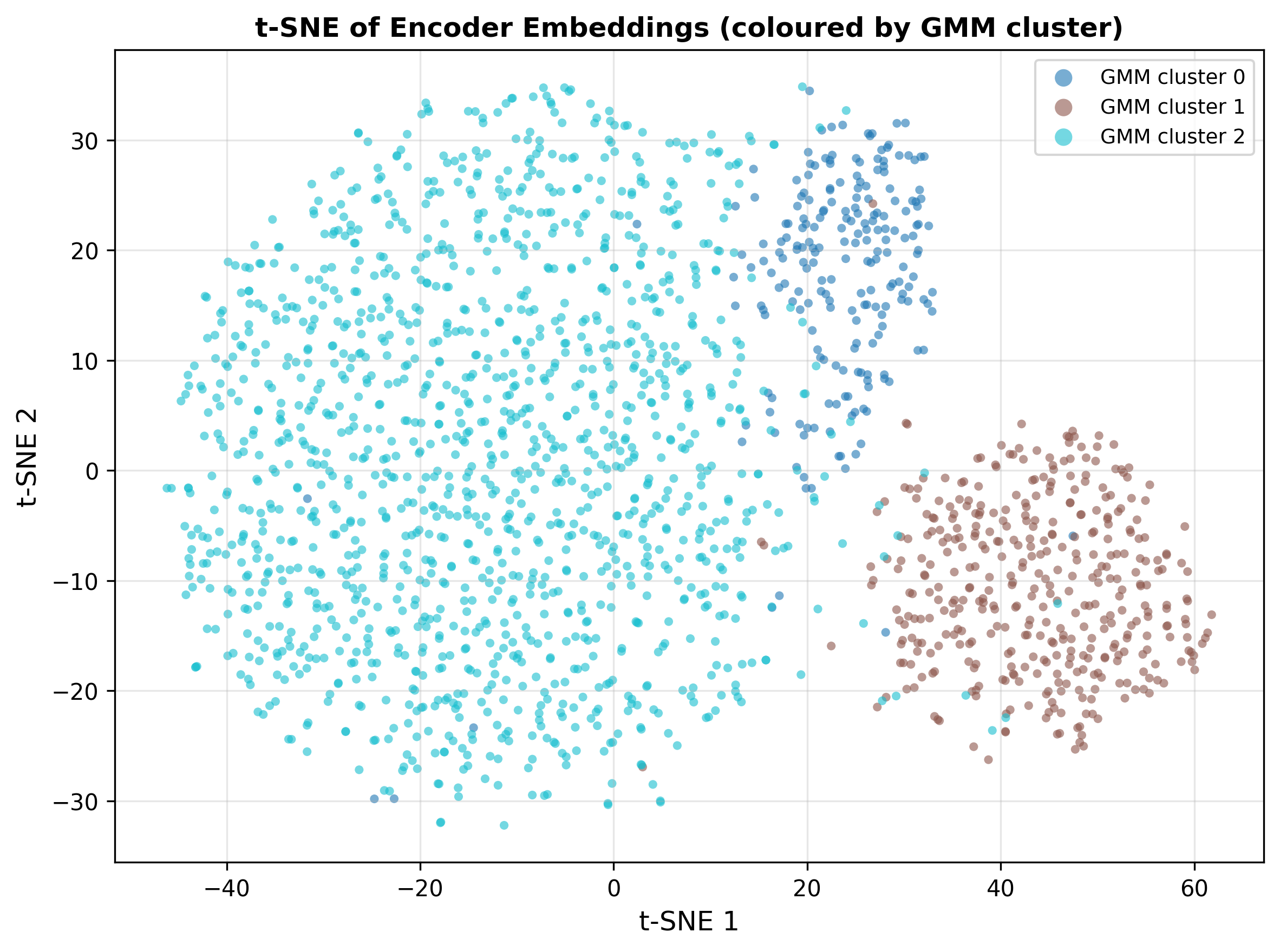} &
\includegraphics[width=0.14\textwidth]{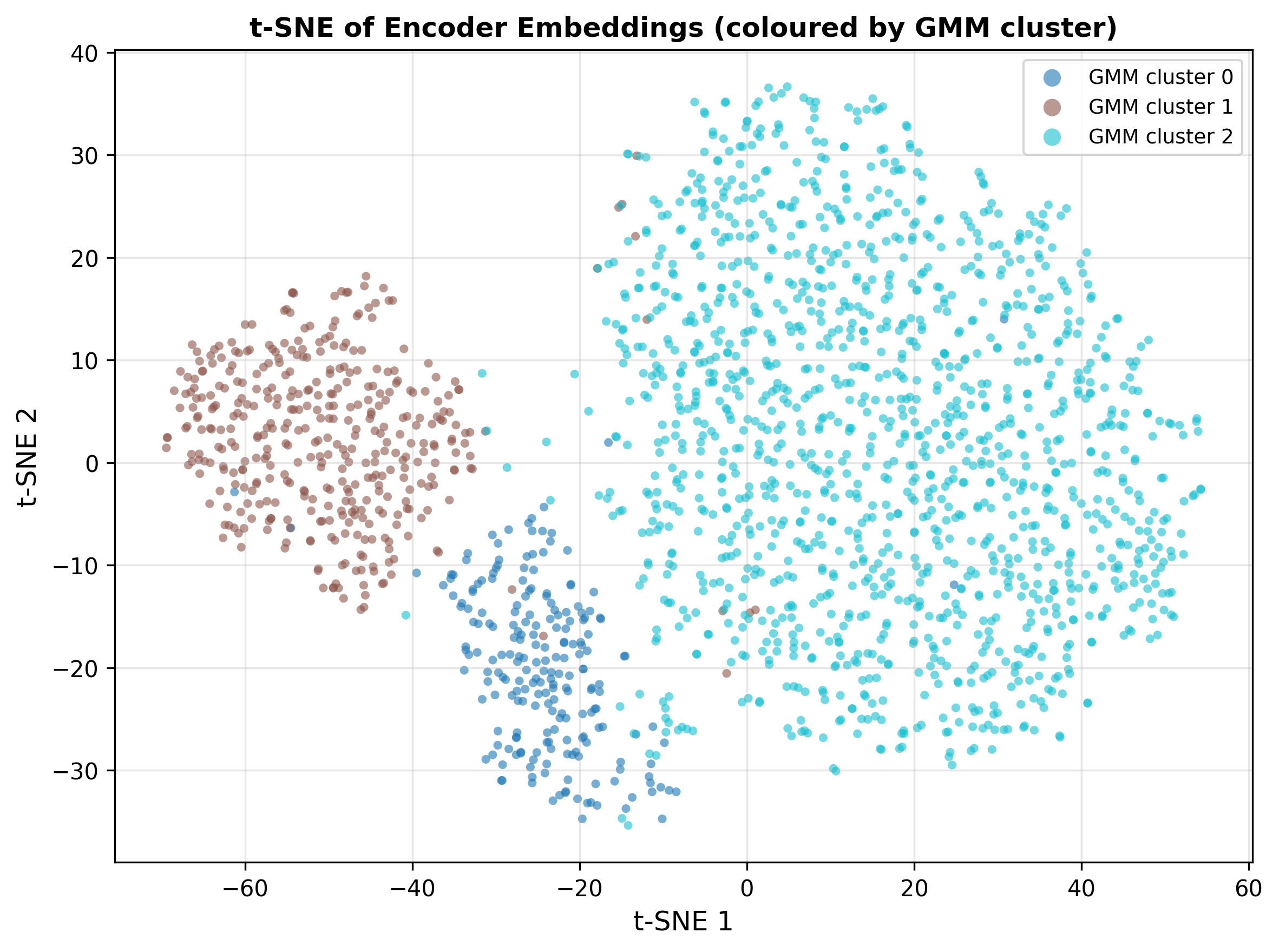} &
\includegraphics[width=0.14\textwidth]{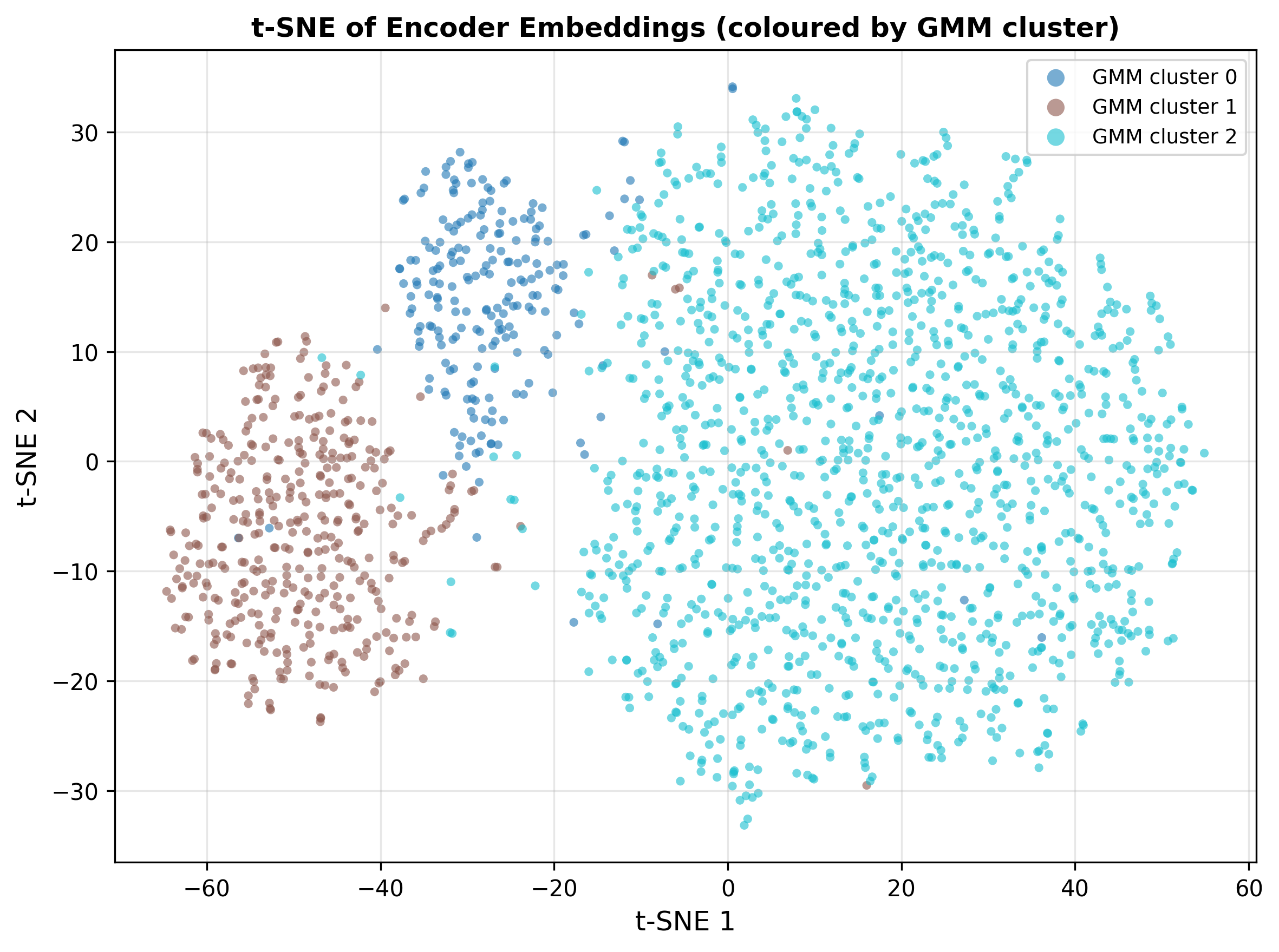}
\\
\multicolumn{6}{c}{$d=15$}
\\[0.5em]

\includegraphics[width=0.14\textwidth]{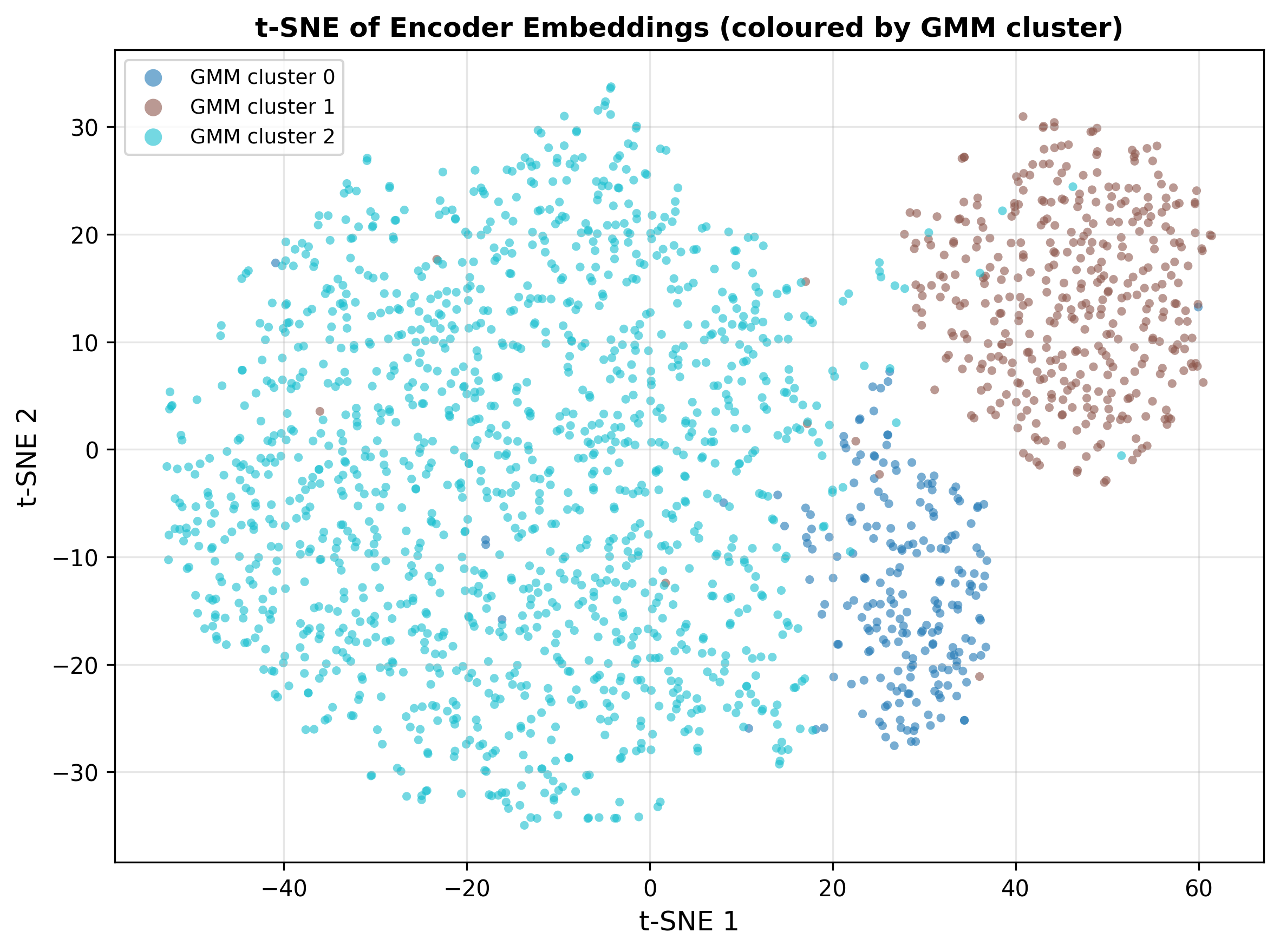} &
\includegraphics[width=0.14\textwidth]{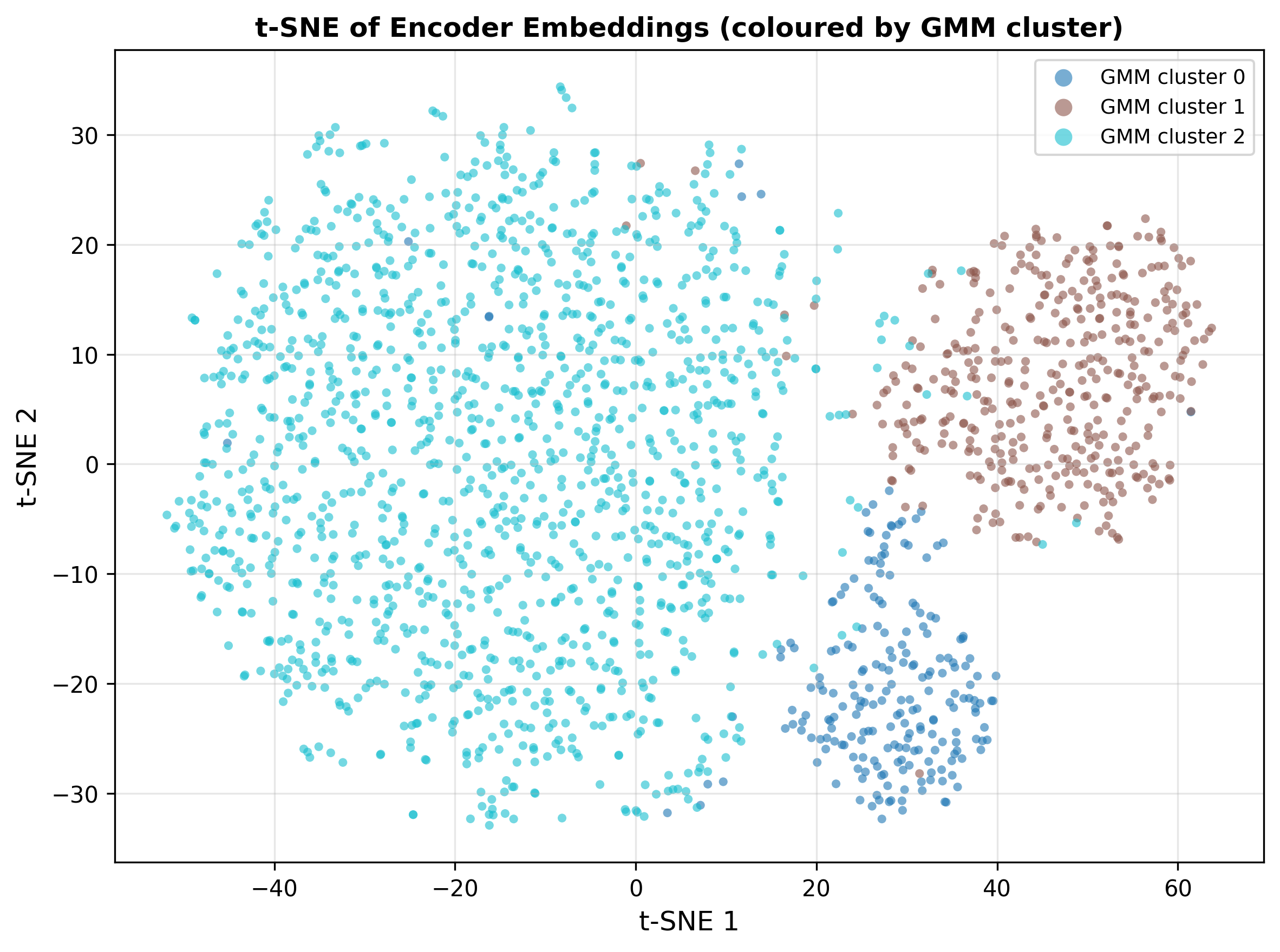} &
\includegraphics[width=0.14\textwidth]{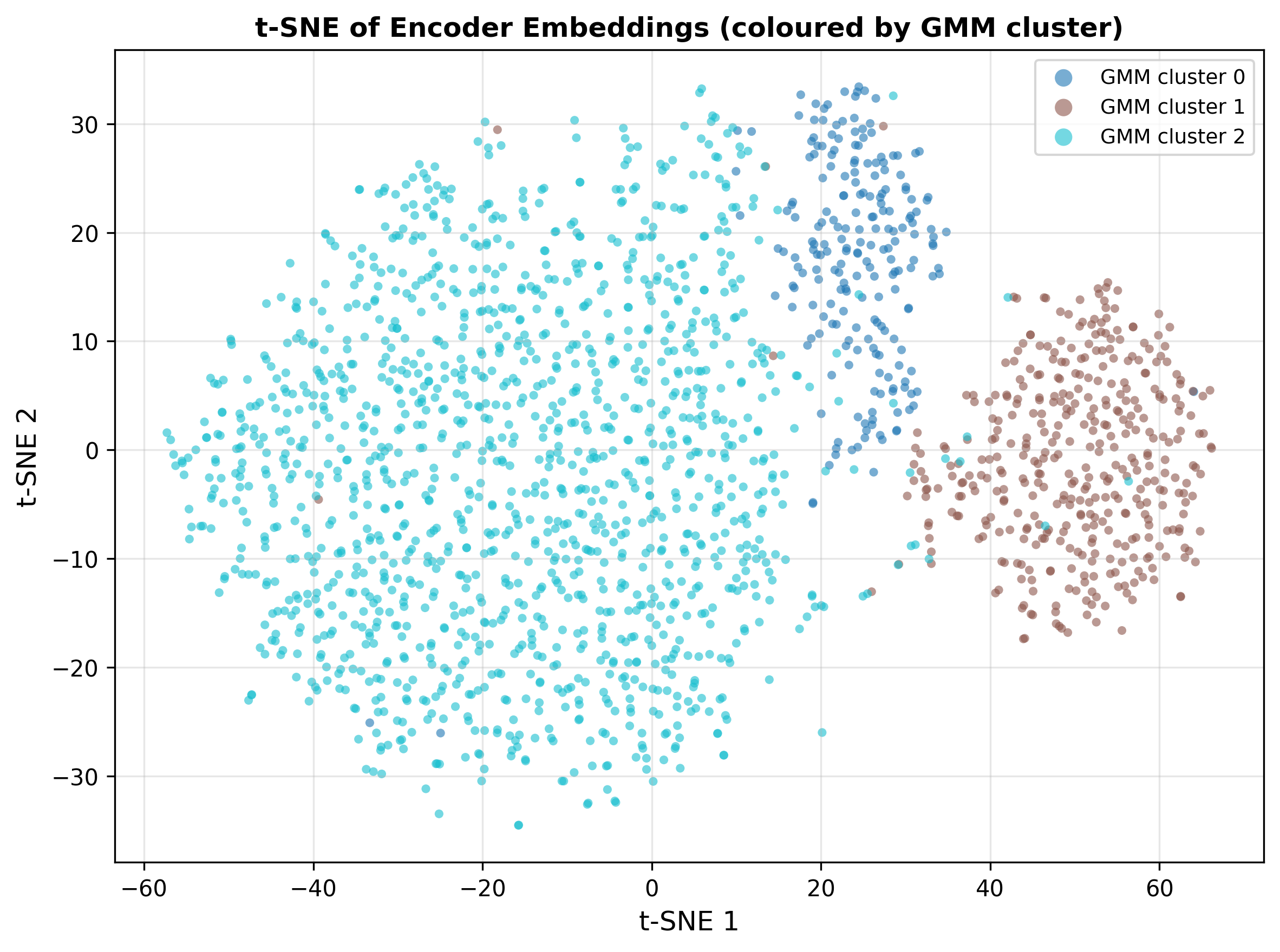} &
\includegraphics[width=0.14\textwidth]{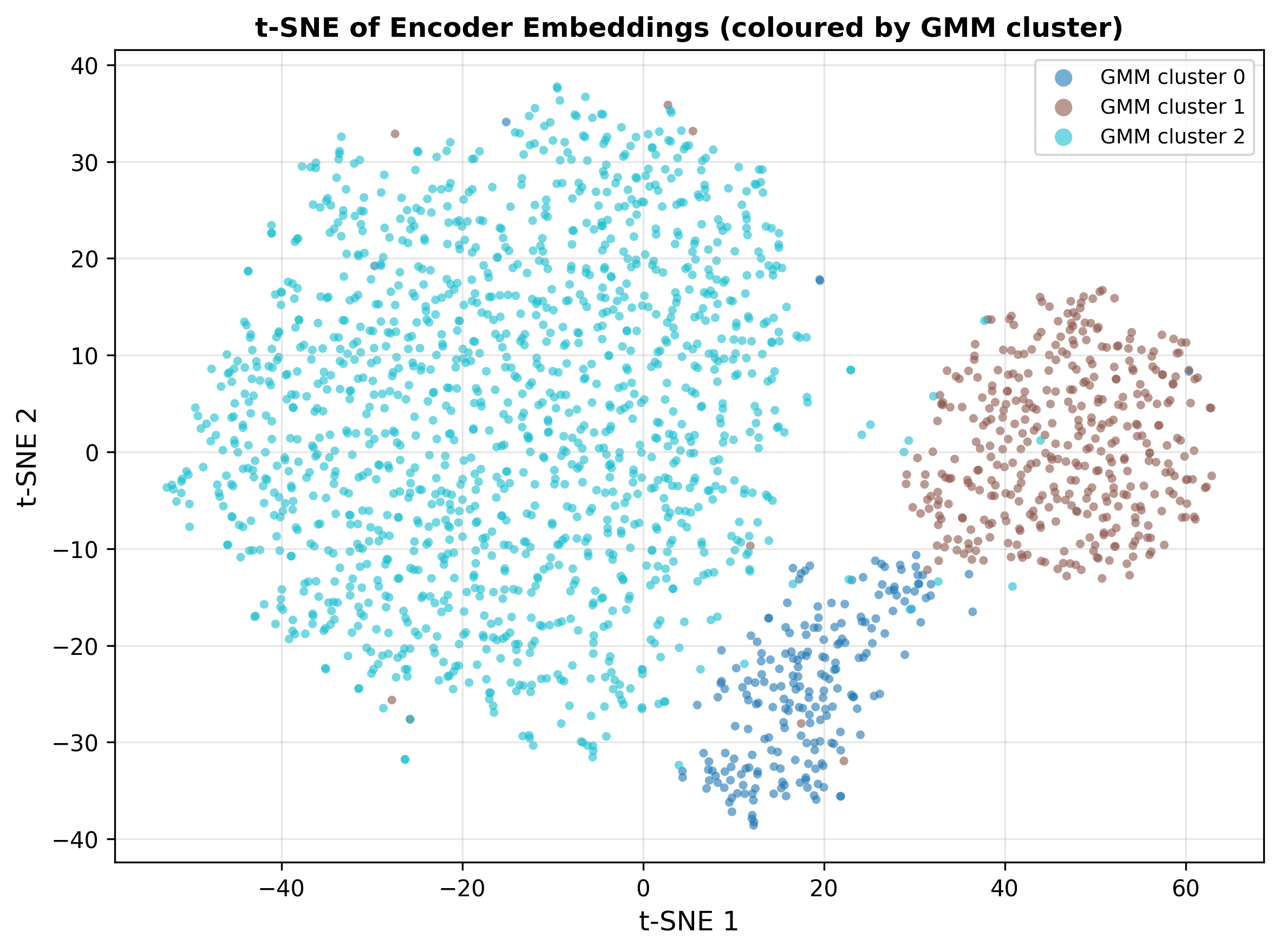} &
\includegraphics[width=0.14\textwidth]{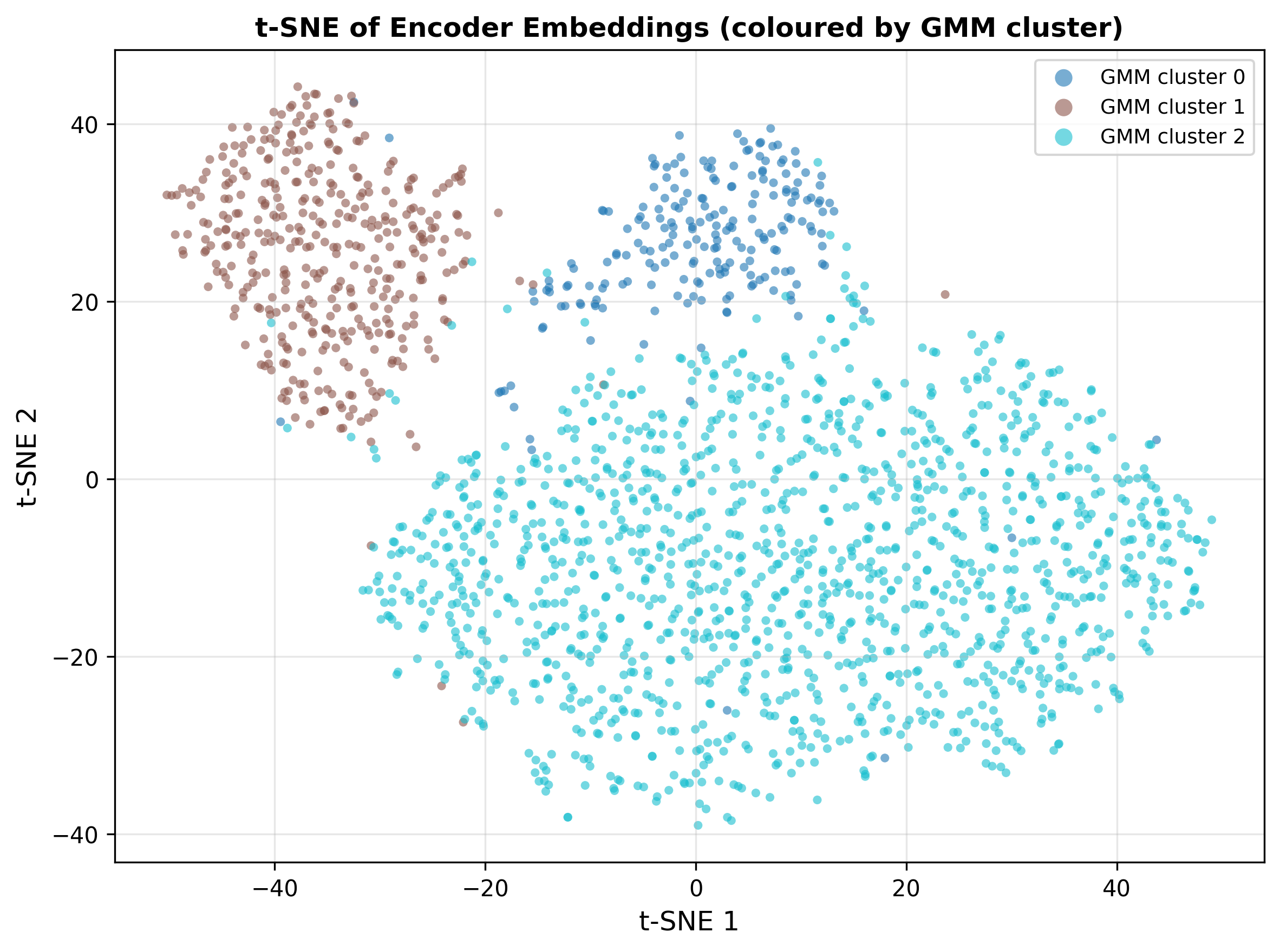} &
\includegraphics[width=0.14\textwidth]{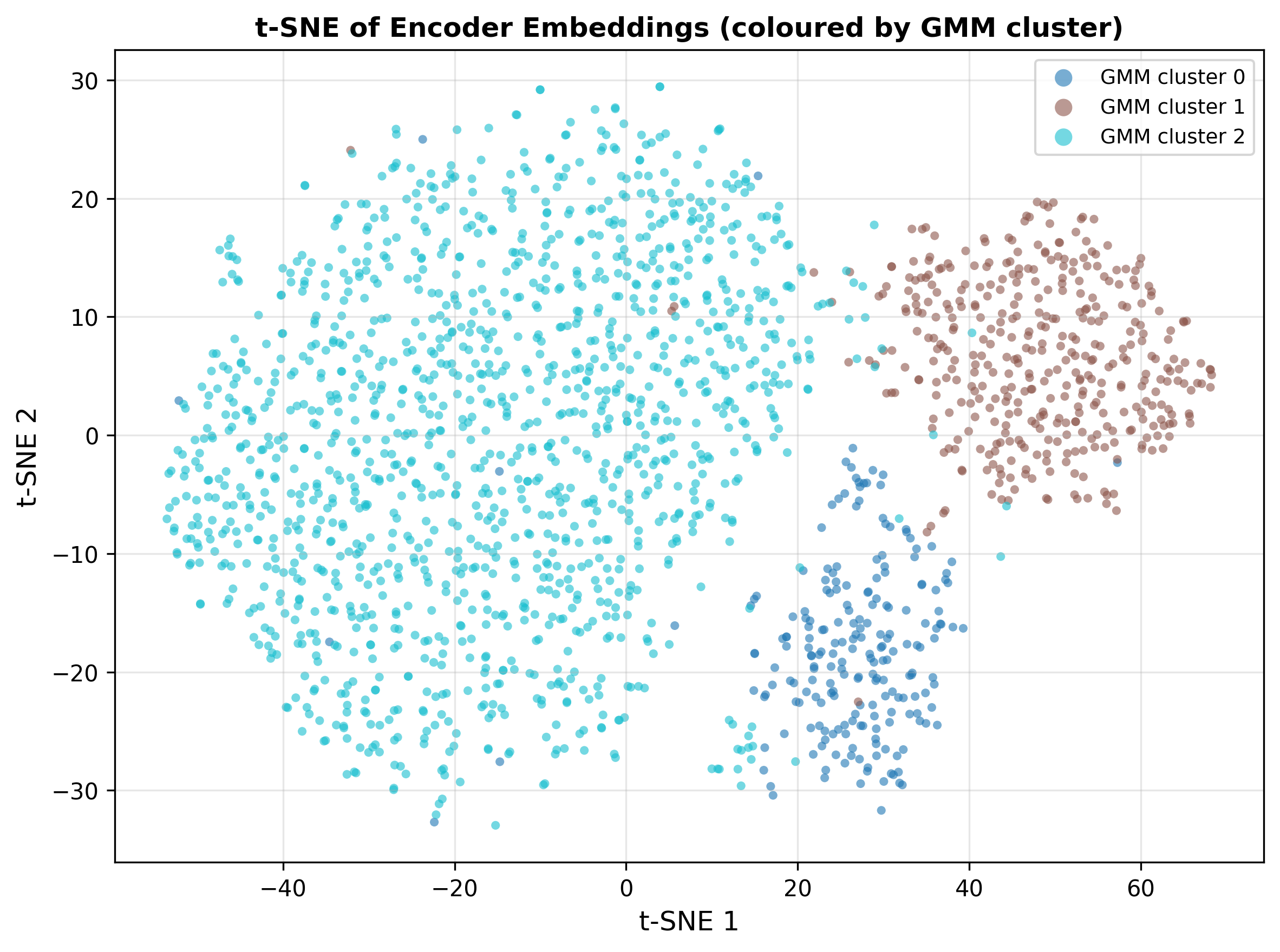}
\\
\multicolumn{6}{c}{$d=20$}
\end{tabular}

\caption{
t-SNE visualization of the learned representations for CI-Reg variants in the structured-factor simulation study with cluster separation 3.0. Rows correspond to encoder dimensions and columns correspond to different values of $\lambda_{\mathrm{CInd}}$. Points are colored according to the true latent Gaussian mixture component.
}
\label{fig:structured_factor_tsne_of_embeddings_sep3_ci_reg}
\end{figure}

\paragraph{Visualization of the learned representations.}

To further investigate the effect of the conditional independence regularizer on the learned latent representations, we visualize the embeddings using t-SNE \citep{JMLR:v9:vandermaaten08a}.
Figure \ref{fig:structured_factor_tsne_of_embeddings} shows the learned representations for all CI-Reg variants across the three encoder dimensions $d \in \{ 10, 15, 20\}$ in the structured-factor simulation study with cluster separation 5.0. Points are colored according to the true latent Gaussian mixture component.

Across all considered encoder dimensions and regularization strengths, the latent cluster structure remains clearly identifiable. This suggests that the conditional independence regularizer does not destroy meaningful latent structure, even for relatively large values of $\lambda_{\mathrm{CInd}}$. Instead, the learned representations continue to preserve the dominant factors governing the data-generating process, indicating that the proposed regularization could be incorporated without sacrificing the ability to recover the underlying latent clusters.

Figure \ref{fig:structured_factor_tsne_of_embeddings_sep3_ci_reg} presents the corresponding t-SNE visualizations for a more challenging setting with cluster separation 3.0. As expected, the learned latent clusters are less separated than in Figure \ref{fig:structured_factor_tsne_of_embeddings} due to the increased overlap between true mixture components. Importantly, the inclusion of the conditional independence regularizer does not appear to harm recovery of the latent structure. In several settings, moderate values of $\lambda_{\mathrm{CInd}}$ produce visibly clearer cluster separation than the unregularized model CI-Reg-0. Examples include $\lambda_{\mathrm{CInd}}=0.01$ for $d=10$, $\lambda_{\mathrm{CInd}} \in \{ 0.03, 0.30 \}$ for $d=15$, and $\lambda_{\mathrm{CInd}} \in \{ 0.01, 0.30, 0.90 \}$ for $d=20$. These visualizations suggest that the proposed regularizer can, in certain cases, help the learned representation better capture the underlying latent cluster structure while continuing to preserve the dominant factors governing the data-generating process.

\subsection{Real data experiments} \label{appendix:Additional results-Real data experiments}


Figures \ref{fig:real_ks_statistic}--\ref{fig:real_energy_distance_cat} provide additional evaluations of reconstruction performance on the real-world datasets. Figure \ref{fig:real_ks_statistic} reports the column-wise Kolmogorov--Smirnov (KS) statistic for numerical variables, Figure \ref{fig:real_combined_assoc_relative_frobenius} reports the combined association relative Frobenius error, and Figures \ref{fig:real_energy_distance_cont} and \ref{fig:real_energy_distance_cat} separately report the energy distances for the numerical and categorical components. For the real-world datasets considered in this paper, numerical features often have substantially different scales and variances. Therefore, all energy distances are computed after standardizing the numerical variables, ensuring that each numerical feature contributes equally to the metric and preventing variables with larger scales from dominating the evaluation.

For the KS statistic, CI-Reg and CI-Reg-0 achieve competitive performance across datasets and remain close to the DPA baseline. In most settings, all three distributional approaches outperform VAE and are generally comparable to or better than AE, indicating the benefit of explicitly modeling conditional distributions rather than focusing primarily on conditional means.

For the combined association relative Frobenius error, CI-Reg consistently achieves strong performance across all datasets. In contrast, AE, DPA, and especially VAE often exhibit substantially larger errors, reflecting limitations in recovering the full mixed-type conditional distribution.

\begin{figure}[htbp]
\centering

\begin{subfigure}[b]{0.48\textwidth}
    \centering
    \includegraphics[width=\textwidth]{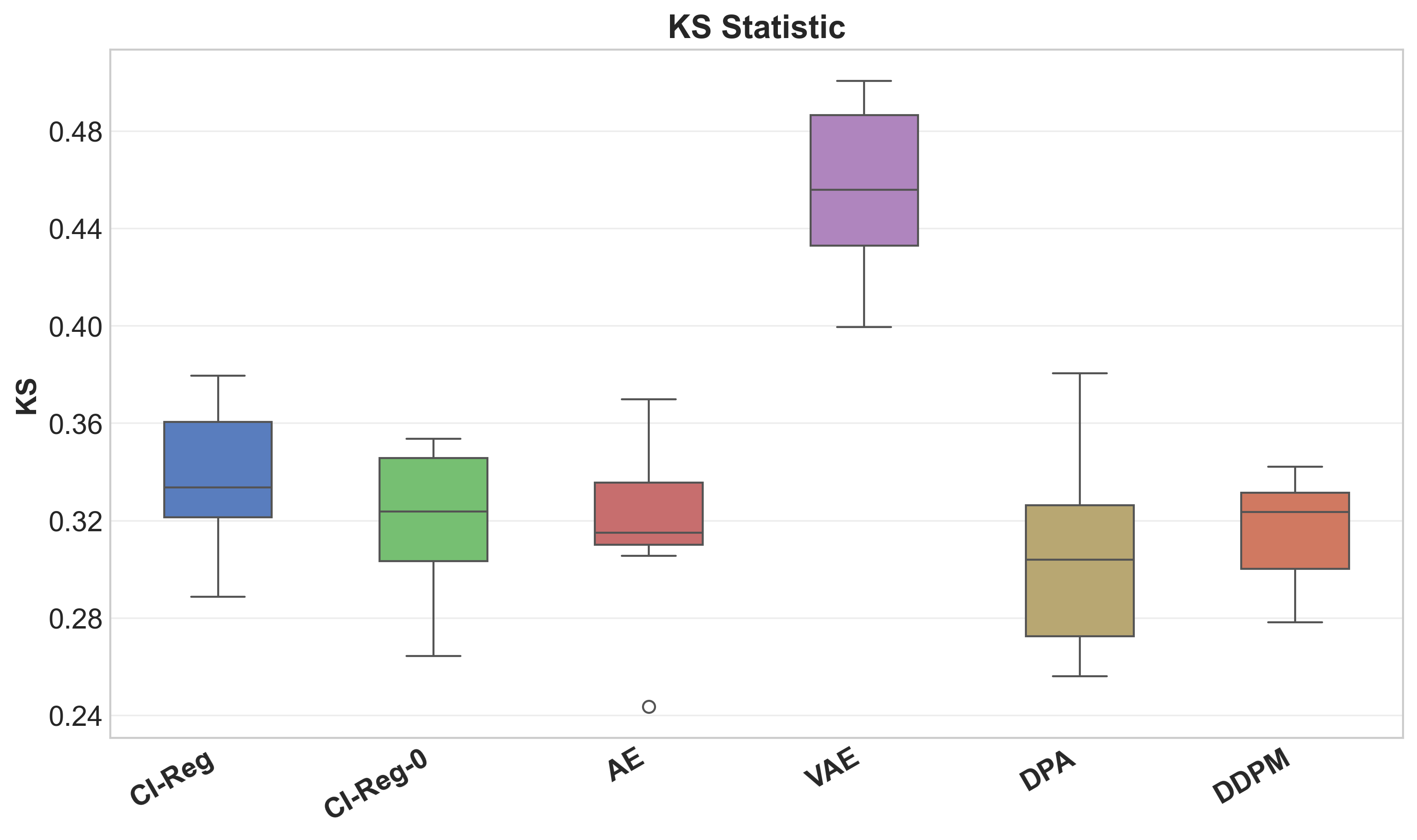}
    \caption{Adult}
\end{subfigure}
\hfill
\begin{subfigure}[b]{0.48\textwidth}
    \centering
    \includegraphics[width=\textwidth]{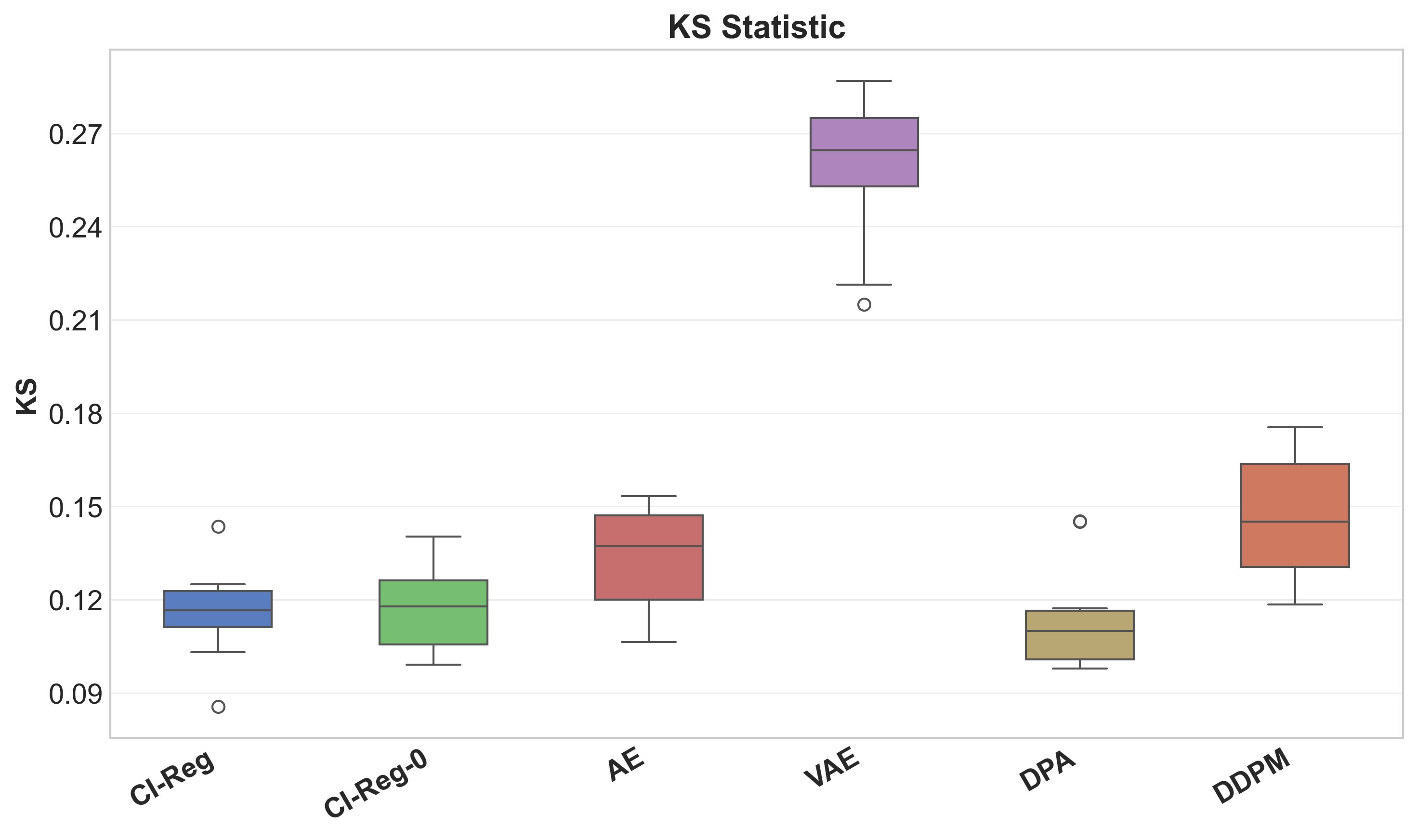}
    \caption{Insurance}
\end{subfigure}

\vspace{0.5em}

\begin{subfigure}[b]{0.48\textwidth}
    \centering
    \includegraphics[width=\textwidth]{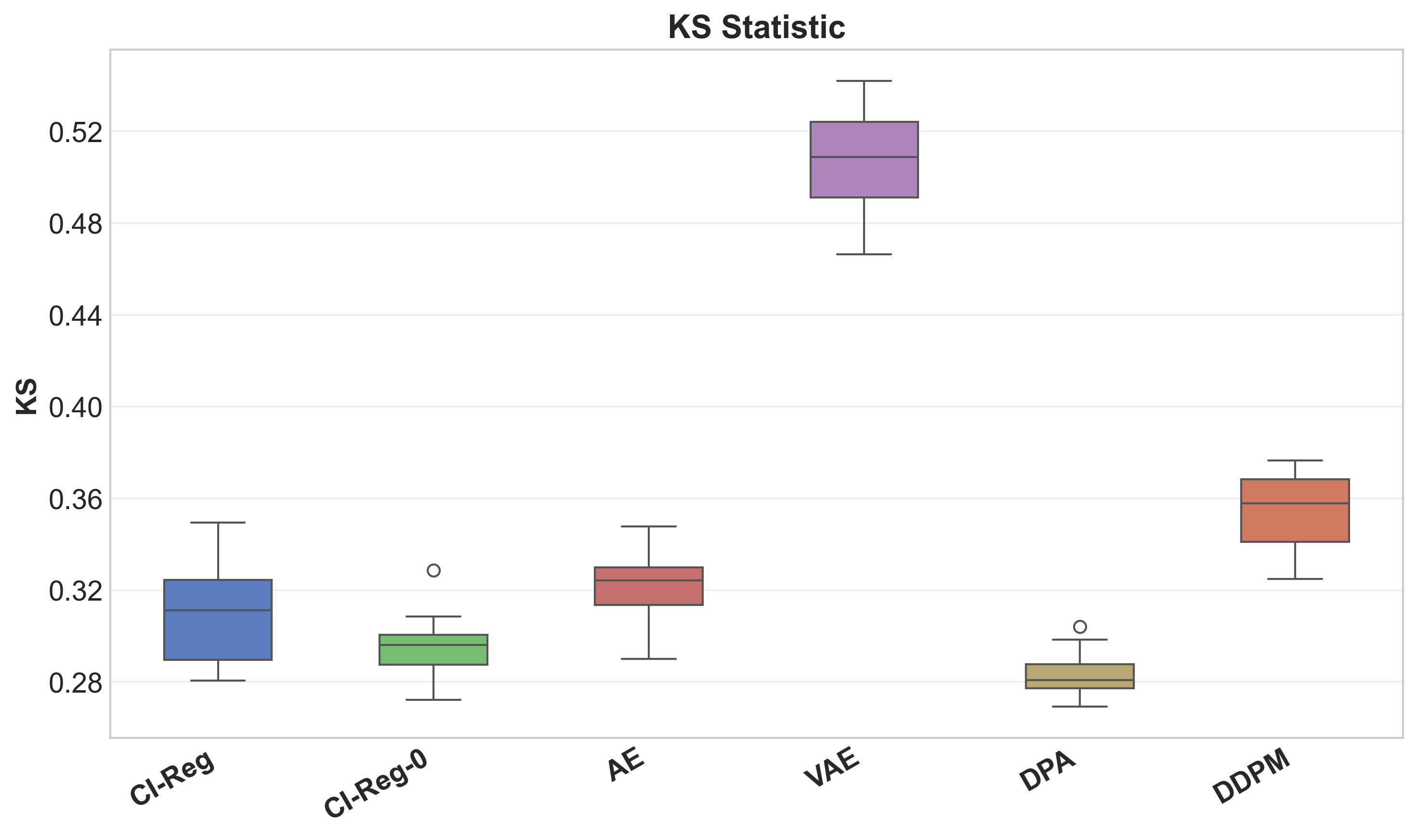}
    \caption{Shoppers}
\end{subfigure}
\hfill
\begin{subfigure}[b]{0.48\textwidth}
    \centering
    \includegraphics[width=\textwidth]{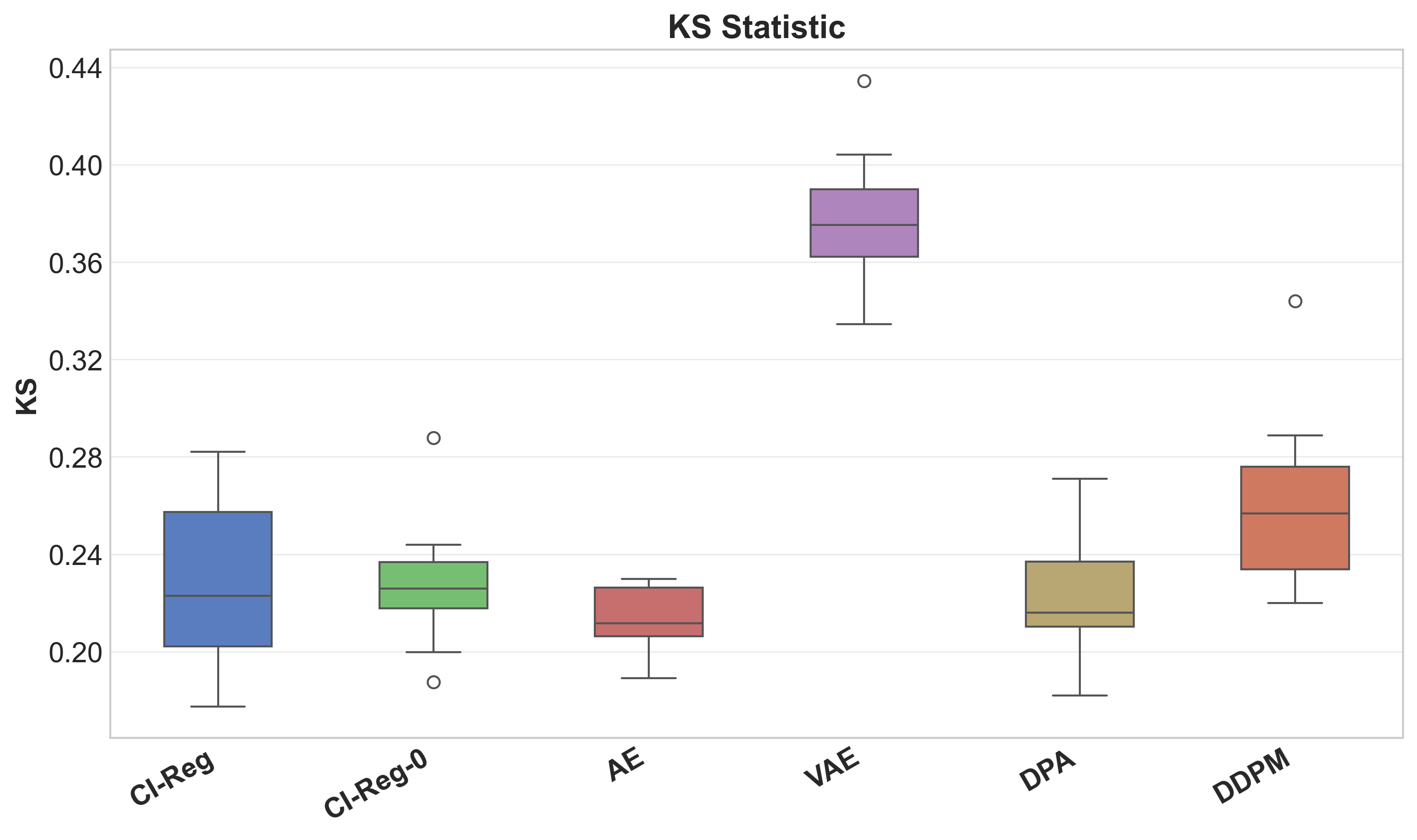}
    \caption{PM2.5}
\end{subfigure}

\caption{
 KS statistic for numerical components on the real-world datasets comparing CI-Reg, CI-Reg-0, AE, VAE, DPA, and DDPM. Lower values indicate better reconstruction performance.
}
\label{fig:real_ks_statistic}
\end{figure}

\begin{figure}[htbp]
\centering

\begin{subfigure}[b]{0.48\textwidth}
    \centering
    \includegraphics[width=\textwidth]{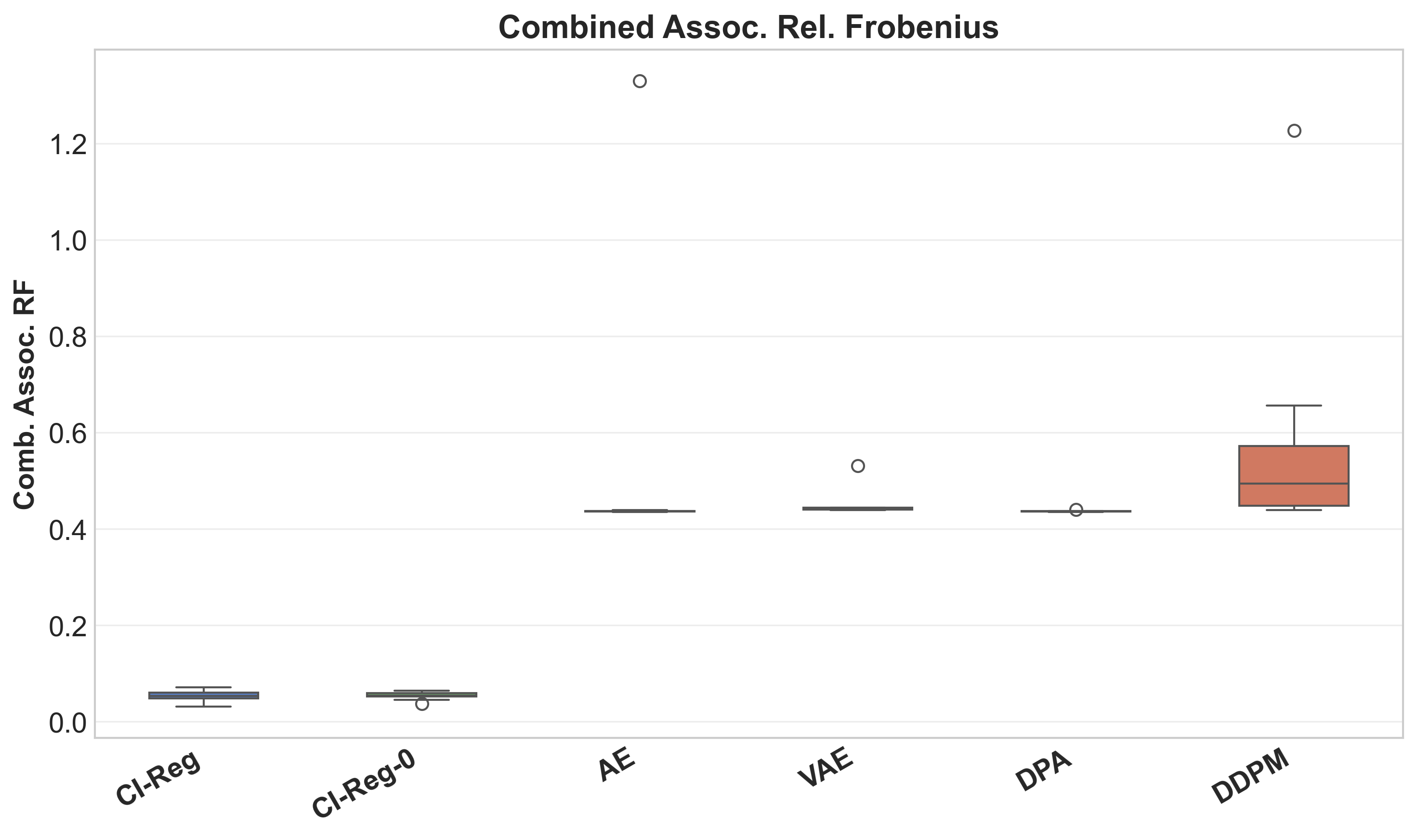}
    \caption{Adult}
\end{subfigure}
\hfill
\begin{subfigure}[b]{0.48\textwidth}
    \centering
    \includegraphics[width=\textwidth]{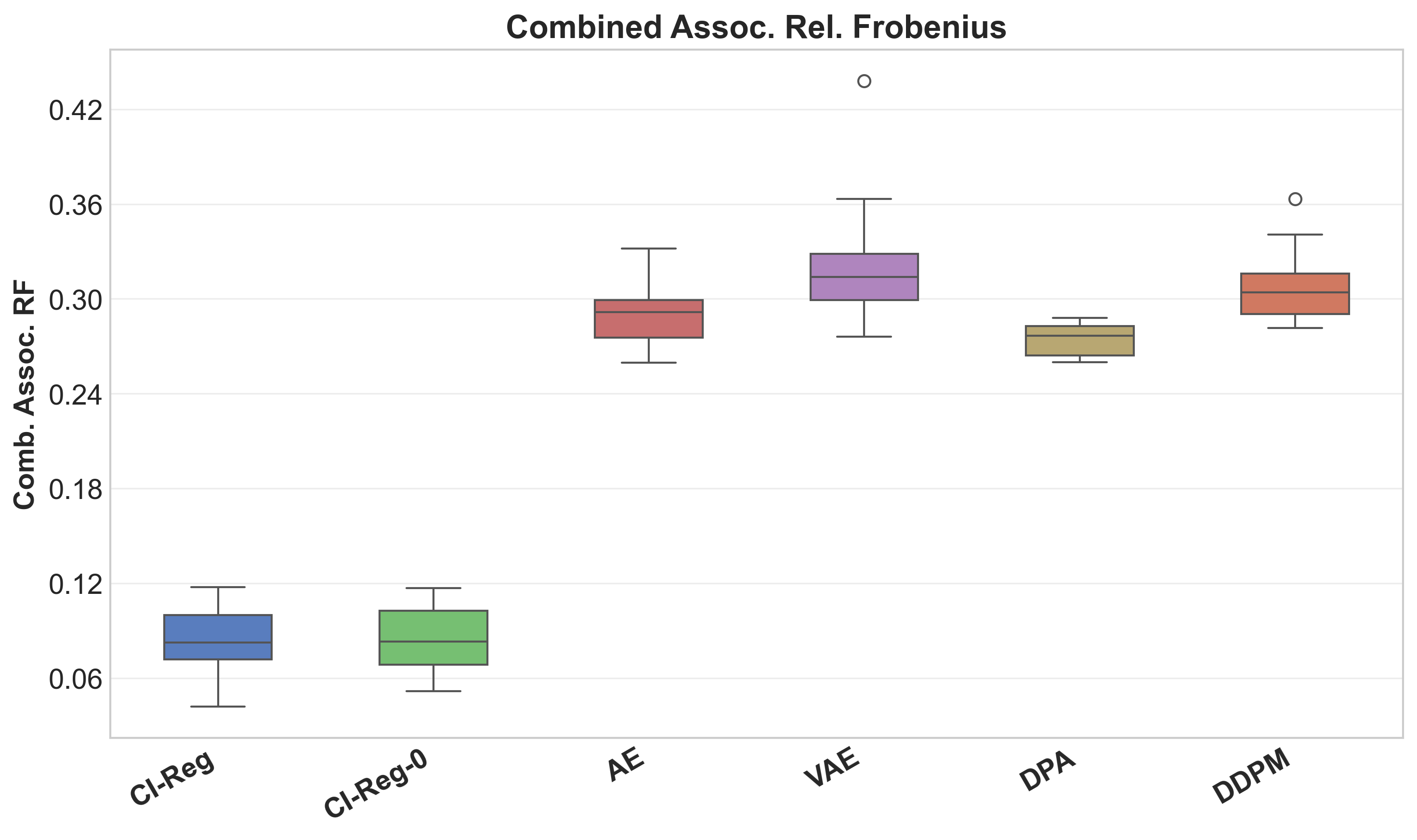}
    \caption{Insurance}
\end{subfigure}

\vspace{0.5em}

\begin{subfigure}[b]{0.48\textwidth}
    \centering
    \includegraphics[width=\textwidth]{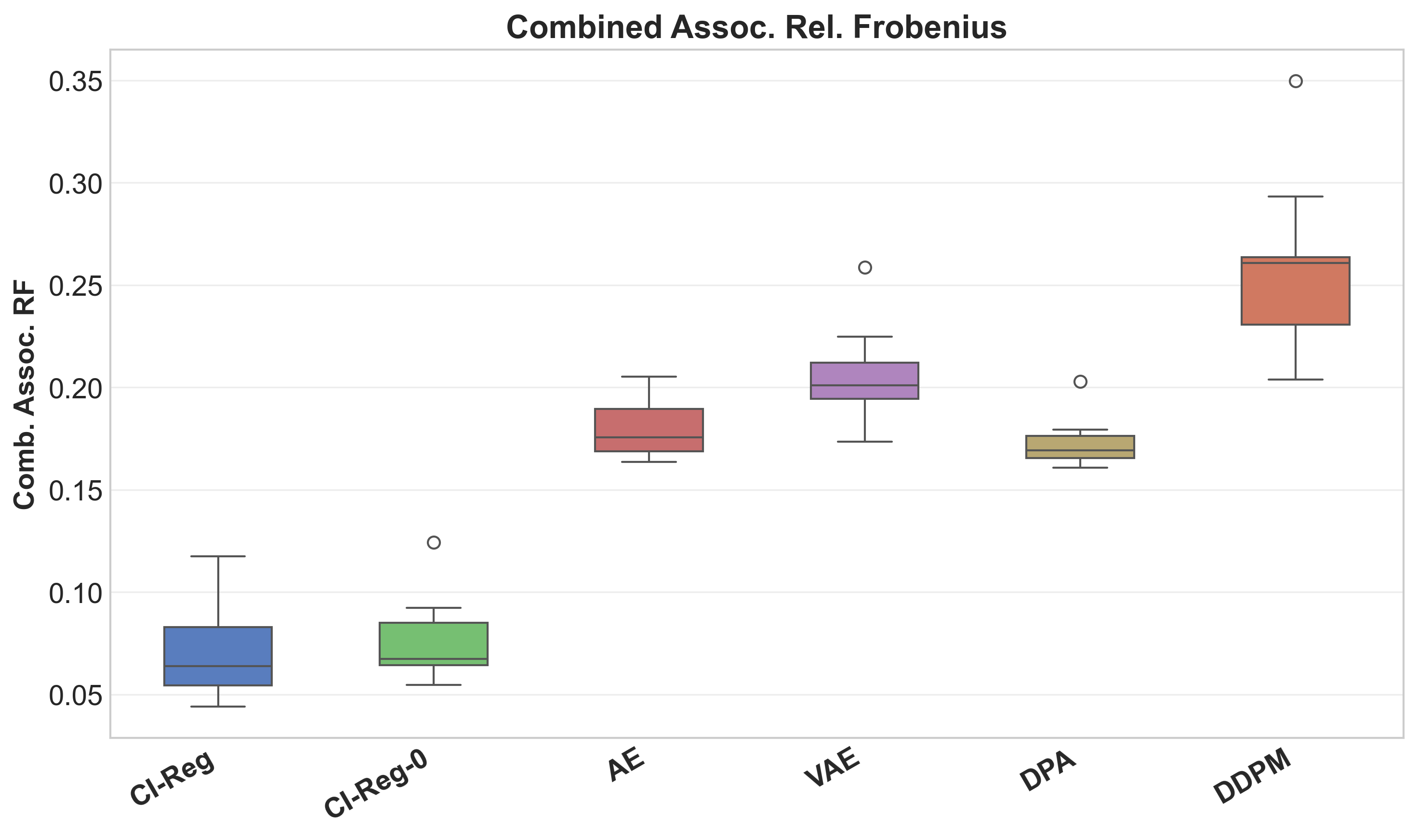}
    \caption{Shoppers}
\end{subfigure}
\hfill
\begin{subfigure}[b]{0.48\textwidth}
    \centering
    \includegraphics[width=\textwidth]{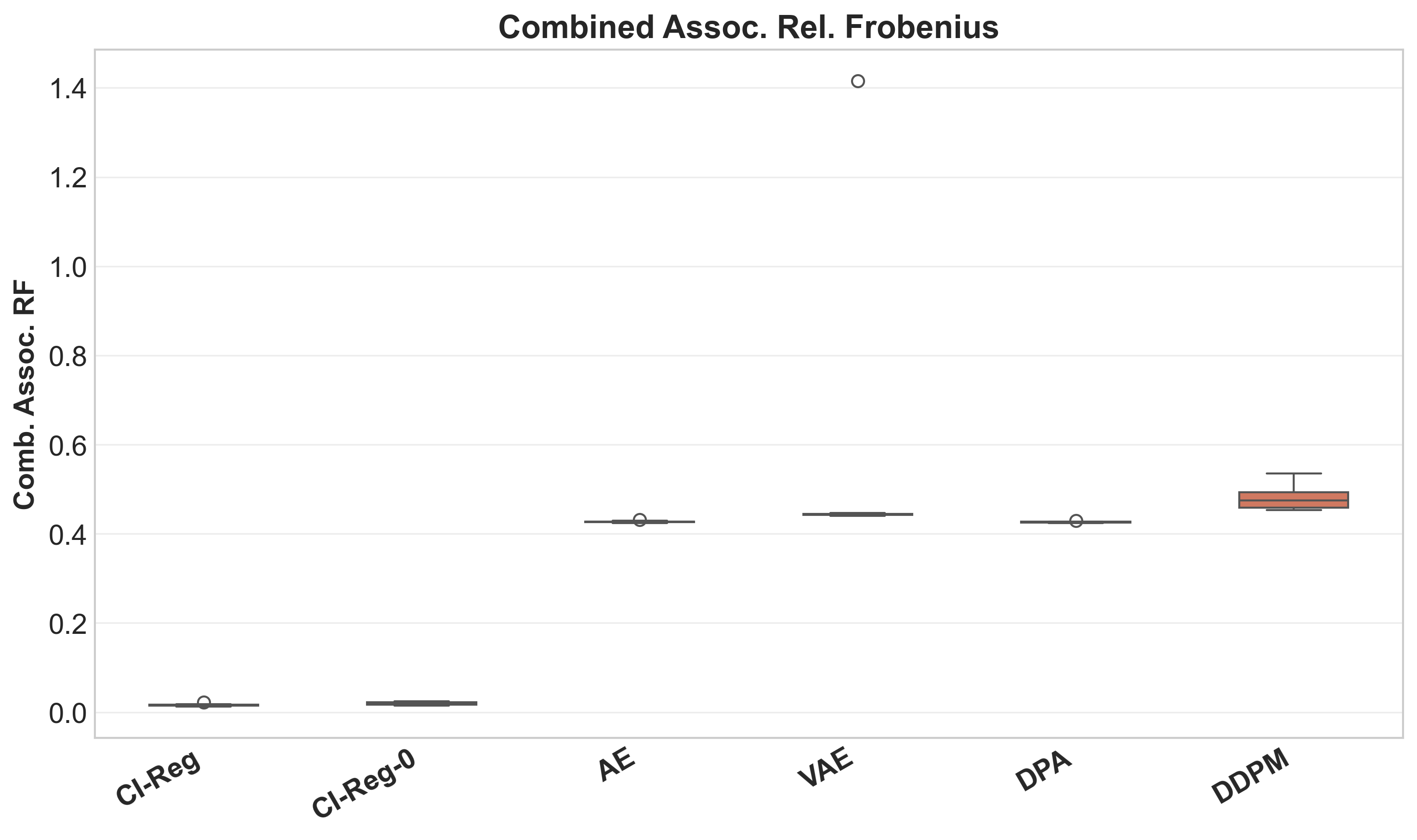}
    \caption{PM2.5}
\end{subfigure}

\caption{
Combined association relative Frobenius error on the real-world datasets comparing CI-Reg, CI-Reg-0, AE, VAE, DPA, and DDPM. Lower values indicate better reconstruction performance.
}
\label{fig:real_combined_assoc_relative_frobenius}
\end{figure}

\begin{figure}[htbp]
\centering

\begin{subfigure}[b]{0.48\textwidth}
    \centering
    \includegraphics[width=\textwidth]{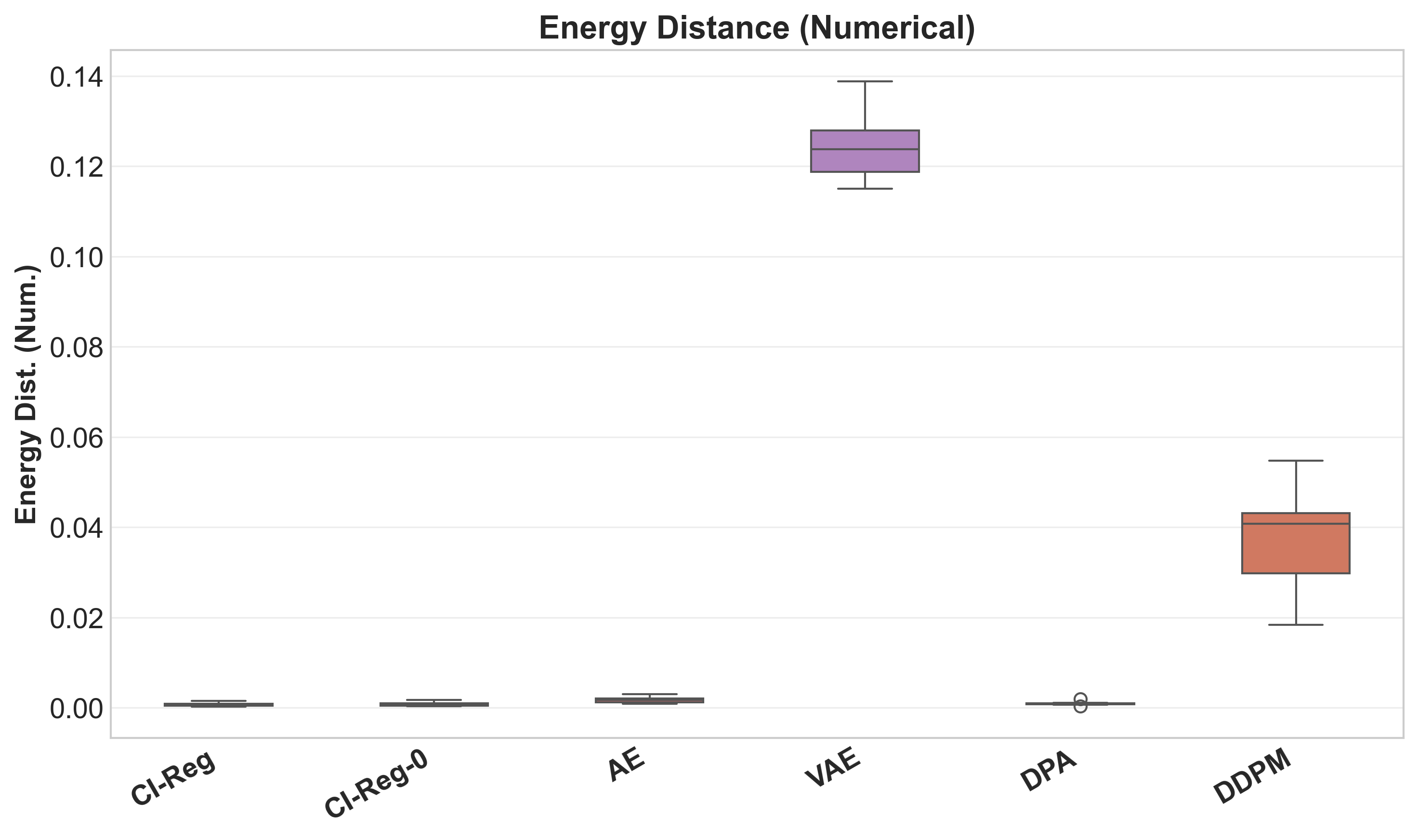}
    \caption{Adult}
\end{subfigure}
\hfill
\begin{subfigure}[b]{0.48\textwidth}
    \centering
    \includegraphics[width=\textwidth]{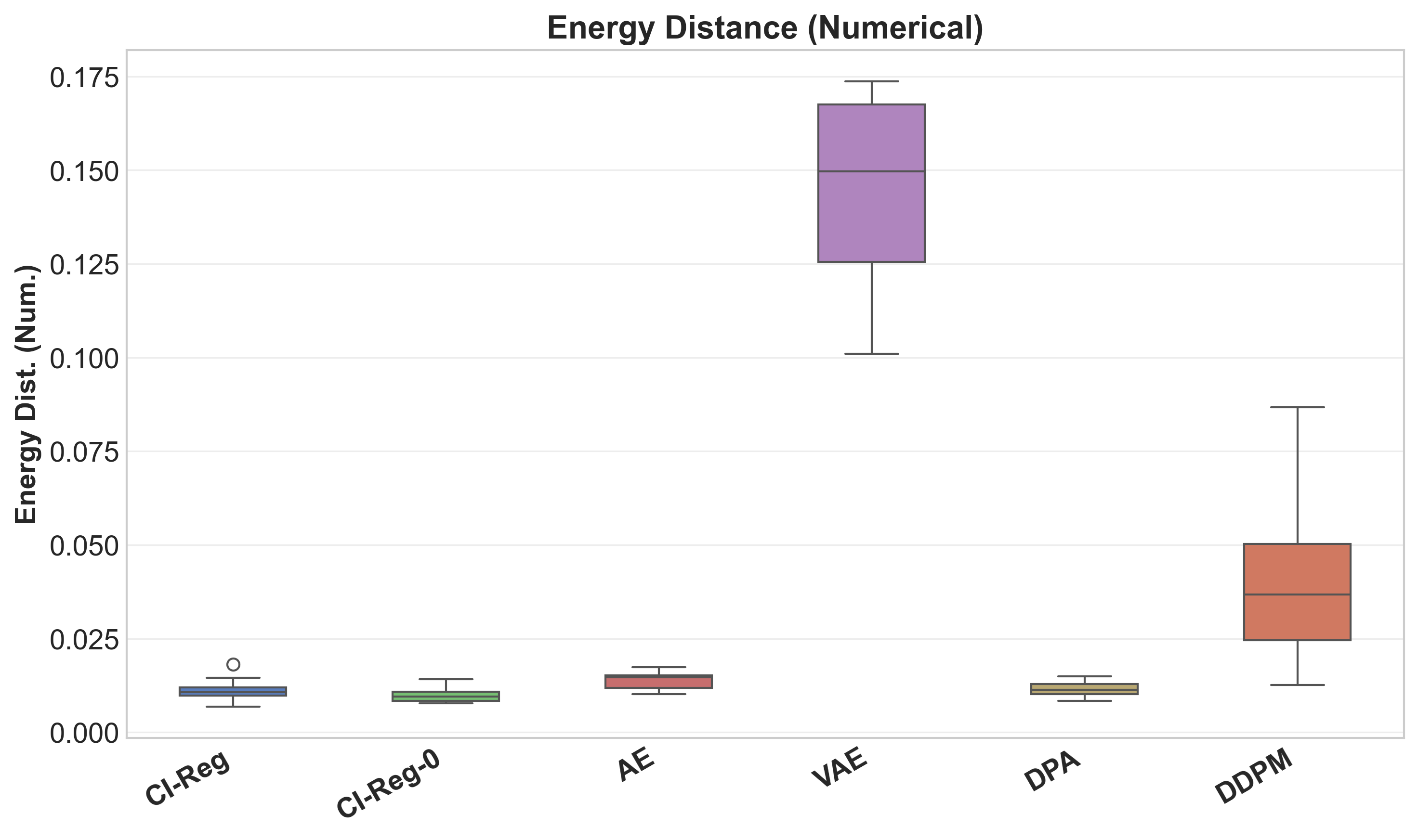}
    \caption{Insurance}
\end{subfigure}

\vspace{0.5em}

\begin{subfigure}[b]{0.48\textwidth}
    \centering
    \includegraphics[width=\textwidth]{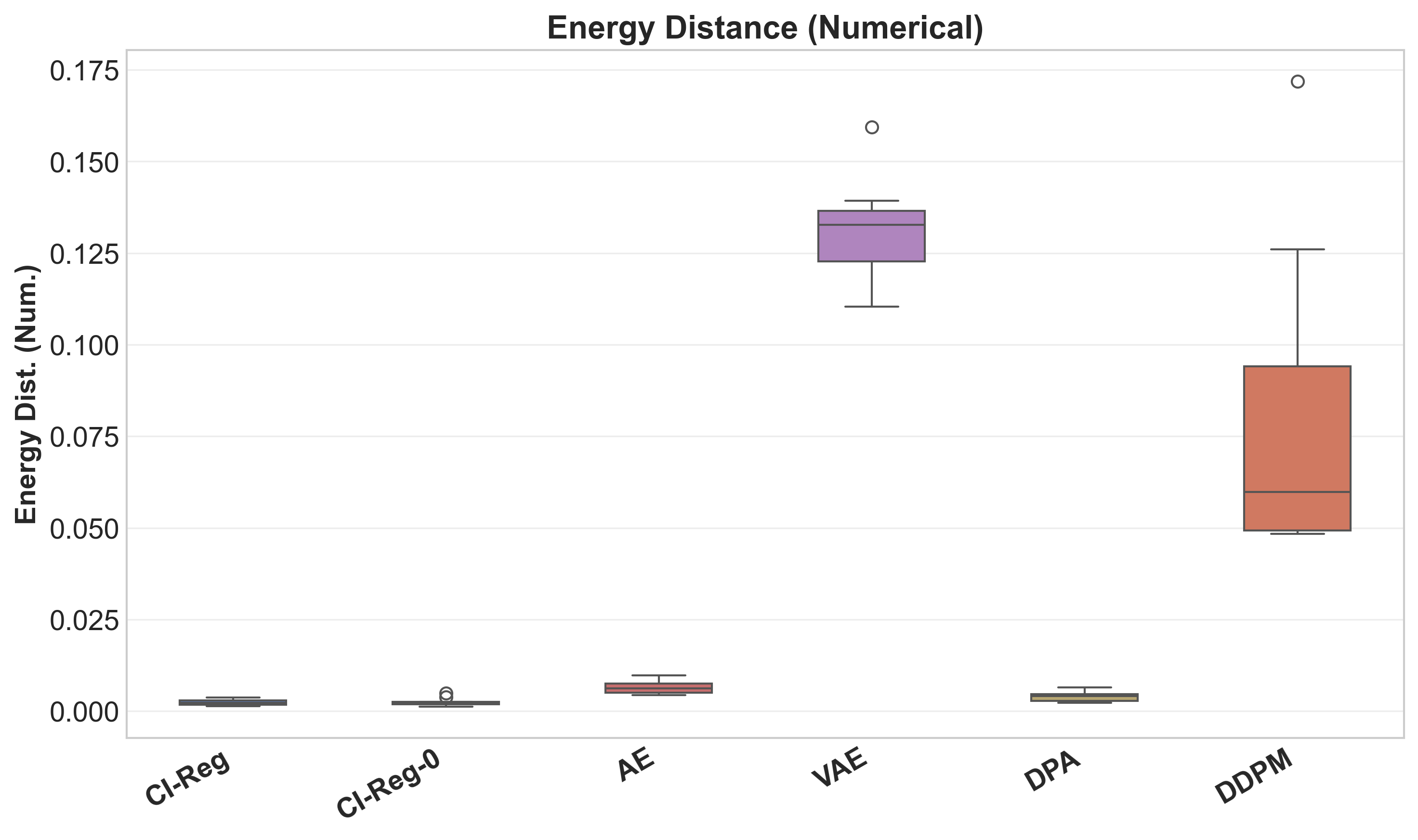}
    \caption{Shoppers}
\end{subfigure}
\hfill
\begin{subfigure}[b]{0.48\textwidth}
    \centering
    \includegraphics[width=\textwidth]{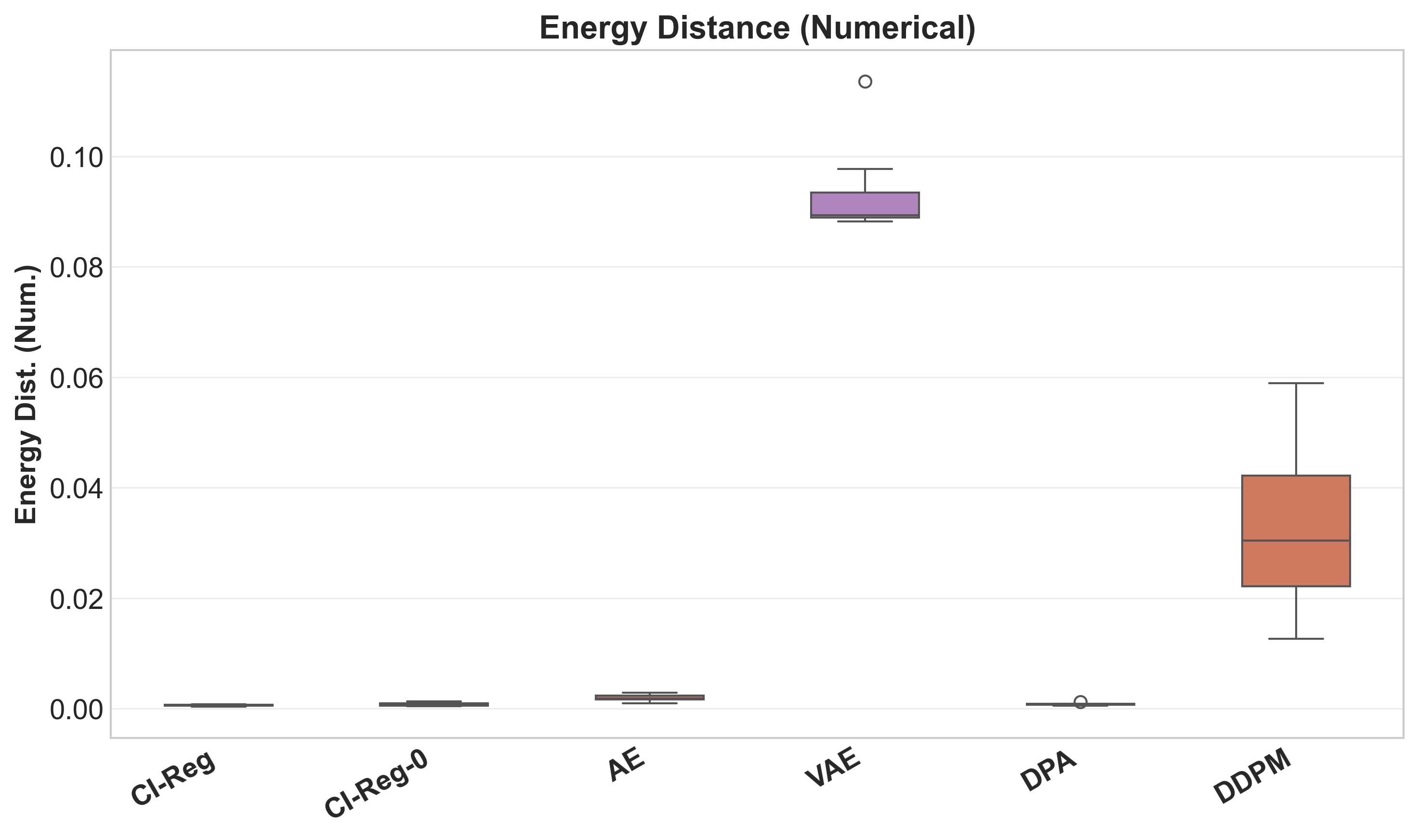}
    \caption{PM2.5}
\end{subfigure}

\caption{
Energy distance for the numerical part on the real-world datasets comparing CI-Reg, CI-Reg-0, AE, VAE, DPA, and DDPM. Lower values indicate better reconstruction performance.
}
\label{fig:real_energy_distance_cont}
\end{figure}

\begin{figure}[htbp]
\centering

\begin{subfigure}[b]{0.48\textwidth}
    \centering
    \includegraphics[width=\textwidth]{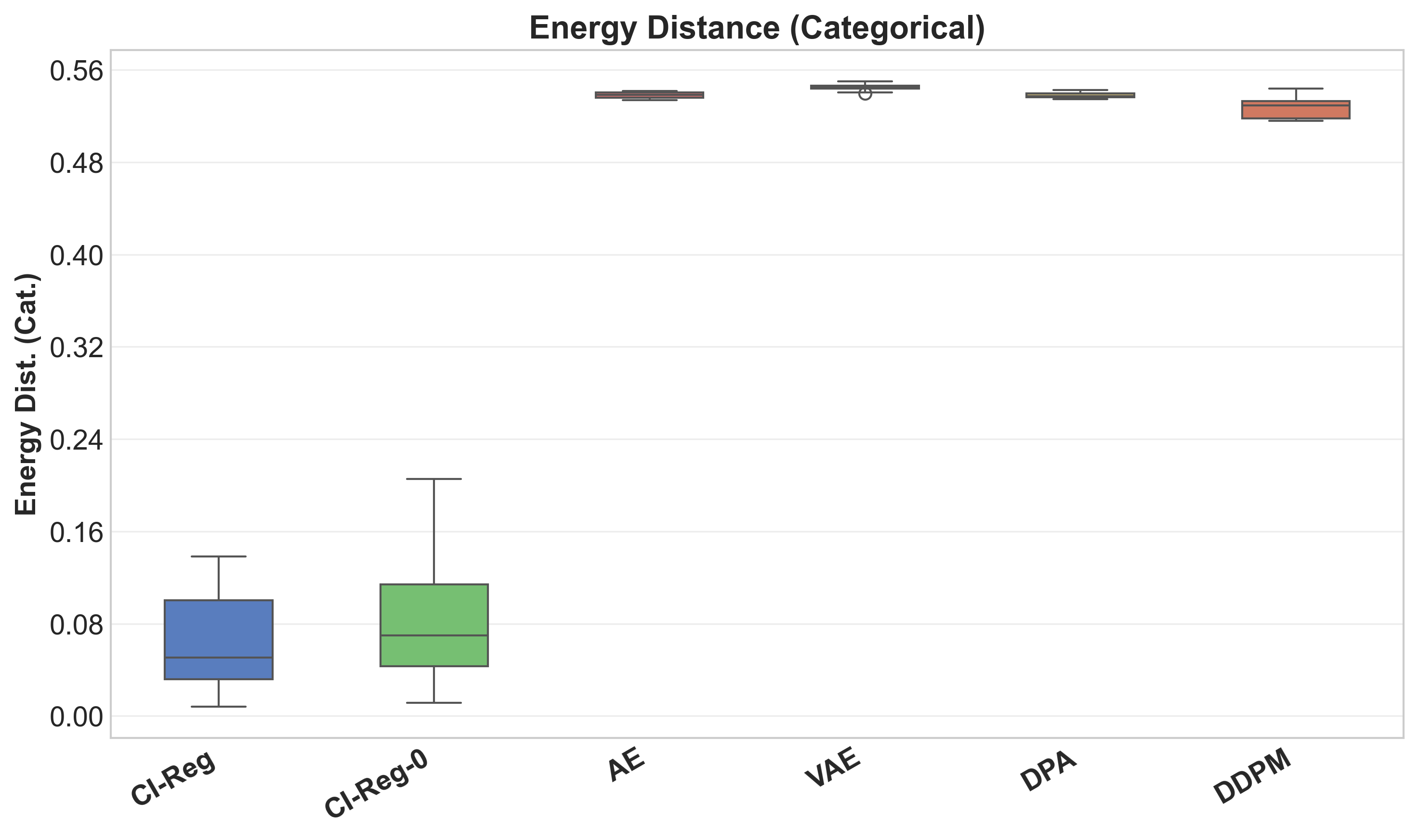}
    \caption{Adult}
\end{subfigure}
\hfill
\begin{subfigure}[b]{0.48\textwidth}
    \centering
    \includegraphics[width=\textwidth]{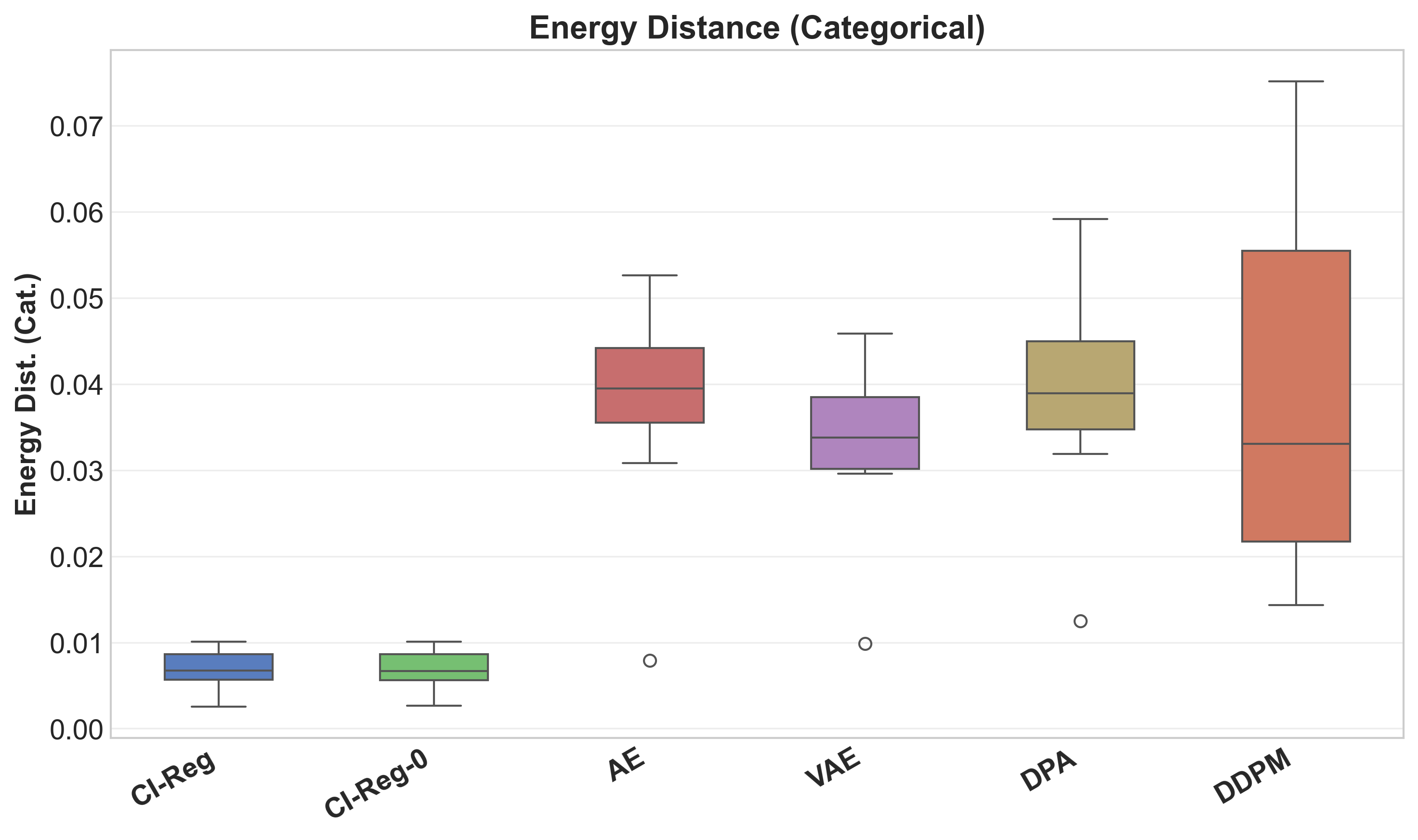}
    \caption{Insurance}
\end{subfigure}

\vspace{0.5em}

\begin{subfigure}[b]{0.48\textwidth}
    \centering
    \includegraphics[width=\textwidth]{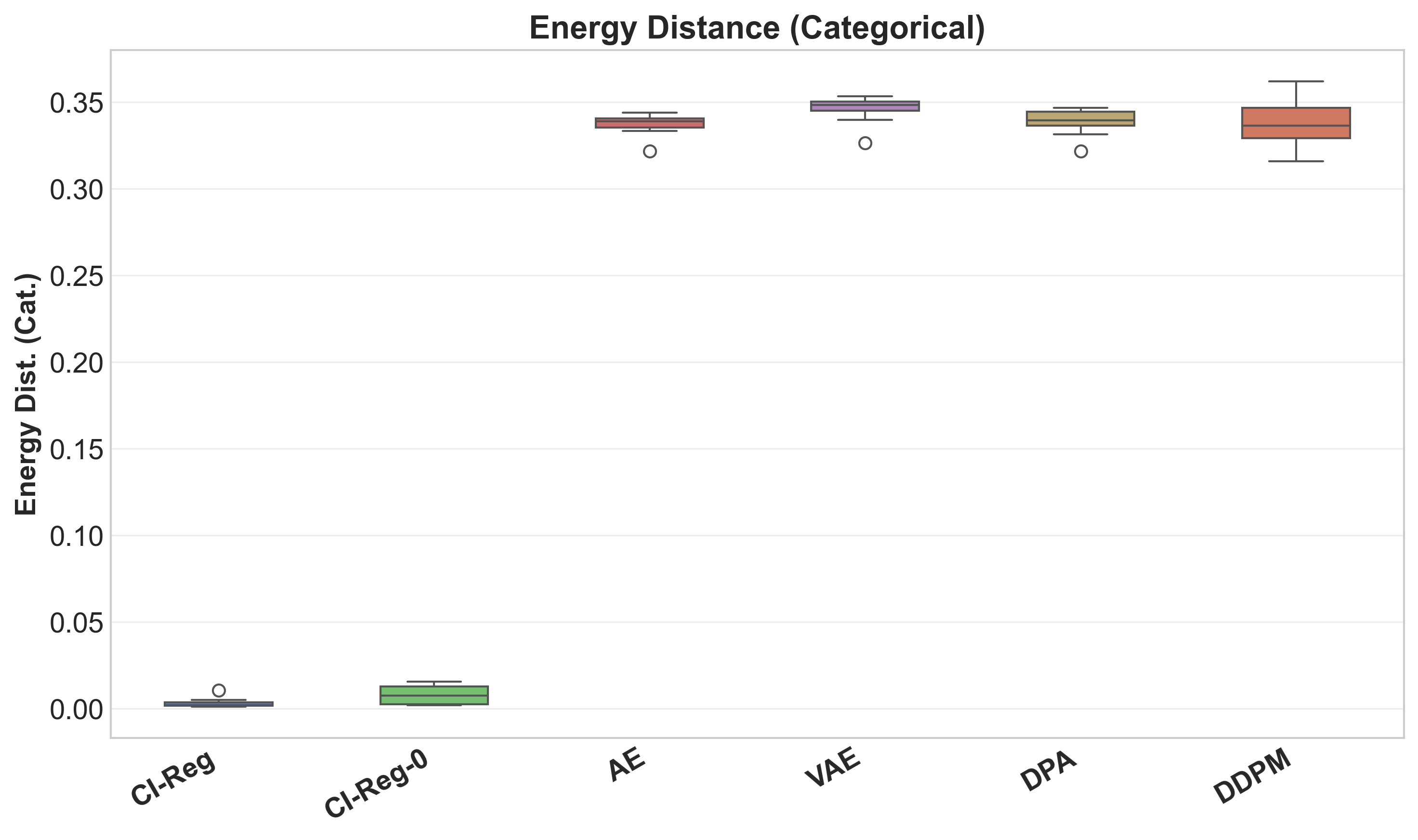}
    \caption{Shoppers}
\end{subfigure}
\hfill
\begin{subfigure}[b]{0.48\textwidth}
    \centering
    \includegraphics[width=\textwidth]{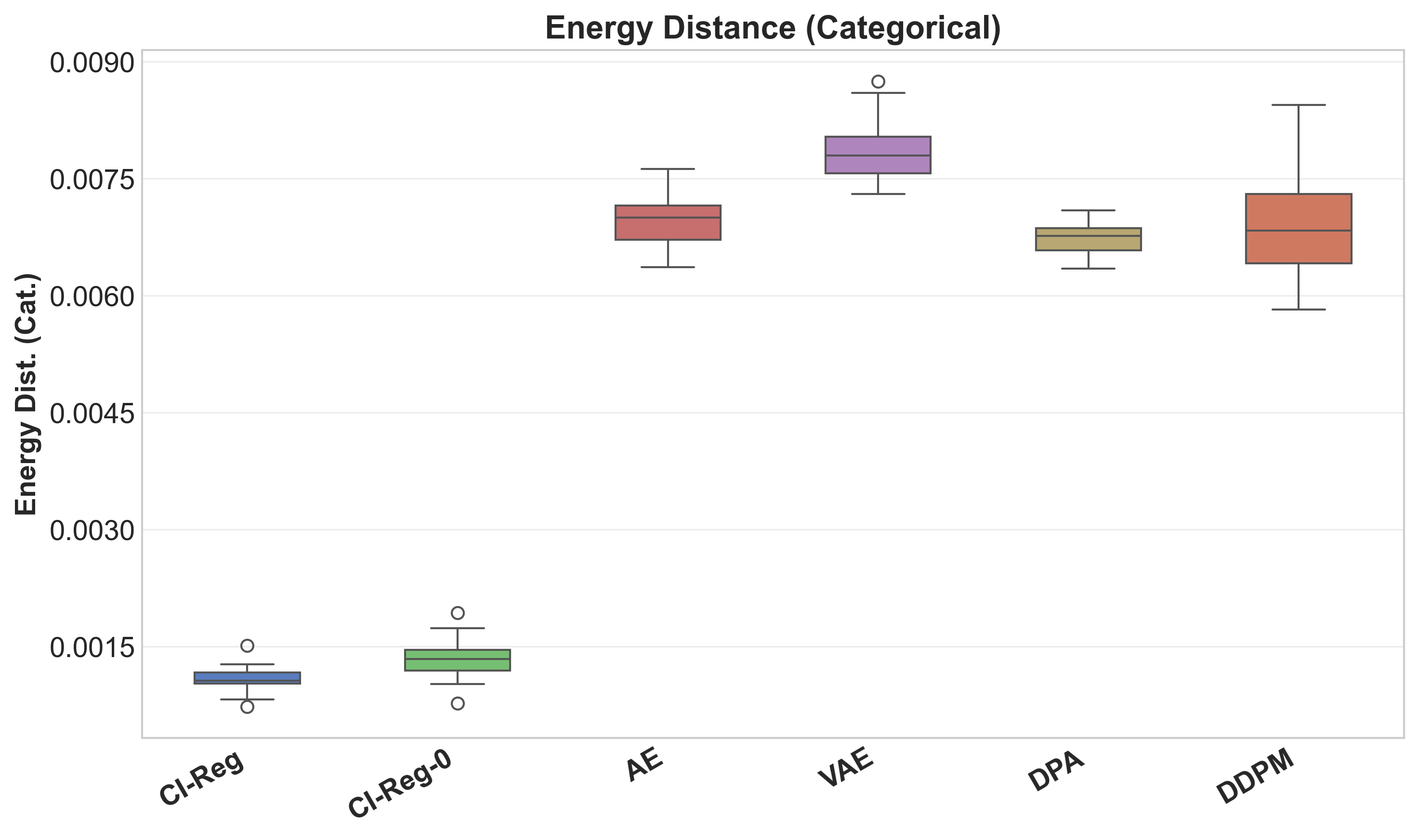}
    \caption{PM2.5}
\end{subfigure}

\caption{
Energy distance for the categorical part on the real-world datasets comparing CI-Reg, CI-Reg-0, AE, VAE, DPA, and DDPM. Lower values indicate better reconstruction performance.
}
\label{fig:real_energy_distance_cat}
\end{figure}

Figures \ref{fig:real_energy_distance_cont} and \ref{fig:real_energy_distance_cat} further illustrate the behavior of the proposed method on different data components. For the numerical component, CI-Reg and CI-Reg-0 remain competitive with DPA and generally outperform AE and VAE across datasets, indicating that the conditional independence regularization does not substantially compromise numerical distribution recovery. For the categorical component, however, CI-Reg and CI-Reg-0 consistently achieve substantially smaller energy distances than the competing baselines, often by a large margin. These results further support the main conclusion that explicitly separating numerical and categorical reconstruction objectives is particularly beneficial for recovering categorical conditional distributions in mixed-type data.

\end{document}